\documentclass[journal,12pt,onecolumn]{IEEEtran}
\usepackage{amsmath,amsfonts,amssymb,bm}
\usepackage{algorithm,algorithmic}
\usepackage{array}
\usepackage{textcomp}
\usepackage{stfloats}
\usepackage{verbatim}
\usepackage{graphicx}
\graphicspath{{figs/}}
\usepackage{subfigure}
\usepackage[T1]{fontenc}
\usepackage[utf8]{inputenc}
\usepackage{color}
\usepackage{colortbl}
\usepackage{makecell}
\usepackage[normalem]{ulem}
\usepackage{nomencl}
\usepackage{multirow}
\usepackage{booktabs,tabularx}
\usepackage{chngcntr}
\usepackage[hyphens]{url}  
\usepackage[hidelinks]{hyperref}
\usepackage{balance}

\counterwithin{equation}{section}

\newcommand{\myproj}{{SpecHDC}}
\newtheorem{definition}{Definition}

\newtheorem{theorem}{Theorem}[section]
\newtheorem{lemma}{Lemma}[section]

\newtheorem{rem}{Remark}[section]
\newtheorem{prop}{Proposition}[section]

\newcommand{\qed}{$\blacksquare$}
\def\hindu{\arabic}

\renewcommand{\theequation}{\hindu{section}.\hindu{equation}}

\newcommand\gattha[2]{\genfrac{}{}{0pt}{}{#1}{#2}}

\def\RR{{\mathbb R}}

\def\ZZ{{\mathbb Z}}

\def\bs#1{{\boldsymbol{#1}}}

\def\esssup{\mathop{\hbox{\textrm{ess sup}}}}

\def\be{\begin{equation}}
\def\ee{\end{equation}}
\def\bea{\begin{eqnarray}}
\def\eea{\end{eqnarray}}

\def\disp{\displaystyle}

\def\binom#1#2{\small{\left(\!\!\begin{array}{c}{#1}\\{#2}
\end{array}\!\!\right)}}
\def\donchitre#1#2{\vskip 6.5cm\noindent
\parbox[t]{1in}{\special{eps:#1.eps x=6.5cm y=5.5cm}}
\hbox to 7cm{}\parbox[t]{0.0cm}{\special{eps:#2.eps x=6.5cm y=5.5cm}}}

\def\span{\mbox{\mathsf{span }}}
\def\bs#1{{\boldsymbol{#1}}}

\def\gs{\gtrsim}
\def\ls{\lesssim}

\def\BibTeX{{\rm B\kern-.05em{\sc i\kern-.025em b}\kern-.08em
    T\kern-.1667em\lower.7ex\hbox{E}\kern-.125emX}}

\begin{document}
\title{\LARGE Hierarchical Clustering and Signal Denoising on Digraphs}
\author{Yi Wang, Sippanon Kitimoon, Hrushikesh N. Mhaskar, Xiaosheng Zhuang
\thanks{The research of H N. Mhaskar was supported in part by ONR grants N00014-23-1-2394, N00014-23-1-2790. The work of X. Zhuang was supported in part by the Research Grants Council of Hong Kong (Project No. CityU 11301224 and CityU 11300825). 
}
\thanks{Y. Wang and X. Zhuang are with the Department of Mathematics, City University of Hong Kong, Hong
Kong, SAR China. (email:ywan72@cityu.edu.hk, xzhuang7@cityu.edu.hk); S. Kitimoon is with the Data Science Research Center, Faculty of Science, Chiang Mai University, Chiang Mai 50200, Thailand (email: sippanon.kitimoon@cmu.ac.th); H. N. Mhaskar is with the Institute of Mathematical Sciences, Claremont Graduate University, Claremont, CA 91711, USA. (email:hrushikesh.mhaskar@cgu.edu)}
}


\maketitle

\begin{abstract}
In this paper, we propose a representation of a digraph (directed graph) as a Hermitian matrix derived from its adjacency matrix. This representation characterizes both the connectivity and the edge orientation of the digraph. Based on the spectral decomposition of the Hermitian matrix, a digraph clustering algorithm with $k$-means is introduced to produce a partition on the graph. Applying this algorithm (bottom-up) recursively to a digraph with partially labeled vertices yields a spectral hierarchical digraph clustering (\myproj) algorithm that produces consistent nested partitions of the digraph, or equivalently, a tree structure. Furthermore, based on the in-degree and out-degree of each cluster in the digraph clustering, a pair of hierarchical interval partitions (filtrations) can be derived  in a top-down manner to produce a pair of nested knot sequences. These knot sequences facilitate the construction of multilevel spline quasi-interpolants, enabling a
noisy graph signal to be decomposed into a coarse approximation and inter-level details, followed by adaptive thresholding and reconstruction. Experiments on synthetic and real-world digraphs demonstrate the superiority of our {\myproj} algorithm for digraph clustering across diverse graph structural properties (homophily and heterophily) and supervision settings. Moreover, experiments on digraph signal processing using multilevel spline quasi-interpolants further demonstrate the effectiveness of signal recovery on digraphs in terms of RMSE and SNR.
\end{abstract}

\begin{IEEEkeywords}
Graph signal processing, Digraph clustering, Spline approximation, Quasi-interpolants.
\end{IEEEkeywords}

\section{Introduction}\label{sec:introduction}


\IEEEPARstart{S}{ignals} defined on graphs arise in a broad range of applications \cite{gama2019convolutional,sandryhaila2013discrete,ortega2018graph,leus2023graph}, including social, citation, web, sensor, and biological networks, where observations are associated with vertices and their dependencies are encoded by edges. Graph signal processing (GSP) extends classical signal processing from regular Euclidean domains to graph domains. In GSP, graph operators—most notably the graph Laplacian, adjacency matrix, and graph-shift operator— are used to define graph-domain notions of frequency and signal variation and to support fundamental tasks such as filtering \cite{shuman2020localized,wei2025vertex}, sampling \cite{tanaka2020sampling,tanaka2020generalized}, and recovery \cite{jung2019localized,jung2019semi,rey2023robust}.

Clustering can be viewed as a special form of graph signal recovery in GSP, where cluster labels constitute a discrete graph signal whose recovery reveals latent structures in graph-structured data. Early graph clustering methods formulated this task through combinatorial objectives such as graph cuts and
normalized cuts \cite{shi2000normalized,flake2004graph}. Spectral clustering subsequently relaxed these discrete objectives and embedded vertices using selected eigenvectors of the graph Laplacian, followed by $k$-means or related partitioning procedures \cite{ng2001spectral,von2007tutorial}. From a GSP perspective, these eigenvectors represent slowly varying structural modes, while the resulting cluster assignments constitute a piecewise-coherent graph signal. These classical formulations are primarily developed for undirected graphs, whereas many real-world networks are inherently directed (i.e., digraphs).  In digraphs, edge orientations encode asymmetric relations such as influence, transition, migration, and information flow, making directionality an essential component of the clustering structure. Existing digraph clustering methods address such
asymmetry from several perspectives. Symmetrization-based methods construct undirected similarities from directed connectivity patterns \cite{satuluri2011symmetrizations,chui2018representation}, while source--target methods distinguish the sending and receiving roles of vertices. For example, DI-SIM uses the left and right singular vectors of a regularized directed Laplacian to identify
asymmetric block structures \cite{rohe2016co}. Random-walk and flow-based approaches instead characterize clusters through directed transition behavior, including stationary-distribution-based Laplacians \cite{chung2005laplacians,zhou2005learning} and description-length objectives \cite{rosvall2008maps}. More recently, Hermitian spectral methods have encoded edge orientation in complex-valued operators while retaining real eigenvalues and an orthonormal eigenbasis \cite{cucuringu2020hermitian}; related skew-symmetric formulations reduce complex-domain computation while preserving the relevant directed-cut structure \cite{hayashi2022skew}.

On the other hand, it is well known that spline approximation provides an efficient and effective tool in classical signal processing. Splines are piecewise polynomials over a knot sequence that can be evaluated fast \cite{deboor2001practical}. 
Moreover, in contrast to exact interpolation, spline quasi-interpolation determines approximation coefficients from local samples or bounded functionals, avoiding a global interpolation system while retaining polynomial reproduction and favorable approximation orders \cite{de1973spline,chuidiamond90}. Spline constructions have also been closely connected with multiresolution representations, for example, spline--wavelet filter banks provide efficient multirate decomposition and reconstruction
\cite{chui2015representation,chui2016multirate}, while hierarchical and truncated hierarchical B-splines enable adaptive local refinement through nested spline spaces \cite{speleers2017hierarchical,bracco2016bivariate,kunoth2018foundations}. 



These two research threads suggest a natural but unexplored connection. Multi-level digraph clustering reveals how fine-grained vertex groups are progressively merged into coarse structural units, while spline approximation requires an ordered knot configuration that determines local support and resolution. Establishing this connection poses two coupled challenges. First, recursive clustering must preserve asymmetric inter-cluster interactions, incorporate limited supervision, and maintain consistent relations across successive coarse digraphs. Second, the discrete partitions and their level-wise directional
connectivity must be transformed into ordered intervals suitable for spline construction. These considerations lead to two closely related questions:
\begin{itemize}
    \item \textbf{How can digraphs be clustered consistently across levels with limited supervision?}
    \item \textbf{How can multi-level clusters induce valid knot sequences for spline approximation?}
\end{itemize}
The first challenge requires a direction-aware spectral representation that incorporates available labels. The second requires an interval construction that combines the partition
and directed adjacency information at every level. Addressing them jointly enables the resulting spline approximation to adapt to the hierarchical and asymmetric structures revealed by digraph clustering.

To address these challenges, we propose {\myproj}, a semi-supervised spectral hierarchical digraph clustering algorithm, together with a hierarchy-induced spline construction. {\myproj} is based on a generalized Hermitian matrix that jointly encodes reciprocal connectivity and asymmetric edge orientation, yielding a direction-sensitive spectral representation with real eigenvalues and an orthonormal eigenbasis. Unlike single-level methods that independently compute partitions at prescribed granularities, {\myproj} constructs a consistent hierarchy through recursive digraph coarsening. At each level, the current clusters are contracted into coarse vertices, and the directed adjacency weight between two coarse vertices is obtained by aggregating the edge weights between their constituent clusters. A new generalized Hermitian matrix is then constructed from the resulting coarse adjacency matrix, and its spectral representation is used to determine the partition at the next level. Repeating this procedure yields a sequence of nested partitions and their corresponding coarse directed graphs, with the available labels anchoring the finest-level partition and propagating through the hierarchical construction. Based on the partition and adjacency matrix obtained at each level, we further derive in-degree and out-degree intervals that characterize the directional connectivity of the corresponding clusters. The intervals collected across the hierarchy are organized into multi-level knot sequences for spline construction. Consequently, the resulting splines incorporate both the hierarchical organization of the clusters and the asymmetric inflow and outflow patterns of the digraph, providing a direction-aware and structure-adaptive representation beyond the discrete clustering results. Empirically, controlled experiments on synthetic homophilic and heterophilic digraphs characterize the behavior of {\myproj} under different directed structural regimes. Comparisons with representative clustering baselines on five labeled and two unlabeled real-world networks further demonstrate its practical effectiveness, while additional denoising experiments validate the utility of the hierarchy-induced in-degree and out-degree knot sequences for structure-adaptive signal recovery.

\subsection{Related work}

\noindent\textbf{Digraph Clustering.}
From a graph signal processing perspective, digraph clustering seeks a direction-aware graph operator whose spectral representation supports the recovery of cluster-label signals over asymmetric topologies. Existing methods mainly model directionality through symmetric transformations, source--target representations \cite{satuluri2011symmetrizations,rohe2016co,zhang2022directed}, random-walk dynamics \cite{chung2005laplacians,rosvall2008maps}, and Hermitian operators \cite{cucuringu2020hermitian,hayashi2022skew}. 

Satuluri and Parthasarathy~\cite{satuluri2011symmetrizations} constructed symmetric similarity matrices from products involving the directed adjacency matrix and its transpose, enabling conventional spectral clustering through common-predecessor and common-successor relations. Rohe et al.~\cite{rohe2016co} proposed DI-SIM, which uses the left and right singular vectors of a regularized directed Laplacian to characterize the sending and receiving roles of vertices; related methods similarly derive source- and target-side representations from incoming and outgoing
connectivity patterns \cite{zhang2022directed}. Random-walk-based methods instead identify clusters through transition behavior. Chung's directed Laplacian constructs a spectral operator from a random-walk transition matrix and its stationary distribution \cite{chung2005laplacians}, and was subsequently applied to learning from labeled and unlabeled digraph data \cite{zhou2005learning}. InfoMap~\cite{rosvall2008maps} detects modules by minimizing the description length of a random walk. Hermitian formulations provide a direction-preserving spectral alternative. Cucuringu et al.~\cite{cucuringu2020hermitian} encoded edge orientation in a complex Hermitian matrix with real eigenvalues and an orthonormal eigenbasis, allowing conventional spectral embedding to retain asymmetric information. Hayashi et al.~\cite{hayashi2022skew} further related this representation to
a real skew-symmetric formulation, reducing complex-domain computation while preserving the relevant directed-cut structure. Other studies have explored flow imbalance and higher-order relations among directed clusters \cite{he2022digrac,laenen2020higher}. 

Despite these advances, most existing methods produce single-level partitions at prescribed granularities. Direction-aware spectral hierarchical clustering remains less explored, particularly when partial labels must be incorporated and asymmetric inter-cluster interactions must be preserved during recursive graph coarsening.

\noindent\textbf{Spline Approximation. }
Spline approximation \cite{dahmen1980multidimensional, deboor2001practical, cox2006practical, hasan2024b} is a classical framework for representing, interpolating, and smoothing functions through piecewise-polynomial basis functions defined over a knot sequence. The compact support and controllable smoothness of B-splines make them particularly suitable for local approximation, geometric modeling, numerical computation, and signal reconstruction \cite{deboor2001practical}. 
Unlike classical interpolation, which enforces exact sample reproduction and often requires solving a global system, spline quasi-interpolation constructs coefficients from local function values or bounded linear functionals, providing an efficient and stable approximation while retaining locality and polynomial reproduction. Fundamental approximation properties of spline quasi-interpolants were established in \cite{de1973spline}, with further developments for scattered-data approximation given in \cite{quasiint2000}. 
Spline methods are also closely related to multiresolution and adaptive representation. Spline--wavelet constructions support efficient multiscale decomposition \cite{chui2016multirate}, while hierarchical spline spaces enable local refinement through nested approximation spaces \cite{speleers2017hierarchical,bracco2016bivariate}. These methods rely strongly on the underlying knot sequence or refinement hierarchy, which is typically specified in advance. Although variational splines extend interpolation and recovery to graph-supported signals through graph Laplacians \cite{pesenson2009variational}, they do not explicitly derive spline knots from the directional and hierarchical structure of a digraph. In this work, the in-degree and out-degree intervals obtained at successive levels of digraph clustering are assembled into multi-level knot sequences, allowing the quasi-interpolatory spline to adapt its resolution to both the hierarchical organization and asymmetric connectivity of the underlying digraph.

\subsection{Contributions}

The main contributions of this work are summarized as follows:
\begin{itemize}
    \item We introduce a generalized Hermitian matrix that jointly models reciprocal connectivity and asymmetric edge orientation, providing a flexible direction-sensitive spectral representation with real eigenvalues and an orthonormal eigenbasis.
    \item We develop SpecHDC, which recursively constructs coarse directed graphs by aggregating inter-cluster edge weights, recomputes the generalized Hermitian representation at each level, and produces nested partitions while incorporating the available label information.
    \item We derive level-wise in-degree and out-degree intervals from the clustering results and their corresponding adjacency matrices and organize these intervals into multi-level knot sequences, enabling direction-aware and structure-adaptive spline approximation.
    \item Since the knots are not equidistant, the de Boor inequality yielding stability bounds for splines does not hold. We obtain the necessary stability bounds with respect to a doubling measure. (See Section~\ref{sec:splines} for details). 
    \item Controlled experiments on synthetic homophilic and heterophilic digraphs analyze the structural behavior of \myproj; comparisons on five labeled and two unlabeled real-world networks demonstrate its clustering effectiveness; and denoising experiments confirm the value of the proposed knot sequences for graph signal recovery.
\end{itemize}

\subsection{Structure of the paper} 
The remainder of this paper is organized as follows. Section~\ref{sec:digraph_clustering} presents the notation and preliminaries for digraph clustering, develops the proposed spectral hierarchical clustering algorithm, and provides illustrative examples. Section~\ref{sec:experiments} evaluates the clustering performance on synthetic directed stochastic block models and real-world digraphs across diverse graph structural properties (homophily and heterophily).
Section~\ref{sec:splines} introduces splines with general measures and establishes the stability of the proposed quasi-interpolation scheme.
Section~\ref{sec:digraph_signal_processing} connects the obtained digraph hierarchy with graph signal processing by constructing multi-level spline representations and examining their performance in signal denoising. Finally, Section~\ref{sec:conclusions} concludes the paper.


\section{Digraph Clustering}\label{sec:digraph_clustering}

In this section, we introduce our \myproj~algorithm based on a proposed generalized Hermitian matrix and present its induced hierarchical interval partitions. Some synthetic examples are used to demonstrate our approach. 

\subsection{Notation and Preliminary on Graph Clustering}
\noindent\textbf{Undirected and Directed Graphs.}
Let $G=(V,E,A)$ be a weighted graph with vertex set
$V=\{v_1,\ldots,v_N\}$, edge set $E\subset V\times V$, and weighted adjacency matrix
$A=[a_{ij}]_{1\le i,j\le N}\in\mathbb{R}_{+}^{N\times N}$, where $a_{ii}=0$ and $a_{i,j}\neq 0$ if $(v_i,v_j)\in E$. For an undirected graph, $A=A^\top$,  the degree of $v_i$ is $d_i=\sum_{j=1}^{N}a_{ij}$, and the degree matrix
$D=\operatorname{diag}(d_1,\ldots,d_N)$ is diagonal. For a digraph, $a_{ij}>0$ denotes an edge from $v_i$ to $v_j$, and generally $a_{ij}\neq a_{ji}$. The out-degree and in-degree of $v_i$ are given by
\begin{equation}
d_i^{\mathrm{out}}=\sum_{j=1}^{N}a_{ij},\qquad d_i^{\mathrm{in}}=\sum_{j=1}^{N}a_{ji},
\label{eq:directed_degrees}
\end{equation}
and with the degree matrices $D_{\mathrm{out}}=\operatorname{diag}(d_1^{\mathrm{out}},\ldots,d_N^{\mathrm{out}})$ and $D_{\mathrm{in}}=\operatorname{diag}(d_1^{\mathrm{in}},\ldots,d_N^{\mathrm{in}})$. The graph volume is
\begin{equation}
\operatorname{vol}(G)=\sum_{i=1}^{N}d_i^{\mathrm{out}}=\sum_{i=1}^{N}d_i^{\mathrm{in}}=\sum_{i,j=1}^{N}a_{ij}.
\label{eq:graph_volume}
\end{equation}
A $k$-way clustering is a partition
$\mathcal{C}=\{C_1,\ldots,C_k\}$ of $V$ so that $V=\cup_{j=1}^k C_j$, equivalently represented by the label signal $y:V\rightarrow\{1,\ldots,k\}$ with $C_j=y^{-1}(j)$. In the semi-supervised setting, labels are known only for vertices indexed by $\mathcal{I}_\mathrm{knw}\subseteq\{1,\ldots,N\}$ so that  $Y_{\mathrm{knw}}(i)\in\{1,\ldots,k\}$ is given in advance for $i\in \mathcal{I}_\mathrm{knw}$. If $\mathcal{I}_{\mathrm{knw}}=\emptyset$, then it corresponds to the unsupervised setting. The remaining assignments are inferred subject to these known labels.

\noindent\textbf{$k$-Means.} Given $N$ observations $z_1, z_2,\ldots, z_N\in\mathbb R^n$ of $n$-dimensional vectors associated with vertices $v_1,\ldots, v_N$ of $V$, respectively, $k$-means clustering \cite{jain2010data} aims to partition $V$ into a $k$-way clustering $\mathcal C = \{C_1, C_2, \ldots, C_k\}$ of $V$, equivalently, to obtain a label signal $y:V\rightarrow \{1,\ldots,k\}$, so as to minimize the within-cluster sum of squares. Formally, the objective function is given by
\begin{equation}\label{k-means}
\min_{\mathcal C=\{C_j=y^{-1}(j)\}_{j=1}^k}\sum_{j=1}^{k}\sum_{v_i\in C_j}
\left\|z_i-c_j\right\|_2^2,\mbox{ s.t. } y|_{\mathcal I_{\mathrm{knw}}}=Y_{\mathrm{knw}},
\end{equation}
where $c_j=\frac{1}{|C_j|}\sum_{v_i\in C_j}z_i$ is the centroid of the vertices in $C_j$. See {\it Supplementary Material Part A.6} for more details. For a graph, the set $\{z_1, z_2,\ldots, z_N\}$ is typically obtained via spectral embedding deduced from
a graph operator, e.g., the graph Laplacian \cite{von2007tutorial}. For digraphs, defining such a graph operator, reflecting both the connectivity and the edge orientation of the digraph, is key to accurate embedding for the subsequent $ k$-means algorithm.

\subsection{Digraph Clustering}
To preserve both reciprocal connectivity and asymmetric edge orientation, we propose a generalized Hermitian representation from the weighted adjacency matrix $A$. Specifically, the generalized Hermitian matrix is defined as
\begin{equation}
W=\alpha(A+A^\top)+\beta(A-A^\top)\mathrm{i},
\end{equation}
where $\mathrm{i}=\sqrt{-1}$, $\alpha\in[0,1]$, and
$\beta\in(0,1]$. The symmetric component $A+A^\top$ characterizes the overall connectivity between two vertices, whereas the skew-symmetric component $A-A^\top$ captures the directional imbalance between opposite edges. Thus, $\alpha$ and $\beta$ control the contributions of reciprocal connectivity and edge directionality, respectively. Since $A+A^\top$ is symmetric and $A-A^\top$ is skew-symmetric, we have $W^{*}=\alpha(A+A^\top)^\top-\beta(A-A^\top)^\top\mathrm{i}=W$,
where $(\cdot)^{*}$ denotes the complex conjugate transpose. Therefore, $W$ is Hermitian and admits an eigendecomposition 
\begin{equation}
W = U\Lambda U^{*},
\end{equation}
where
$U=[g_1,\ldots,g_N]\in \mathbb C^{N\times N}$
contains an orthonormal set of eigenvectors, and
$\Lambda=\operatorname{diag}(\lambda_1,\ldots,\lambda_N)$
contains the corresponding real eigenvalues.

\begin{rem}
Satuluri and Parthasarathy in \cite{satuluri2011symmetrizations} used the bibliometric symmetrization $W= AA^\top + A^\top A$ for digraph representation. Chui et al. in \cite{chui2018representation} jointly use the bibliographic coupling matrix $AA^\top$ and the co-citation strength matrix $A^\top A$ as a pair $(AA^\top,A^\top A)$ to explore digraph signal representation as a 2-dimensional embedding. Cucuringu et al. in \cite{cucuringu2020hermitian} proposed the Hermitian adjacency matrix $W= (A-A^\top)\mathrm{i}$.  Hayashi et al.~\cite{hayashi2022skew} focus on the real skew-symmetric part $A-A^\top$. Despite the advantages of symmetrization, most of the above methods lack a complete characterization of a digraph in terms of both connectivity and directional orientation.  Our formulation of the generalized Hermitian matrix $W = \alpha (A+A^\top) + \beta (A-A^\top) \mathrm{i}$ not only characterizes both connectivity $A+A^\top$ and directional orientation $(A-A^\top)\mathrm{i}$, but also provides weights between the two characteristics, thereby enhancing the representation power of the a digraph for subsequent tasks. 
\end{rem}

Given a threshold $\epsilon>0$, we retain the eigenvectors whose
eigenvalues satisfy $|\lambda_j|>\epsilon$.
The corresponding spectral projection matrix is constructed as
\begin{equation}
P=\sum_{j\in\mathcal{J}_{\epsilon}}g_jg_j^{*},
\end{equation}
where $\mathcal{J}_{\epsilon}=\{j:|\lambda_j|>\epsilon\}$.
Each row of $P$ provides a direction-aware spectral representation of a vertex. The rows of $P$ are used as $\{z_1,\ldots, z_N\}$ for the $k$-means algorithm to obtain a $k$-way clustering of the digraph. When a real-valued implementation is required, each complex row $P_{i,:}$ can be represented as
\begin{equation}
\widetilde{P}_{i,:}=\left[\operatorname{Re}(P_{i,:}),\operatorname{Im}(P_{i,:})\right],
\end{equation}
which preserves the Euclidean distance between the original complex representations.

\begin{algorithm}[t]
\caption{Spectral Clustering for Digraphs with the Generalized Hermitian Matrix}
\label{alg:spectral_clustering}
\begin{algorithmic}[1]
\REQUIRE Digraph \(G=(V,E,A)\); $k\geq 2$; $(\mathcal{I}_{\mathrm{knw}},Y_{\mathrm{knw}})$; $\epsilon>0$; hyperparameters $\alpha \in [0,1]$ and $\beta \in (0,1]$
\STATE Construct the generalized Hermitian matrix:\\
    $W = \alpha(A + A^\top) + \beta(A - A^\top)\mathrm{i}$
\STATE Compute the eigenpairs $\{(\lambda_j, g_j)\}_{j\in \mathcal J_\epsilon}$ of $W$ with $\mathcal J_\epsilon= \{j: |\lambda_j|>\epsilon\}$.
\STATE $P\gets\displaystyle\sum_{j\in \mathcal J_\epsilon}g_jg_j^{*}$
\STATE Apply $k$-means on the matrix $\left[\operatorname{Re}(P),\operatorname{Im}(P)\right]$ subject to $(\mathcal{I}_{\mathrm{knw}},Y_{\mathrm{knw}})$ as in \eqref{k-means}
\RETURN A partition of $V$ based on the output of $k$-means
\end{algorithmic}
\end{algorithm}

\begin{algorithm}[t]
\caption{\textbf{\myproj}: A Semi-supervised Spectral Hierarchical Digraph Clustering Algorithm.}
\label{alg:semi_spechc}
\begin{algorithmic}[1]

\item[] \textbf{a) Input:}
A digraph $G=(V,E,A)$; $(\mathcal{I}_\mathrm{knw},Y_\mathrm{knw})$ if any; 
$K=(k_1,k_2,\ldots,k_L)$ satisfying $1<k_1<k_2<\cdots<k_L<N$, where $N=|V|$; hyperparameters $\alpha\in[0,1]$, $\beta\in(0,1]$, and $\epsilon>0$.

\item[] \textbf{b) Output:}
A hierarchical tree of levels $0,1,\ldots,L+1$, where level $0$ is the root $\{V\}$, level $L+1$ consists of the original vertices as leaves, and level $l$ contains a $k_l$-way clustering of $V$ for $l=1,\ldots,L$.
\item[] \textbf{c) Main Steps:}
\STATE \textbf{Initialization:}
$l\leftarrow L$, $V^{(L+1)}\leftarrow V$, $A^{(L+1)}\leftarrow A$; 
\WHILE{$l\geq 1$}
    \STATE Apply \textbf{Algorithm~\ref{alg:spectral_clustering}} to the current digraph with adjacency matrix $A^{(l+1)}$ and $k=k_l$, subject to the known pair $(\mathcal{I}_\mathrm{knw},Y_\mathrm{knw})$ resulting a $k_l$-way clustering $\mathcal{C}_l=\{C_1^{(l)},\ldots,C_{k_l}^{(l)}\}$ of $V$.
    \STATE Construct the coarse digraph
    $G^{(l)}=(V^{(l)},E^{(l)},A^{(l)})$ with $V^{(l)}=\{C^{(l)}_1,\ldots,C_{k_l}^{(l)}\}$ and adjacency matrix $A^{(l)}\in\mathbb{R}^{k_l\times k_l}$ given by \eqref{eq:cgc}.
    \STATE $l\leftarrow l-1$.
\ENDWHILE
\end{algorithmic}
\end{algorithm}

The complete digraph clustering procedure is summarized in
Algorithm~\ref{alg:spectral_clustering}. It serves as the basic
single-level clustering operation in the proposed hierarchical framework. In the following subsection, this procedure is recursively applied to successively coarsened digraphs to construct nested partitions.

\subsection{Spectral Hierarchical Digraph Clustering}
Algorithm~\ref{alg:spectral_clustering} produces a digraph partition at a prescribed granularity. To capture structures at multiple granularities, we recursively apply it to successively coarsened digraphs. The resulting semi-supervised spectral hierarchical digraph clustering method, termed \myproj, is summarized in Algorithm~\ref{alg:semi_spechc}.

Let $K=(k_1,\ldots,k_L)$ satisfy $1=k_0<k_1<\cdots<k_L<k_{L+1}=N$. The original vertices form level $L+1$ form $N$ clusters of singletons, level $l$ contains $k_l$ clusters from the level $l+1$, and level $0$ is the root representing a single cluster $\{V\}$ of all vertices. Starting from level $l=L+1$ with input the initial digraph $G^{(L+1)}=G$, Algorithm~\ref{alg:spectral_clustering} is recursively applied to the current digraph $G^{(l+1)}=(V^{(l+1)}, E^{(l+1)},A^{(l+1)})$ with $k=k_l$, yielding a $k_l$-way clustering  $\mathcal{C}_l$ of $V$. 
Note that the known labels in the pair $(\mathcal{I}_\mathrm{knw},Y_\mathrm{knw})$ with $Y_{\mathrm{knw}}(\mathcal{I}_\mathrm{knw})\subseteq\{1,\ldots,k'\}$ for some $k'\le k_1$ are imposed as fixed cluster-assignment constraints during the $k$-means clustering. Specifically, let $\mathcal{C}_l=\{C_1^{(l)},\ldots,C_{k_l}^{(l)}\}$ be the $k_l$-way clustering of $V=\{v_1,\ldots, v_N\}$ obtained at level $l$. Then, it must satisfy $C_j^{(l)}\supseteq Y_{\mathrm{knw}}^{-1}(j)$ for $j=1,\ldots, k'$. When $\mathcal{I}_\mathrm{knw}=\emptyset$, \myproj~reduces to the unsupervised hierarchical clustering. 
The partition $\mathcal{C}_l$ is then contracted to construct the next coarser digraph $G^{(l)}=(V^{(l)}, E^{(l)}, A^{(l)})$ with $V^{(l)}=\mathcal C_l$. Each cluster becomes a coarse vertex, and the directed weight from $C_i^{(l)}$ to $C_j^{(l)}$ is defined as
\begin{equation}
\label{eq:cgc}
A^{(l)}(i,j)=\frac{1}{\operatorname{vol}(G)}\sum_{C_u^{(l+1)}\subseteq C_i^{(l)}}\sum_{C_v^{(l+1)}\subseteq C_j^{(l)}}A^{(l+1)}(u,v),
\end{equation}
for $1\leq i,j\leq k_l$. Since the aggregation is performed separately for each ordered pair, $A^{(l)}(i,j)$ and $A^{(l)}(j,i)$ may differ, thereby preserving asymmetric inter-cluster interactions. 

The resulting clusterings satisfy
\begin{equation}\label{tree-C}
\mathcal{C}_{L+1}\preceq\mathcal{C}_L\preceq\mathcal{C}_{L-1}\preceq\cdots\preceq\mathcal{C}_1\preceq \mathcal{C}_0,
\end{equation}
where $\preceq$ denotes partition containment. These containment relations by the above \emph{bottom-up} approach define a hierarchical tree (filtration), with the original vertices as leaves at the bottom and $\{V\}$ as the root at the top. See Fig.~\ref{fig:toy_example} (top row, left to right). 

\begin{figure*}[htb]
\setlength{\abovecaptionskip}{0.cm}
\setlength{\belowcaptionskip}{-0.cm}
\centering
\includegraphics[width=0.75\textwidth]{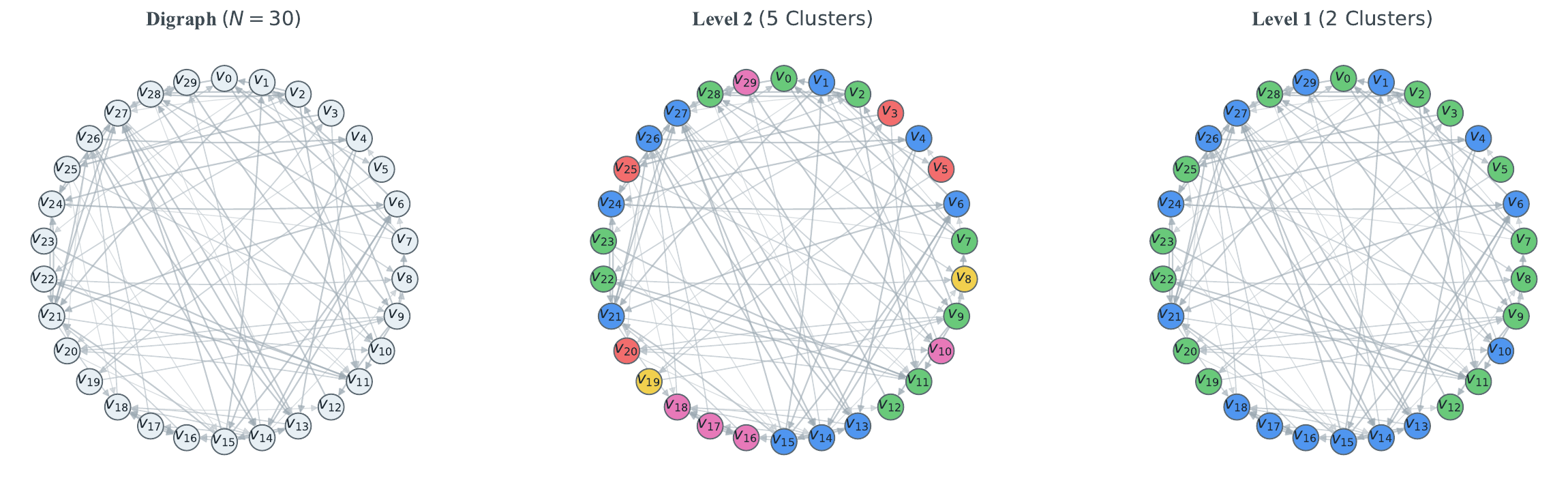}\\
\includegraphics[width=0.25\textwidth]{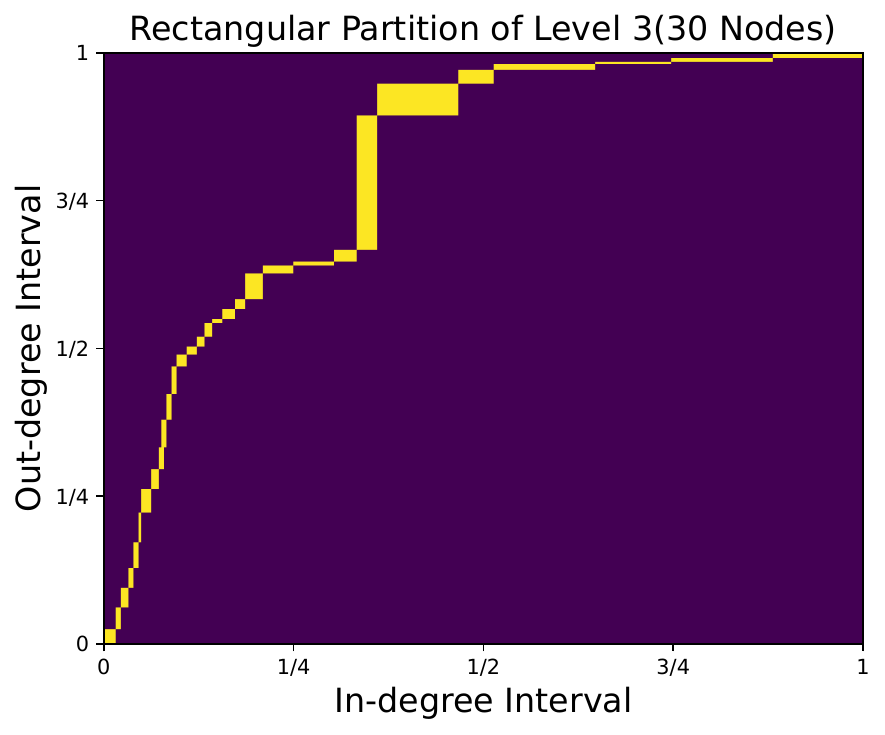}
\includegraphics[width=0.25\textwidth]{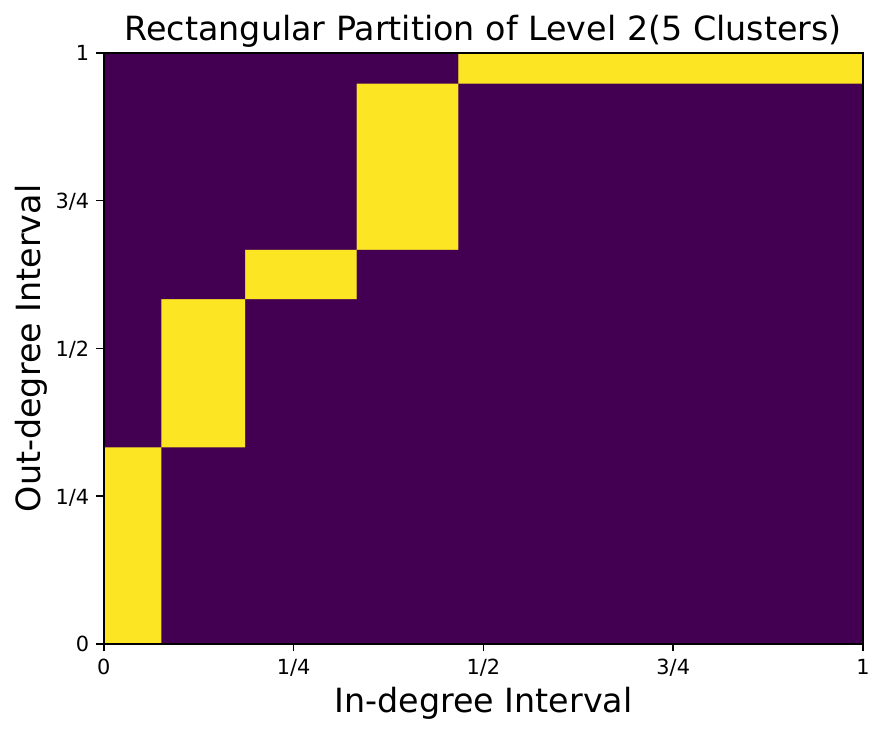}
\includegraphics[width=0.25\textwidth]{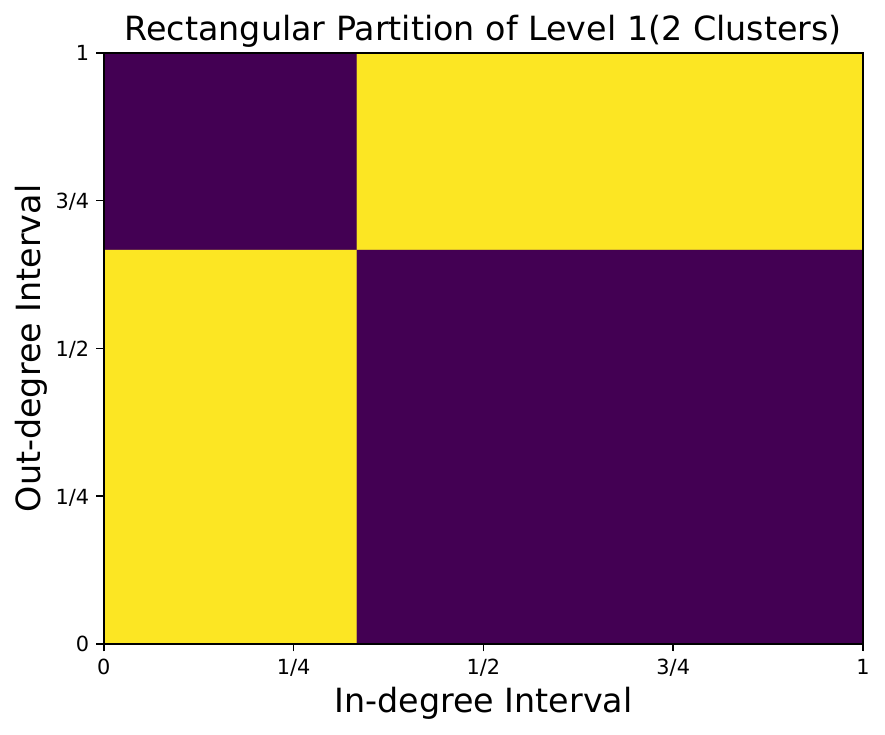}
\caption{The top presents the original digraph and the hierarchical clustering results obtained with our algorithm using $\alpha(A+A^T)+\beta(A-A^T)\mathrm{i}$. The bottom displays the corresponding rectangular partition of $I^2 = [0, 1] \times [0, 1]$, derived from the hierarchical structure of the trees above.}\label{fig:toy_example}
\end{figure*}
\subsection{Hierarchy-Induced Interval Partitions}\label{sec:interval-partition}
We now associate the hierarchy in \eqref{tree-C} to a pair of nested interval partitions in a \emph{top-down} manner. Specifically, along the in-degree and out-degree coordinates, respectively, from the root level $0$ to the leaf level $L+1$, we associate each cluster $C^{(l)}_j$ at each level with a block $R^{(l)}_j=I_{j,\mathrm{in}}^{(l)}\times I_{j,\mathrm{out}}^{(l)}$ of sub-intervals in $[0,1]$. Such a procedure also provides a sequence of nested knot sequences for the definition of splines in Section~\ref{sec:splines}.

More precisely, at level $0$, the root cluster $\{V\}$ is assigned the full  unit square $[0,1]\times [0,1]$, that is, $I_{\mathrm{in}}^{(0)}:=[a_{\mathrm{in}}^{(0)},b_{\mathrm{in}}^{(0)}]=[0,1]= [a_{\mathrm{out}}^{(0)},b_{\mathrm{out}}^{(0)}]=:I_{\mathrm{out}}^{(0)}$. Now, from the top level going down along the tree in \eqref{tree-C} for $l=1,\ldots,L+1$, suppose that a parent cluster $C^{(l-1)}$ at level $l-1$ is associated with  $[a_{\mathrm{in}}^{(l-1)},b_{\mathrm{in}}^{(l-1)}]\times [a_{\mathrm{out}}^{(l-1)},b_{\mathrm{out}}^{(l-1)}]$ at level $l-1$ and it is formed from the ordered children clusters $C_1^{(l)},\ldots,C_k^{(l)}$ at level $l$. That is, $C^{(l-1)}=\cup_j C^{(l)}_j$. Let $G^{(l)}=(V^{(l)},E^{(l)},A^{(l)})$ be the digraph obtained at level $l$. Based on the adjacency matrix $A^{(l)}=[a_{uv}^{(l)}]$, the aggregate in-degree and out-degree for $u=C_j^{(l)}$ are computed as 
\[
\begin{aligned}
d_{\mathrm{in}}\left(C_j^{(l)}\right)=\sum_{v\in V^{(l)}}a_{vu}^{(l)},\,\, d_{\mathrm{out}}\left(C_j^{(l)}\right)=\sum_{v\in V^{(l)}}a_{uv}^{(l)}.
\end{aligned}
\]
Then, the directional proportion (relative weight) of $C_j^{(l)}$ among all the ordered children clusters can be defined as $\rho_{j,\iota}^{(l)}=d_{\iota}\left(C_j^{(l)}\right) /\sum_{i=1}^{j}d_{\iota}\left(C_i^{(l)}\right)$ for $\iota\in\{\mathrm{in},\mathrm{out}\}$.
The parent intervals are then subdivided according to these proportions (weights). Specifically,  setting $t_{0,\iota}^{(l)}=a^{(l-1)}_{\iota}$, we define
\[
t_{j,\iota}^{(l)}=a^{(l-1)}_{\iota}+\left(b^{(l-1)}_{\iota}-a^{(l-1)}_{\iota}\right)\sum_{i=1}^{j}\rho_{i,\iota}^{(l)}, \, \iota\in\{\mathrm{in},\mathrm{out}\},
\]
for $j=1,\ldots,k$. The cluster $C_j^{(l)}$ is then associated with a block  $I_{j,\mathrm{in}}^{(l)}\times I_{j,\mathrm{out}}^{(l)}$, where
$I_{j,\iota}^{(l)}=[t_{j-1,\iota}^{(l)},t_{j,\iota}^{(l)}]$ for $\iota\in\{\mathrm{in},\mathrm{out}\}$.
Applying this subdivision procedure to all parent clusters at level $(l-1)$ produces the set $\mathcal{R}_l=\{R_j^{(l)}={I}^{(l)}_{j,\mathrm{in}}\times {I}^{(l)}_{j,\mathrm{out}}: j=1,\ldots,|V^{(l)}|\}$ of blocks at level $l$. 

Note that $[0,1]=\cup_{j=1}^{|V^{(l)}|}I^{(l)}_{j,\iota}$ for any $\iota\in\{\mathrm{in, out}\}$ and any $l=0,\ldots,L+1$. The resulting block sequence satisfies
\begin{equation}\label{tree-B}
\mathcal{R}_{0}\succeq\mathcal{R}_1\succeq\mathcal{R}_{2}\succeq\cdots\succeq\mathcal{R}_L\succeq \mathcal{R}_{L+1},
\end{equation}
where $\succeq$ denotes partition inclusion. 
See Fig.~\ref{fig:toy_example} (bottom row, right to left). 

\subsection{Illustration and Synthetic Examples}
To illustrate the hierarchical clustering process in Section~\ref{sec:digraph_clustering}, we use a synthetic digraph generated by the directed stochastic block model (DSBM). 
The DSBM was introduced by Cucuringu et al. in \cite{cucuringu2020hermitian} (see also \textit{Supplementary Material Part~A.5}), where a directed graph $G=(V,E,A)$ drawn from the model $\mathcal G(k,n,p,q,F)$ has $k$ clusters and $|V|=N=k\cdot n$ (i.e., each cluster has $n$ vertices). 
The parameter $p$ controls the probability that there is an edge between two vertices within the same cluster, the parameter $q \in [0, 1]$ controls the probability that there is an edge between two vertices belonging to two different clusters, and $F \in [0, 1]^{k\times k}$ is a matrix where $F_{j,j'}$ denotes the probability of an edge from cluster $j$ to cluster $j'$. Necessarily, $F_{j,j'}+F_{j',j}=1$, and $F_{j,j}=1/2$. The pair $(p,q)$ affects the homophily ratio of the directed graph (see \eqref{def:homophiy} and Section~\ref{sec:DSBM}). 

As an example, we generate a graph from the model with $p=q=0.5$ and $N=30$, using $k=2$ and $n=15$. The cyclic pattern matrix is $F=
\bigl[\begin{smallmatrix}
\frac{1}{2} & 1-\eta\\
\eta & \frac{1}{2}
\end{smallmatrix}\bigr]$,
where $0\le\eta\le\frac12$ controls the directional randomness between the two clusters. Smaller $\eta$ indicates a stronger directional preference, while $\eta=\frac12$ corresponds to no directional bias. We set $\eta=0$, giving $F=\bigl[\begin{smallmatrix}1/2&1\\0&1/2\end{smallmatrix}\bigr]$. We then apply \myproj~in the unsupervised setting with $\mathcal{I}_{\mathrm{knw}}=\emptyset$ and set $K=(k_1,k_2)=(2,5)$. Thus, Levels~1 and~2 contain $2$ and $5$ clusters, respectively, while Level~3 contains the $30$ original vertices. 

As shown in Fig.~\ref{fig:toy_example}, the top row (from left to right) presents the original digraph at level 3 and the clustering results at levels $2$ and $1$, respectively.  Vertices with the same color belong to the same cluster.
The $N=30$ vertices at level 3 are aggregated to 5 clusters at level 2, and they are further grouped into two larger clusters at level $1$, demonstrating the nested structure $V=\mathcal C_3\preceq \mathcal C_2\preceq\mathcal C_1\preceq \mathcal C_0=\{V\}$ produced by the \myproj~alogirithm (as a bottom-up procedure). Note that the root level $\mathcal C_0$ is omitted as it is the single cluster $\{V\}$. The bottom row of Fig.~\ref{fig:toy_example} (from right to left) shows the induced hierarchy block partitions $\mathcal R_0\succeq \mathcal R_1\succeq\mathcal R_2\succeq \mathcal R_3$ (by a top-down procedure). At each level, the in-degree and out-degree intervals assigned to the same cluster are paired to form a rectangular cell. The level-$1$ partition $\mathcal R_1$ (bottom right), therefore, contains two rectangle blocks, while the level-$2$ partition $\mathcal R_2$ (bottom middle) contains five. At the finest level $\mathcal R_3$ (bottom left), each original vertex is associated with an individual block cell. The successive subdivisions illustrate how the clustering hierarchy induces increasingly refined directional partitions and nested knot sequences for subsequent spline quasi-interpolation.


\section{Experiments on Digraph Clustering}
\label{sec:experiments}
We evaluate our method against representative spectral clustering approaches on both synthetic and real-world directed graphs. The experiments further include parameter sensitivity analysis and a comparison of different normalization strategies. Section~\ref{sec:DSBM} reports results on synthetic DSBM datasets, examining how clustering performance changes with key model parameters and revealing the structural conditions under which each method performs best. Section~\ref{sec:real_world} presents evaluations on real-world networks under different training settings, comparing our method with the baselines across diverse graph types. Additional experimental details and results are provided in 
\textit{Supplementary Material Parts~A and B}.

\subsection{Experimental Setup} \label{sec:ExperimentalSetup}

\noindent\textbf{Datasets. }The experiments are conducted on both synthetic and real-world datasets. Synthetic digraphs are generated using the DSBM with the five parameters: number of nodes $N$, number of clusters $k$, intra-cluster link probability $p$, inter-cluster link probability $q$, and the direction probability matrix $F$, allowing us to systematically control their structural properties. The entries of $F$ are restricted to $1/2$, $\eta$, and $1-\eta$, corresponding to random, preferred, and reversed edge directions, respectively. Unless otherwise stated, we set $N=5000$ and $k=5$ by default. In addition, we evaluate our model on seven real-world directed graph datasets with diverse scales, structural properties, and homophily levels (see \eqref{def:homophiy}).
These datasets span diverse application domains, including the Cora citation network~\cite{sen2008collective}, the Squirrel Wikipedia webpage network~\cite{rozemberczki2021multi}, the Telegram influence network~\cite{zhang2021magnet,he2022digrac}, the Cornell, Wisconsin, and Texas WebKB networks~\cite{pei2020geom}, and the political Blog network~\cite{adamic2005political}. Since Telegram and Blog are unlabeled, their class counts and homophily ratios are reported as N/A. All datasets are publicly available and have been widely used in directed graph learning, supporting fair and reproducible evaluation. Due to space limitations, the experimental results for Squirrel, Cornell, and Texas are provided in \textit{Supplementary Material Part~B.2}, while detailed statistics and descriptions of all datasets are summarized in \textit{Supplementary Material Part~A.2}.
For a  directed graph $G=(V,E,A)$ with each node $v$ with a label $y_v\in\{1,\ldots,k\}$, its \emph{homophily ratio} \cite{huang2024flow2gnn} is defined by
\begin{equation}\label{def:homophiy}
\mathcal{H}_G=\frac{1}{|V|}\sum_{v\in V} h_v,  
\end{equation}
with the node homophily ratio at $v$ given by
\[
h_v=\frac{1}{2}\!\left(\!\frac{|\{u\in\mathop{\mathcal{N}}\limits ^{\leftarrow}(v)\!:\!{y}_v\!=\!{y}_u\}|\!}{|\mathop{\mathcal{N}}\limits ^{\leftarrow}(v)|\!}\!+\!\frac{\!|\{u\in\mathop{\mathcal{N}}\limits ^{\rightarrow}(v)\!:\!{y}_v\!=\!{y}_u\}|}{\!|\mathop{\mathcal{N}}\limits ^{\rightarrow}(v)|}\!\right)\!,
\]
where $\mathop{\mathcal{N}}\limits ^{\leftarrow}(v)$ denotes the set of nodes that have edges pointing into node $v\in {V}$, and $\mathop{\mathcal{N}}\limits ^{\rightarrow}(v)$ is the set of nodes that have edges pointing out of node $v$. 

\noindent\textbf{Competitors.} 
We benchmark our approach against five state-of-the-art spectral clustering methods designed for directed graphs: Bi-Sym \cite{satuluri2011symmetrizations}, DD-Sym \cite{satuluri2011symmetrizations}, DI-SIM \cite{rohe2016co}, Herm \cite{cucuringu2020hermitian}, and Skew \cite{hayashi2022skew}. These unsupervised baselines primarily rely on matrix symmetrization or spectral decomposition to learn node embeddings. All methods are strictly evaluated under their original hyperparameter settings across synthetic DSBM graphs and the seven real-world directed networks. Additional details are provided in \textit{Supplementary Material Parts A.3}. 

\noindent\textbf{Evaluation Metrics.}
We evaluate algorithm performance using three metrics: Adjusted Rand Index (ARI) \cite{gates2017impact}, Modularity \cite{malliaros2013clustering}, and F-measure \cite{satuluri2011symmetrizations}. On synthetic graphs generated from the DSBM, we use ARI, as the ground-truth communities are known. On real-world datasets, modularity and F-measure metrics are adopted for adapting both unsupervised and semi-supervised settings. Modularity does not require labels, whereas ARI and F-measure do. All metrics reflect how closely the recovered clusters match the ground truth: values near 1 indicate nearly perfect recovery, while values near 0 suggest nearly random partitioning. 

\noindent\textbf{Implementation Details. }
All experiments are conducted on a workstation equipped with an NVIDIA GeForce RTX 2080 Ti GPU and 24 GB of memory. The proposed method is implemented in Python, with the eigendecomposition and clustering procedures performed using standard numerical and machine-learning libraries. Unless otherwise specified, the parameters $\alpha$, $\beta$, and $\epsilon$ are selected through grid search on the validation data. For semi-supervised experiments, the labeled vertices are randomly sampled from each class to ensure that all classes are represented. Each experiment is repeated over multiple random splits, and the mean performance is reported.

\subsection{Results for the DSBM} \label{sec:DSBM}


\noindent\textbf{Results and Analysis.} 
We conduct experiments on graphs generated randomly from the DSBM with different values of $n$, $p$, $q$, and matrix $F$, as shown in Fig. \ref{fig:p_q}. In these graphs, the gap between $p$ and $q$ ranges from twice to ten times. The generated directed graphs are either homophilic or heterophilic. Under both scenarios, we observe how other DSBM parameters affect the results and obtain several key findings. All reported results are averaged over 10 independently generated graphs for each fixed parameter set. Additional results are reported in \textit{Supplementary Material Part~B.1}.

\noindent\textbf{Observation 1: Clear homophilic or heterophilic structures facilitate accurate clustering.}
Fig.~\ref{fig:p_q} examines the performance of \myproj~under different combinations of the DSBM parameters $p$ and $q$. When $p\gg q$, edges are mainly formed within clusters, producing a pronounced homophilic structure. Conversely, $q\gg p$ yields a clear heterophilic structure dominated by inter-cluster connections. In both cases, the ARI increases as the difference between $p$ and $q$ becomes larger. When $p$ and $q$ are close, the cluster boundaries become less distinguishable and the performance decreases. Therefore, \myproj~performs most reliably when the generated digraph exhibits a clear homophilic or heterophilic pattern.

\noindent\textbf{Observation 2: Directional randomness mainly affects graphs with ambiguous structural patterns.}
Fig.~\ref{fig:p_q} also shows the performance of \myproj~as the directionality parameter $\eta$  increases. A larger $\eta$ introduces more randomness into edge directions and weakens the directional distinction
between clusters. In the top panel, strongly homophilic graphs maintain ARI values close to $1$ over the entire range of $\eta$, whereas the ARI values for weakly or moderately homophilic graphs deteriorate more rapidly. A similar trend is
observed in the bottom panel: clearly, heterophilic graphs remain stable, while graphs with weaker inter-cluster separation become increasingly sensitive to directional randomness. These results indicate that structural separation can compensate for noisy edge directions, whereas ambiguous structures rely more strongly on consistent directional information.

\begin{figure}[t]
\setlength{\abovecaptionskip}{0.cm}
\setlength{\belowcaptionskip}{-0.cm}
\centering
\includegraphics[width=0.40\textwidth]{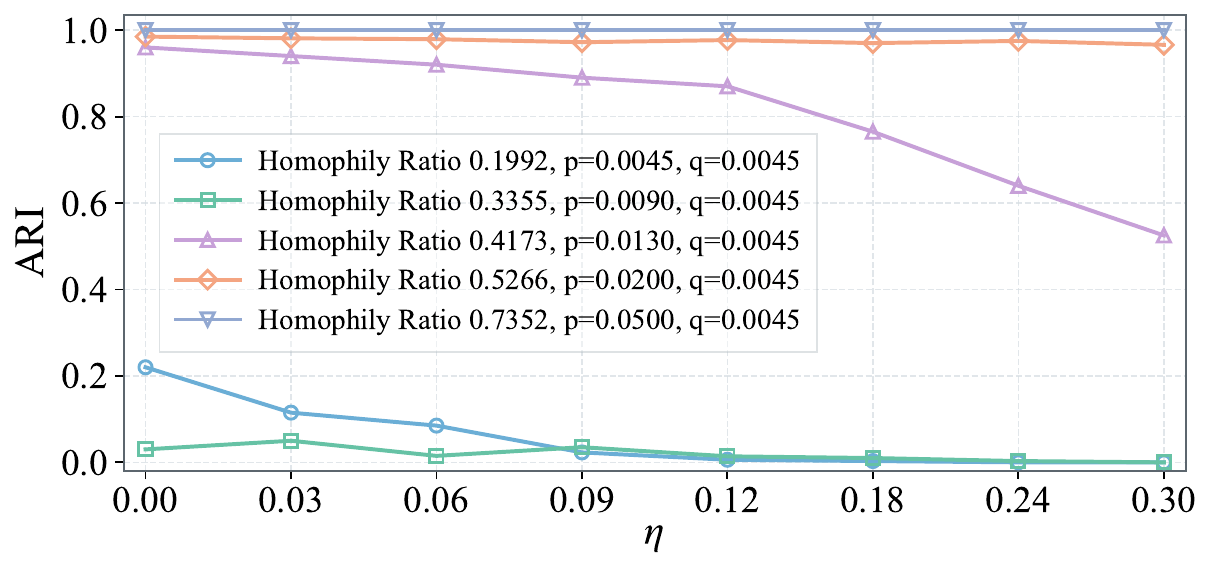}
\includegraphics[width=0.40\textwidth]{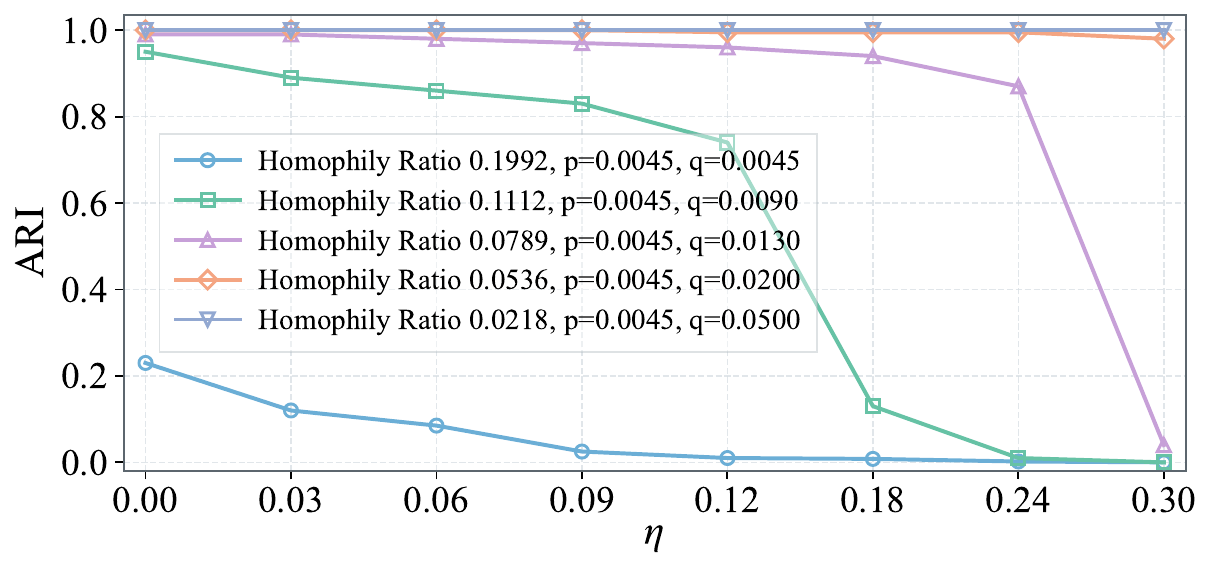}
\caption{Sensitivity of clustering performance to three key parameters $p,q,\eta$ on DSBM graphs. The top panel varies the within-cluster probability $p$ while fixing $q=0.0045$, producing graphs with increasing homophily ratios. The bottom panel varies the between-cluster probability $q$ while fixing $p=0.0045$, producing graphs with decreasing homophily ratios.}\label{fig:p_q}
\end{figure}


\begin{figure*}[t!]
\setlength{\abovecaptionskip}{0.cm}
\setlength{\belowcaptionskip}{-0.cm}
\begin{center}
\subfigure[$\eta=0$, $p=0.0045$, $q=0.0045$]{\includegraphics[width=0.28\textwidth]{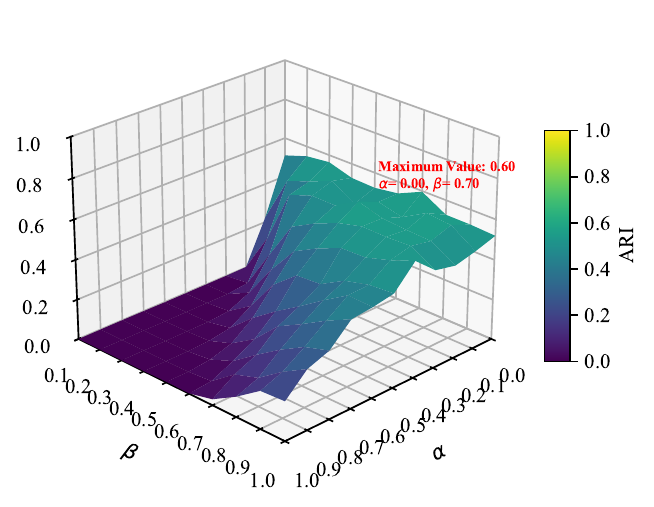}}
\subfigure[$\eta=0$, $p=0.0045$, $q=0.05$]{\includegraphics[width=0.28\textwidth]{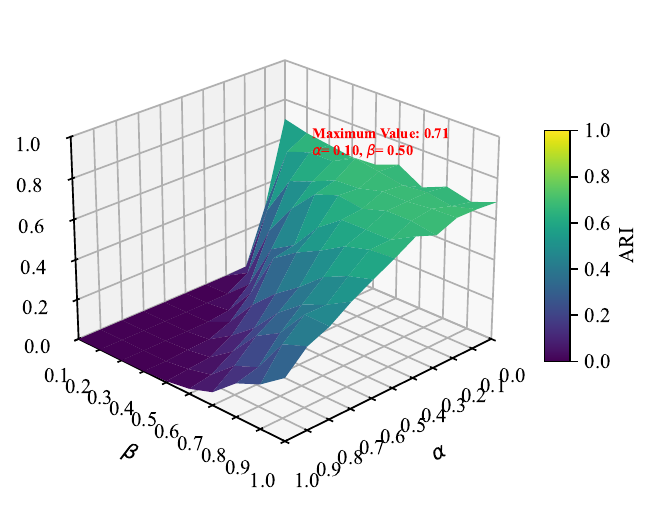}}
\subfigure[$\eta=0$, $p=0.05$, $q=0.0045$]{\includegraphics[width=0.28\textwidth]{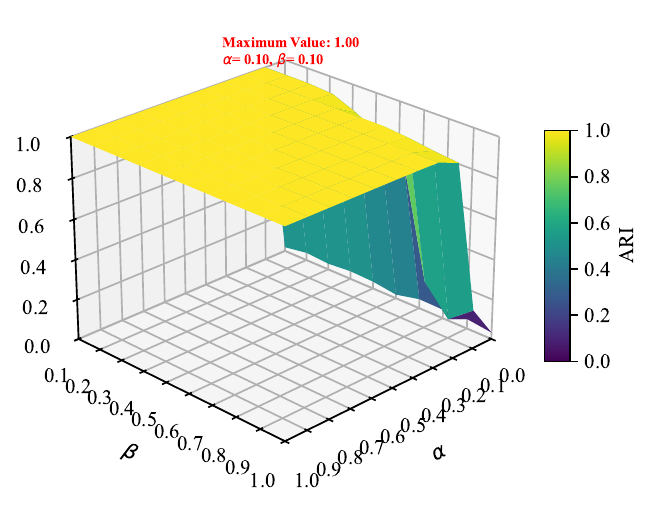}}
\caption{Investigating the effect of $\alpha$ and $\beta$: Model parameter sensitivity on different directed graph structures.}
\label{fig:alpha_beta}
\end{center}
\end{figure*}
\begin{table*}[htpb!]
\caption{Clustering algorithms on the Wisconsin dataset: a comparative analysis of modularity ($\mathcal{M}$), and F-measure ($\mathcal{F}$). The best-performing model is highlighted in \textbf{bold}, and the second-best is marked with an \underline{underline}.}\label{Table:Wisconsin}
\vspace*{1mm}
\centering
\scalebox{0.90}[0.90]{\begin{tabular}{llcccccccccc}
\toprule
$\mathcal{M}$& \textbf{Trains (\%)}  & \textbf{0 (USL)}  & \textbf{10}  & \textbf{20}  & \textbf{30}  & \textbf{40}  & \textbf{50}  & \textbf{60}  & \textbf{70}  & \textbf{80}  & \textbf{90}  \\ \midrule
& \textbf{Level 2 (10)} & 0.2262 & 0.2505 & 0.2484 & 0.1880 & 0.1549 & 0.0923 & 0.0933 & 0.0327 & 0.0593 & 0.0330  \\
\multirow{-2}{*}{\textbf{Bi-Sym}} & \textbf{Level 1 (5)} & 0.1524 & 0.0971 & 0.0848 & \textbf{0.3868} & 0.1712 & \textbf{0.2787} & 0.1995 & 0.0940 & 0.2439 & 0.0071  \\
& \textbf{Level 2 (10)} & 0.5171 & \textbf{0.5031} & \underline{0.3864} & 0.1777 & \textbf{0.2956} & 0.1919 & 0.0829 & 0.0299 & 0.0415 & 0.0048  \\
\multirow{-2}{*}{\textbf{DD-Sym}} & \textbf{Level 1 (5)} & \textbf{0.5305} & 0.2839 & 0.3454 & 0.2318 & 0.0751 & 0.0999 & 0.2017 & 0.1596 & 0.2086 & \textbf{0.1823}  \\
& \textbf{Level 2 (10)} & \underline{0.5192} & 0.3690 & \textbf{0.5065} & \underline{0.3297} & 0.2122 & 0.1023 & 0.1046 & 0.0007 & 0.0012 & 0.0421  \\
\multirow{-2}{*}{\textbf{DI-SIM}} & \textbf{Level 1 (5)} & 0.1281 & 0.3061 & 0.1822 & 0.3266 & 0.1528 & 0.1686 & \underline{0.2333} & \underline{0.1802} & \textbf{0.3917} & 0.0638  \\
& \textbf{Level 2 (10)} & 0.1968 & 0.1769 & 0.1675 & 0.1613 & 0.1050 & 0.0622 & 0.0791 & 0.0302 & 0.0554 & 0.0340  \\
\multirow{-2}{*}{\textbf{Herm}} & \textbf{Level 1 (5)} & 0.0635 & 0.0205 & 0.1077 & 0.1328 & 0.2164 & 0.1179 & \textbf{0.2759} & 0.1696 & 0.1617 & \underline{0.1540}  \\
& \textbf{Level 2 (10)} & 0.1968 & 0.1769 & 0.1675 & 0.1613 & 0.1050 & 0.0622 & 0.0791 & 0.0302 & 0.0554 & 0.0340  \\
\multirow{-2}{*}{\textbf{Skew}} & \textbf{Level 1 (5)} & 0.0116 & 0.0303 & 0.1265 & 0.1124 & 0.1933 & \underline{0.2340} & 0.1506 & 0.0625 & 0.1497 & 0.0250  \\
\midrule
& \textbf{Level 2 (10)} & 0.2478 & \underline{0.4298} & 0.2378 & 0.1794 & 0.0892 & 0.2086 & 0.0993 & 0.0565 & 0.0075 & 0.0310  \\
\multirow{-2}{*}{\textbf{\myproj}} & \textbf{Level 1 (5)} & 0.4169 & 0.1416 & 0.1954 & 0.1938 & \underline{0.2403} & 0.1024 & 0.1101 & \textbf{0.2356} & \underline{0.2589} & 0.0873  \\
\toprule
$\mathcal{F}$& \textbf{Trains (\%)}  & \textbf{0 (USL)}  & \textbf{10}  & \textbf{20}  & \textbf{30}  & \textbf{40}  & \textbf{50}  & \textbf{60}  & \textbf{70}  & \textbf{80}  & \textbf{90}  \\ \midrule
& \textbf{Level 2 (10)} & 0.1718 & 0.1847 & 0.2339 & 0.3744 & 0.3522 & 0.4259 & 0.6316 & 0.6682 & 0.7444 & 0.8856  \\
\multirow{-2}{*}{\textbf{Bi-Sym}} & \textbf{Level 1 (5)} & 0.2698 & \textbf{0.5105} & 0.1763 & 0.2946 & 0.5118 & 0.6117 & 0.7082 & 0.7673 & 0.8487 & 0.9021  \\
& \textbf{Level 2 (10)} & 0.0571 & 0.3820 & 0.1490 & 0.3276 & 0.4825 & 0.4485 & 0.5995 & 0.6905 & 0.7514 & 0.8688  \\
\multirow{-2}{*}{\textbf{DD-Sym}} & \textbf{Level 1 (5)} & 0.3835 & 0.1299 & 0.2125 & 0.3677 & 0.5833 & 0.5866 & 0.6478 & 0.7470 & 0.8169 & 0.8989  \\
& \textbf{Level 2 (10)} & 0.0650 & 0.4695 & 0.4715 & 0.2089 & 0.5614 & 0.4042 & 0.5201 & 0.6869 & 0.7465 & 0.9269  \\
\multirow{-2}{*}{\textbf{DI-SIM}} & \textbf{Level 1 (5)} & 0.1267 & \underline{0.4978} & \underline{0.5509} & \underline{0.5017} & 0.6362 & 0.5877 & \underline{0.7396} & 0.6890 & \textbf{0.8859} & \textbf{0.9477}  \\
& \textbf{Level 2 (10)} & 0.1336 & 0.2222 & 0.1930 & 0.2501 & 0.3164 & 0.5370 & 0.6029 & 0.6283 & 0.7458 & 0.8889  \\
\multirow{-2}{*}{\textbf{Herm}} & \textbf{Level 1 (5)} & 0.3696 & 0.3101 & 0.2783 & 0.4236 & \underline{0.6599} & \underline{0.6194} & 0.6584 & \textbf{0.8237} & 0.8311 & 0.9016  \\
& \textbf{Level 2 (10)} & 0.1336 & 0.2222 & 0.1930 & 0.2501 & 0.3164 & 0.5370 & 0.6029 & 0.6283 & 0.7458 & 0.8889  \\
\multirow{-2}{*}{\textbf{Skew}} & \textbf{Level 1 (5)} & \underline{0.3904} & 0.3193 & 0.3540 & 0.4255 & 0.4989 & 0.5688 & 0.7009 & \underline{0.8116} & \underline{0.8535} & 0.9111  \\
\midrule
& \textbf{Level 2 (10)} & 0.0493 & 0.3010 & 0.1377 & 0.3245 & 0.4856 & 0.4323 & 0.6300 & 0.7305 & 0.7672 & 0.8982  \\
\multirow{-2}{*}{\textbf{\myproj}} & \textbf{Level 1 (5)} & \textbf{0.5836} & 0.4923 & \textbf{0.5980} & \textbf{0.6198} & \textbf{0.6785} & \textbf{0.7141} & \textbf{0.7940} & 0.7681 & 0.8408 & \underline{0.9358}  \\
\bottomrule
\end{tabular}}
\end{table*}

\noindent\textbf{Observation 3: The appropriate balance between $\alpha$ and $\beta$ depends on graph structure.}
Fig.~\ref{fig:alpha_beta} studies the sensitivity of the unnormalized \myproj~to $\alpha$ and $\beta$, which control the symmetric and asymmetric components of the generalized Hermitian matrix, respectively. When $p=q$, the model generally benefits from a stronger asymmetric component, although the overall performance remains limited by the weak cluster separation. A similar tendency is observed when $q\gg p$, where directional inter-cluster interactions provide the main clustering information. In contrast, when $p\gg q$, larger values of $\alpha$ or a more balanced combination of $\alpha$ and $\beta$ are preferable because the within-cluster connectivity itself becomes highly informative. Thus, no single parameter combination is optimal for all digraphs; $\alpha$ and $\beta$ should be
balanced according to the relative importance of connectivity and directionality.

\subsection{Results for Real-World Data} \label{sec:real_world}
We evaluate the competing methods on both labeled and unlabeled real-world digraphs. For labeled datasets, Modularity ($\mathcal{M}$) and F-measure ($\mathcal{F}$) are reported for labeled-node ratios ranging from $0\%$ to $90\%$, where $0\%$ corresponds to the unsupervised setting (ULS). Results are evaluated at two hierarchical resolutions: Level~1 contains five coarse clusters, while Level~2 contains ten finer clusters. The two metrics provide complementary views: $\mathcal{M}$ measures structural cohesion, whereas $\mathcal{F}$ evaluates agreement with the ground-truth classes.

Table~\ref{Table:Wisconsin} reports the results on Wisconsin. At Level~1, \myproj~achieves the best $\mathcal{F}$ at labeled-node ratios of $0\%$, $20\%$, $30\%$, $40\%$, $50\%$, and $60\%$, and obtains the second-best results at $10\%$ and $90\%$. Its score increases from $0.5836$ without labels to $0.9358$ at the $90\%$ setting, showing that the inherited label constraints effectively improve the class consistency of the coarse partition. At Level~2, \myproj~becomes increasingly competitive as more labels are provided, attaining the best $\mathcal{F}$ at $70\%$ and $80\%$
and the second-best result at $60\%$ and $90\%$. This indicates that stronger supervision is particularly useful for distinguishing finer substructures. The modularity results exhibit a less regular trend because structurally cohesive communities do not necessarily coincide with the ground-truth classes. Nevertheless, \myproj~achieves the highest $\mathcal{M}$ at the $50\%$ and $70\%$ settings of Level~2 and at the $70\%$ setting of Level~1, while remaining competitive in several other cases. The results therefore
show that the two hierarchy levels capture complementary properties: Level~1 is generally more consistent with the class partition, whereas Level~2 provides a finer description of the graph structure.

Table~\ref{Table:Unlabeled} reports ARI ($\mathcal{A}$) on the unlabeled Telegram and Blog networks. Since no reference labels are available, ARI is used here to measure the consistency of the partitions across repeated unsupervised runs rather than their agreement with ground-truth classes. \myproj~achieves the highest stability on Telegram with an ARI of $0.97$, followed by DD-Sym at $0.95$. On Blog, \myproj~and DD-Sym jointly obtain the best score of $0.88$. These results indicate that \myproj~produces highly repeatable partitions on both unlabeled networks.

Overall, \myproj~shows strong performance across unsupervised and
semi-supervised settings, particularly for class-aligned coarse partitions, high-supervision fine partitions, and clustering stability on unlabeled digraphs. Due to space limitations, the results on Cora, Squirrel, Cornell, and Texas are reported in \textit{Supplementary Material Part~B.2}.

\begin{table}[tbp]
\caption{Clustering algorithms on the two unlabeled datasets: a comparative analysis of ARI ($\mathcal{A}$). The reported ARI measures clustering stability across repeated unsupervised runs. The best results are shown in \textbf{bold}, and the second-best are \underline{underlined}.}\label{Table:Unlabeled}
\vspace*{1mm}
\centering
\scalebox{0.89}[0.89]{\begin{tabular}{c|ccccc|c}
\toprule
\textbf{Datasets} & \textbf{Bi-Sym} & \textbf{DD-Sym} & \textbf{DI-SIM} & \textbf{Herm} & \textbf{Skew} & \textbf{\myproj} \\ \midrule
telegram & 0.91 & \underline{0.95} & 0.47 & 0.89 & 0.89 & \textbf{0.97} \\
blog & 0.82 & \textbf{0.88} & 0.64 & \underline{0.84} & \underline{0.84} & \textbf{0.88} \\\bottomrule
\end{tabular}}
\end{table}

\section{Stability of spline quasi-interpolation with doubling measures}
\label{sec:splines}
In this section, we give a brief introduction to B-splines and the quasi-interpolation operator. We then present the  stability results for the quasi-interpolatory representation of a spline, which plays a key role for our digraph signal processing in Section~\ref{sec:digraph_signal_processing}.

\subsection{Notation}\label{bhag:notation}
We define $x_+=\max(x,0)$ for $x\in\RR$, and for $r>0$, $x_+^r$ by $(x_+)^r$. If $I$ is a real interval, $|I|$ denotes the length of $I$.
If $I=[c-\ell, c+\ell]$, $c\in\RR$, $\ell>0$, and $\alpha>0$, we denote 
\be\label{eq:extendedint}
\alpha\ast I =[c-\alpha\ell, c+\alpha\ell].
\ee
For integer $k\ge 1$,  the space of all polynomials of degree $<k$ is denoted by $\Pi_k$ (so that the dimension of $\Pi_k$ is $k$).

For any non-empty set $K$, (possibly signed) measure $\nu$ on $K$, and $\nu$-measurable function $f :K\to\RR$, we define
\be\label{eq:normdef1}
\|f\|_{\nu,K;p}=\begin{cases}
\disp \left\{\int_K |f(t)|^pd|\nu|(t)\right\}^{1/p}, &\mbox{ if $1\le p <\infty$},\\
\disp |\nu|-\esssup_{t\in K}|f(t)|, &\mbox{if $p=\infty$,}
\end{cases}
\ee
where $|\nu|$ denotes the total variation measure of $\nu$.
When $\nu$ is the Lebesgue measure or $p=\infty$, we will drop the mention of $\nu$. 
Likewise, we will drop the mention of $K$ if $K=\RR$.
If $K$ is a topological space, \textbf{support} of a measure $\nu$, denoted by $\mathsf{supp}(\nu)$ is defined to be the set of all points $x\in K$ such that $|\nu|(B)>0$ for every neighborhood $B$ of $x$. 

For any sequence $\bs a$ (finite or infinite), we define
\be\label{eq:normdef2}
\|\bs a\|_p=\begin{cases}
\disp\left\{\sum_j |a_j|^p\right\}^{1/p}, &\mbox{ if $1\le p <\infty$},\\
\disp \sup_j |a_j|, &\mbox{if $p=\infty$.}
\end{cases}
\ee

In the sequel, $m\ge 1$ is a fixed integer, and $\mu$ is a probability measure supported on $[0,1]$. 
The notation $C\ls D$ denotes $C\le cD$, where $c$ is a generic positive constant depending only on $\mu$, $m$, and the norms.
The notation $C\gs D$ means $D\ls C$ and $C\sim D$ means $C\ls D\ls C$.



\subsection{Splines}\label{bhag:splines}

The material in this section is based on \cite{de1978practical, kunoth2018foundations}, where the notation is different.

For real numbers $t_1\le \cdots\le t_m$ and a function $g$ defined at these points, there exists a unique $P\in \Pi_m$ such that $P(t_k)=g(t_k)$ for $k=1,\ldots, m$, where a derivative interpolation is understood when a knot $t_i$ is repeated. The leading coefficient of this $P$ is denoted by $[t_1,\ldots,t_m]g$.

Let  $\xi_1=0$ and $\xi_{n+1}=1$ and
\be\label{eq:knotsequence}
\bs\xi=(\xi_1=\cdots=\xi_m<\xi_{m+1}<\cdots<\xi_{n+1}=\cdots=\xi_{n+m})
\ee
be a non-decreasing sequence on $[0,1]$. Although we will consider a nested sequence of such sequences later on, we consider $\bs\xi$ to be a fixed sequence for the time being.
The notation means that apart from $\xi_1,\ldots, \xi_{m-1}$, and $\xi_{n+2},\ldots, \xi_{n+m}$, all other knots are distinct.
\begin{definition}\label{def:bspline}
The $B$-spline $B_j$ based on these knots is defined by
\be\label{eq:bsplinedef}
B_j(x)=(\xi_{j+m}-\xi_j)[\xi_j,\ldots,\xi_{j+m}](\xi_j-x)_+^{m-1}, \qquad x\in\RR.
\ee
Let
\be\label{eq:splinespace}
\mathbb{S}=\mathbb{S}_{m,\bs\xi}=\mathsf{span}\{B_j\}_{j=1}^{n+1}.
\ee
A member of $\mathbb{S}_{m,\bs\xi}$ is called a \textbf{spline} of order $m$ with the knot sequence $\bm \xi$. 
\end{definition}

The following proposition summarizes some of the properties of $B$-splines.
\begin{prop}\label{prop:bspline_basic}
{\rm (a)} The restriction $B_{i;j}$ of $B_j$ to any $[\xi_i,\xi_{i+1}]$ is in $\Pi_m$.
For any $i$, the restrictions $B_{i;j}$, $j=i-m+1,\ldots,i$, are a basis for $\Pi_m$.\\
{\rm (b)} $B_j(x)\ge 0$ for all $x\in\RR$. $B_j(x)=0$ if $x\not\in [\xi_j,\xi_{j+m}]$ and $B_j(x)>0$ if $x\in (\xi_j,\xi_{j+m})$.
In particular, for any $j$, only $B_{j-m+1}, \ldots, B_j$ are non-zero on $[\xi_j,\xi_{j+1}]$.\\
{\rm (c)} $\sum_j B_j(x)=1$ for all $x\in [\xi_m,\xi_n]$.\\
The splines $B_1,\ldots, B_n$ are linearly independent on $[\xi_m, \xi_{n+1}]$.\\
\end{prop}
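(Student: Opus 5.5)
The plan is to derive every claim from the divided-difference definition \eqref{eq:bsplinedef}, using three standard tools: the truncated-power representation, Leibniz's rule for divided differences, and Marsden's identity. The knots in \eqref{eq:knotsequence} are distinct apart from the $m$-fold end knots, so all interior divided differences are ordinary. At the ends, the divided differences are understood as limits of derivatives.

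\textbf{Part (a).} For fixed $x$, $[\xi_j,\ldots,\xi_{j+m}](\cdot-x)_+^{m-1}$ is a linear combination of the values $(\xi_k-x)_+^{m-1}$, $k=j,\ldots,j+m$, together with derivatives in $x$ at the repeated end knots. On any interval $[\xi_i,\xi_{i+1}]$, each such term is either identically zero or equal to $(\xi_k-x)^{m-1}$. Hence $B_{i;j}\in\Pi_m$. I would also record the following consequence of the definition: if $x<\xi_j$, the function $(\cdot-x)^{m-1}$ is a polynomial of degree $m-1$ on all of the knots involved. Its $m$-th divided difference is therefore zero. If $x>\xi_{j+m}$, the function is identically zero on the knots. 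This gives the support statement in part (b).

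The basis statement in (a) I will postpone until Marsden's identity is available:
\[
(y-x)^{m-1}=\sum_j \psi_{j}(y)B_j(x),\qquad \psi_j(y)=\prod_{r=1}^{m-1}(y-\xi_{j+r}),\quad x\in[\xi_m,\xi_{n+1}].
\]
I would prove this identity by induction on $m$, using the de Boor--Cox recurrence
\[
B_{j,m}(x)=\frac{x-\xi_j}{\xi_{j+m-1}-\xi_j}B_{j,m-1}(x)+\frac{\xi_{j+m}-x}{\xi_{j+m}-\xi_{j+1}}B_{j+1,m-1}(x).
\]
The recurrence itself follows from Leibniz's rule applied to the product $(t-x)(t-x)_+^{m-2}$.

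\textbf{Part (b).} Positivity on $(\xi_j,\xi_{j+m})$ follows by induction on $m$ from the recurrence. On that interval both coefficients are nonnegative, and at least one of the two lower-order B-splines is positive, with base case $B_{j,1}=\chi_{[\xi_j,\xi_{j+1})}$. The assertion that only $B_{j-m+1},\ldots,B_j$ are nonzero on $[\xi_j,\xi_{j+1}]$ is then just the support statement.

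\textbf{Part (c).} Compare coefficients of $y^{m-1}$ in Marsden's identity. This gives $\sum_j B_j(x)=1$.

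\textbf{Basis and linear independence.} Marsden's identity holds for all $y$. Taking derivatives in $y$ therefore shows that every monomial $x^s$, $s<m$, lies in the span of the $m$ functions $B_{i;j}$, $j=i-m+1,\ldots,i$, on $[\xi_i,\xi_{i+1}]$. The space $\Pi_m$ has dimension $m$, so these restrictions form a basis. Linear independence on $[\xi_m,\xi_{n+1}]$ then follows from a local argument. Suppose $\sum_j c_jB_j=0$ there. Restricting to $[\xi_i,\xi_{i+1}]$ and using the basis property forces $c_{i-m+1}=\cdots=c_i=0$. Letting $i$ run over $m,\ldots,n$ covers every index.

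\textbf{Main obstacle.} I expect the hardest step to be deriving the recurrence, and hence Marsden's identity, with careful handling of the coincident end knots. My first approach would be to perturb the end knots so that they become distinct, apply the simple-knot argument, and pass to the limit using the continuity of divided differences in the knots. The only alternative I see is to cite \cite{de1978practical}.
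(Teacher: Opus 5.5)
Your proposal is correct. The paper gives no proof of this proposition; it only points to \cite{de1978practical, kunoth2018foundations}. Your argument is the standard divided-difference route found in de Boor: truncated powers give local polynomiality and support, Leibniz's rule gives the de Boor--Cox recurrence, and Marsden's identity gives the partition of unity and the local basis property, from which linear independence follows. You also read the misprinted definition \eqref{eq:bsplinedef} correctly as a divided difference of $(\cdot-x)_+^{m-1}$, and you restrict the basis claim to the nondegenerate intervals $m\le i\le n$, which the statement tacitly requires.
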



\begin{definition}\label{def:quasi_interp}
An operator $\mathcal{Q}$ of the form
\be\label{eq:quasi_interp_gen}
\mathcal{Q}(f)=\sum_j \lambda_j(f)B_j
\ee
is called a \textbf{quasi-interpolation operator} if \\
{\rm (a)} Each $\lambda_j$ is a linear functional supported on $[\xi_j, \xi_{j+m}]$.\\
{\rm (b)} For every $P\in\Pi_m$, $\mathcal{Q}(P)=P$.
\end{definition}

\begin{prop}\label{prop:spline_reprod}
{\rm (a)} If $\mathcal{Q}$ is a quasi-interpolation operator, where each $\lambda_j$ is supported on only one of the intervals $[\xi_i,\xi_{i+1}]$ for some $i$, has the property that
\be\label{eq:spline_reprod}
\mathcal{Q}(S)=S, \qquad S\in \mathbb{S}.
\ee
In particular, if $S=\sum_j c_jB_j$, then $c_j=\lambda_j(S)$.\\
{\rm (b)} There exists a quasi-interpolatory operator $\mathcal{Q}^*(f)=\sum_j \Lambda_j(f)B_j$, which satisfies the conditions in part (a), so that \eqref{eq:spline_reprod} holds. 
Moreover,
\be\label{eq:specialdeboor}
|\Lambda_j(f)| \ls \|f\|_{[\xi_j,\xi_{j+m}]}, \qquad j=1,\ldots, n.
\ee
\end{prop}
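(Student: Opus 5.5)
Part (a) is a linear-algebra argument localized to one knot interval, and part (b) is the classical de Boor--Fix construction of dual functionals.

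\textbf{Part (a).} Fix $j$, and let $[\xi_i,\xi_{i+1}]\subseteq[\xi_j,\xi_{j+m}]$ be the single interval on which $\lambda_j$ is supported. Take any $S=\sum_k c_kB_k\in\mathbb{S}$.
\begin{itemize}
\item On $[\xi_i,\xi_{i+1}]$, Proposition~\ref{prop:bspline_basic}(a) shows that $S$ coincides with a polynomial $P\in\Pi_m$.
\item Because $\lambda_j$ is supported on that interval, $\lambda_j(S)=\lambda_j(P)$.
\item Polynomial reproduction gives $P=\mathcal{Q}(P)=\sum_k\lambda_k(P)B_k$ on $\mathbb{R}$, in particular on $[\xi_i,\xi_{i+1}]$.
\item On $[\xi_i,\xi_{i+1}]$ we also have $P=S=\sum_k c_kB_k$. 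By Proposition~\ref{prop:bspline_basic}(b), only $B_{i-m+1},\dots,B_i$ are nonzero there. By Proposition~\ref{prop:bspline_basic}(a), their restrictions form a basis of $\Pi_m$. Comparing the two expansions therefore gives $\lambda_k(P)=c_k$ for $k=i-m+1,\dots,i$.
\item Since $\xi_j\le\xi_i$ and $\xi_{i+1}\le \xi_{j+m}$, the index $j$ lies in $\{i-m+1,\dots,i\}$, up to the degenerate repeated-knot cases. Hence $\lambda_j(S)=\lambda_j(P)=c_j$.
\end{itemize}
Summing over $j$ yields $\mathcal{Q}(S)=\sum_j c_jB_j=S$.

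\textbf{Part (b).} For each $j$, choose one nondegenerate interval $I_j=[\xi_i,\xi_{i+1}]\subseteq[\xi_j,\xi_{j+m}]$ of maximal length among those inside the support; this works because at most $m$ intervals lie in the support. Map $I_j$ affinely onto $[0,1]$. On $I_j$, the restrictions $B_{i-m+1},\dots,B_i$ form a basis of $\Pi_m$, so there are unique dual functionals $\Lambda_j$ on $\Pi_m$ with $\Lambda_j(B_k|_{I_j})=\delta_{jk}$.
\begin{itemize}
\item Extend $\Lambda_j$ to functions $f$ by first taking the $L^\infty$-best approximation (or the Lagrange interpolant at $m$ fixed equispaced points) $P_{I_j}f\in\Pi_m$ of $f$ on $I_j$, and then setting $\Lambda_j(f)=\Lambda_j(P_{I_j}f)$.
\item This $\Lambda_j$ is supported on $I_j$.
\item It reproduces polynomials: for $P\in\Pi_m$, $P_{I_j}P=P$, and $P|_{I_j}=\sum_k\Lambda_k(P)B_k$ by duality.
\item Part (a) then applies and gives \eqref{eq:spline_reprod}.
\end{itemize}

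\textbf{The bound \eqref{eq:specialdeboor} is the main obstacle.} We need $|\Lambda_j(f)|\lesssim\|f\|_{I_j}$ with a constant independent of the knot geometry.

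Interpolation at fixed relative points is bounded by $m$ only, so $\|P_{I_j}f\|_{I_j}\lesssim\|f\|_{I_j}$. It therefore suffices to show that the coefficient functional $\Lambda_j$ on $\Pi_m$ satisfies $|\Lambda_j(P)|\lesssim\|P\|_{I_j}$.

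Write $P=\sum_k a_kB_k$ on $\mathbb{R}$, which is possible because $\mathcal{Q}$ reproduces $\Pi_m$. I would use Marsden's identity,
\[
(x-y)^{m-1}=\sum_k \psi_k(y)B_k(x),\qquad \psi_k(y)=\prod_{r=1}^{m-1}(\xi_{k+r}-y),
\]
to get the explicit de Boor--Fix formula
\[
a_k=\frac{1}{(m-1)!}\sum_{r=0}^{m-1}(-1)^{m-1-r}\psi_k^{(m-1-r)}(\tau)\,P^{(r)}(\tau)\quad\text{with }\tau\in I_j .
\]
Bounding this requires two estimates.
\begin{itemize}
\item Markov's inequality on $I_j$: $|P^{(r)}(\tau)|\lesssim |I_j|^{-r}\|P\|_{I_j}$.
\item A bound on $|\psi_k^{(m-1-r)}(\tau)|$ by a sum of products of $r$ distances $|\xi_{k+s}-\tau|\le|\xi_{j+m}-\xi_j|\le m|I_j|$. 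This uses the maximal choice of $I_j$.
\end{itemize}
The powers of $|I_j|$ then cancel, giving a constant that depends only on $m$. This cancellation, made possible by choosing the longest subinterval, is exactly where I expect the care to be needed. It is what keeps the constant independent of nonuniform knots.
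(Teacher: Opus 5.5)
The paper gives no proof of this proposition; it imports it from \cite{de1978practical,kunoth2018foundations}. Your argument is the standard one from those sources. For (a) you use local linear independence of $B_{i-m+1},\dots,B_i$ on one knot interval. For (b) you use the de Boor--Fix dual functional evaluated in the longest knot interval of $[\xi_j,\xi_{j+m}]$, where Markov's inequality cancels the powers of $|I_j|$. Your estimate in (b) is right, with a constant depending only on $m$. This is also the only part the paper actually uses, since it applies $\Lambda_j$ only to polynomials ($p_{k,I_{j^*}}$ and $R_x$).

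Part (a) is fine as written. Your hedge about repeated knots is unnecessary: $\xi_i<\xi_{i+1}$, together with $\xi_j\le\xi_i$ and $\xi_{i+1}\le\xi_{j+m}$, already forces $i-m+1\le j\le i$. The identity $\lambda_j(S)=\lambda_j(P)$ relies on the usual convention that a functional supported on $[\xi_i,\xi_{i+1}]$ acts on the polynomial piece of $S$ there.

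Two points in (b) need repair.

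\textbf{Linearity of the extension.} The $L^\infty$-best approximation is not linear in $f$, so it cannot be used to extend $\Lambda_j$; the definition of a quasi-interpolation operator requires linear functionals. Keep only the Lagrange interpolant, taken at interior points of $I_j$ to avoid endpoint ambiguity. Note that this requires point values, so the bound is in the sup norm. If you want it in the essential-sup norm, use the $L^2(I_j)$-orthogonal projection onto $\Pi_m$ instead.

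\textbf{Polynomial reproduction of $\mathcal{Q}^*$.} This does not follow ``by duality''. Duality on $I_j$ only identifies the coefficients of $P|_{I_j}$ with the local dual functionals of that interval. But $\Lambda_k$ for $k\neq j$ is computed on a different interval $I_k$, so you need to know that the local coefficients do not depend on which interval is used. Your later justification of $P=\sum_k a_kB_k$ ``because $\mathcal{Q}$ reproduces $\Pi_m$'' is circular. The fix is to state Marsden's identity first. Since $\{(\cdot-y)^{m-1}\}_{y}$ spans $\Pi_m$, it gives a global expansion $P=\sum_k a_kB_k$ on $[\xi_m,\xi_{n+1}]$. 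This holds there, not on $\mathbb{R}$, because the B-splines vanish outside $[0,1]$. Local linear independence on each $I_k$ then gives $\Lambda_k(P)=a_k$ for every $k$. Hence $\mathcal{Q}^*(P)=P$, and part (a) applies.
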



\subsection{Orthogonal polynomials}\label{bhag:orthopoly}
In the remainder of this section, let $\mu$ be a probability measure on $[\xi_m, \xi_{n+1}]\subseteq [0,1]$. For an interval $I\subset [0,1]$, let $\mu_I$ be the restriction of $\mu$ to $I$, normalized again to be a probability measure.
If there are  at least $m+1$ points in $\mathsf{supp}(\mu)\cap I$, then there exists a unique basis $\{p_{k,I}\}_{k=0}^{m-1}$ for $\Pi_m$ such that each $p_{k,I}$ has a positive leading coefficient and
\be\label{eq:orthonormality}
\int_I p_{k,I}p_{j,I}d\mu_I=\frac{1}{\mu(I)}\int_I p_{k,I}p_{j,I}d\mu=\begin{cases}
1, &\mbox{if }j=k,\\
0, &\mbox{otherwise.}
\end{cases}
\ee
We say that $\mu$ is a \textbf{doubling measure} if for any $x\in [0,1]$, $0<y<1/2$,
\be\label{eq:doubling}
\mu\left([x-2y, x+2y]\right)\ls \mu\left([x-y, x+y]\right).
\ee
\begin{lemma}\label{lemma:christbd}
Let $I\subseteq [0,1]$ be an interval, $\mu$ be a doubling measure with at least $m+1$ points in $\mathsf{supp}(\mu)\cap I$. 
Then
\be\label{eq:christbd}
\sum_{k=0}^{m-1}p_{k,I}^2(x)\ls 1, \qquad x\in I.
\ee
\end{lemma}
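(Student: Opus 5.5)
The plan is to rewrite the left-hand side of \eqref{eq:christbd} as a reciprocal Christoffel function. This reduces the lemma to a Nikolskii-type inequality, $\|P\|_{I}^2\ls \int_I P^2\,d\mu_I$ for $P\in\Pi_m$, which I will prove using the Markov inequality together with the doubling property \eqref{eq:doubling}.

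\textbf{Step 1 (extremal characterization).} The hypothesis that $\mathsf{supp}(\mu)\cap I$ contains at least $m+1$ points makes $\langle P,Q\rangle=\int_I PQ\,d\mu_I$ a genuine inner product on $\Pi_m$. Hence $\{p_{k,I}\}$ is an orthonormal basis. Fix $x\in I$ and write $P=\sum_k c_k p_{k,I}$. By the Cauchy--Schwarz inequality,
\[
P(x)^2\le \Big(\sum_k c_k^2\Big)\Big(\sum_k p_{k,I}^2(x)\Big),
\]
with equality when $c_k=p_{k,I}(x)$. Therefore
\[
\sum_{k=0}^{m-1}p_{k,I}^2(x)=\sup_{0\ne P\in\Pi_m}\frac{P(x)^2}{\int_I P^2\,d\mu_I}.
\]
It then suffices to show that $\|P\|_{I}^2\ls \frac{1}{\mu(I)}\int_I P^2\,d\mu$ for every $P\in\Pi_m$, with a constant depending only on $m$ and $\mu$.

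\textbf{Step 2 (a large set where $|P|$ is big).} Normalize so that $\|P\|_{I}=1=|P(x_0)|$ for some $x_0\in I$. The Markov inequality on $I$ gives $\|P'\|_{I}\le 2(m-1)^2/|I|$. Consequently $|P|\ge 1/2$ on every subinterval of $I$ that contains $x_0$ and has length at most $|I|/(4(m-1)^2)$ (for $m=1$ everything is trivial). Choose such a subinterval $J=[c-\delta,c+\delta]\subseteq I$ with $\delta\sim |I|/m^2$; this is possible by sliding $J$ to stay inside $I$. Then
\[
\int_I P^2\,d\mu\ge \tfrac14\,\mu(J).
\]

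\textbf{Step 3 (doubling comparison; the main point).} Observe that $I\subseteq[c-|I|,c+|I|]=2^s\ast J$ with $2^s\delta\ge |I|$, so $s\sim \log_2(m^2)$ is bounded in terms of $m$ alone. Since $c\in[0,1]$, iterating \eqref{eq:doubling} $s$ times yields $\mu(I)\le C_\mu^{s}\mu(J)\ls \mu(J)$. Combining this with Step 2 gives $\int_I P^2\,d\mu\gs \mu(I)=\mu(I)\|P\|_I^2$, which is exactly the inequality required in Step 1.

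The only delicate points are bookkeeping ones. First, the doubling hypothesis only allows radii $y<1/2$. For the last one or two doublings, where the radius would reach $1/2$ or more, I would instead use $\mu([0,1])=1$ together with one application of doubling at radius just below $1/2$; this costs only a fixed constant. Second, $\mu(J)>0$ follows from the same chain of inequalities, because $\mu(I)>0$. The resulting constant depends only on $m$ and the doubling constant, as the $\ls$ convention requires.
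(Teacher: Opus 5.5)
Your proposal is correct and follows essentially the same route as the paper. Both arguments use the Christoffel-function extremal characterization, then Markov's inequality to find a subinterval $J$ of length comparable to $|I|/m^2$ on which $|P|\ge\frac12\|P\|_{\infty,I}$, and finally the doubling property to get $\mu(J)\gtrsim\mu(I)$. Your version is tidier than the paper's write-up in two ways: the paper's $I'=((16m^2)^{-1}\ast I)\cap I$ is centered at the midpoint of $I$ rather than at the maximizer $x_0$, and its extremal identity is stated as a maximum of $\int_I|P|^2\,d\mu_I$ over $P(x)=1$, whereas you correctly use the supremum of $P(x)^2/\int_I P^2\,d\mu_I$. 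One small point to fix is your radius bookkeeping: a single doubling step centered at $c$ with radius just below $1/2$ need not cover $[0,1]$ when $c$ is near an endpoint, so recenter at $1/2$ for one or two extra doublings, which still costs only a constant depending on $\mu$.
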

The proof of this lemma depends upon two well-known facts, summarized in the following proposition.
Part (a) is known as Markov inequality (cf., \cite[Vol. 1, Chapter~VI.\S 6]{natanson}). Part (b) is a simple consequence of the Schwarz inequality and Parseval identity (cf. \cite[Chapter~I, Theorem~4.1]{freudbk}).
Part (c) is a simple consequence of a conformal mapping argument (cf. \cite[Chapter~4, formula~(2.10)]{devlorbk}.
\begin{prop}\label{prop:polyfact}
{\rm (a)} For $P\in\Pi_m$, $-\infty<a<b<\infty$, we have
\be\label{eq:markov}
\|P'\|_{\infty,[a,b]}\le \frac{2m^2}{b-a}\|P\|_{\infty,[a,b]}.
\ee
{\rm (b)} For any $x\in \RR$,
\be\label{eq:freud_extremal}
\max_{\gattha{P\in\Pi_m}{P(x)=1}}\int |P(t)|^2d\mu_I(t)=\sum_{k=0}^{m-1}p_{k,I}^2(x).
\ee
{\rm (c)} If $I\subseteq J\subset \RR$ are compact intervals, then for any $P\in\Pi_m$,
\be\label{eq:walsh}
\|P\|_{\infty,I}\le \|P\|_{\infty, J}\ls (|J|/|I|)^m \|P\|_{\infty,I}.
\ee 
\end{prop}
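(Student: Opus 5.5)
We prove the three parts separately. Each reduces, by an affine change of variables, to a statement about polynomials on $[-1,1]$ or to finite-dimensional linear algebra in $\Pi_m$.

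\emph{Part (a).} Let $u(t)=(2t-a-b)/(b-a)$ map $[a,b]$ onto $[-1,1]$, and set $Q=P\circ u^{-1}\in\Pi_m$. Then $P'(t)=\frac{2}{b-a}Q'(u(t))$, and the sup norms of $P$ on $[a,b]$ and of $Q$ on $[-1,1]$ coincide. It therefore suffices to prove the classical Markov inequality on $[-1,1]$: $\|Q'\|_{\infty,[-1,1]}\le (m-1)^2\|Q\|_{\infty,[-1,1]}$ for $Q$ of degree $\le m-1$. Since $(m-1)^2\le m^2$, this gives (a).

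For the classical inequality we take the standard route, treating it as the step most appropriately cited from \cite{natanson}. First, Bernstein's inequality $|Q'(\cos\theta)|\sin\theta\le (m-1)\|Q\|_\infty$ follows from the trigonometric Bernstein inequality applied to $Q(\cos\theta)$. This bound controls $Q'$ away from the endpoints. Near the endpoints, we expand $Q'$ via Lagrange interpolation at the Chebyshev nodes and use $|T_{m-1}'(\pm1)|=(m-1)^2$, the value attained by the extremal polynomial $T_{m-1}$. Combining the interior and endpoint estimates gives the constant $(m-1)^2$.

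\emph{Part (b).} The formula as printed should be read as the extremal characterization of the Christoffel function:
\[
\max_{P\in\Pi_m,\;P\neq0}\frac{P(x)^2}{\int|P|^2d\mu_I}=\sum_{k=0}^{m-1}p_{k,I}^2(x).
\]
Equivalently, the minimum of $\int |P|^2\,d\mu_I$ over $P\in\Pi_m$ with $P(x)=1$ equals $\bigl(\sum_k p_{k,I}^2(x)\bigr)^{-1}$. The proof proceeds as follows.
\begin{enumerate}
\item Write $P=\sum_{k=0}^{m-1}c_kp_{k,I}$. This is possible because $\{p_{k,I}\}$ is a basis of $\Pi_m$.
\item By the orthonormality relations \eqref{eq:orthonormality} (Parseval), $\int |P|^2d\mu_I=\sum_k c_k^2$.
\item By the Schwarz inequality,
\[
P(x)^2=\Bigl(\sum_k c_kp_{k,I}(x)\Bigr)^2\le \Bigl(\sum_k c_k^2\Bigr)\Bigl(\sum_k p_{k,I}^2(x)\Bigr).
\]
This is the upper bound.
\item Equality holds for $c_k=p_{k,I}(x)$, i.e. for the reproducing kernel $P(t)=\sum_k p_{k,I}(x)p_{k,I}(t)$. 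Normalizing this $P$ to satisfy $P(x)=1$ gives the minimal form.
\end{enumerate}

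\emph{Part (c).} The first inequality is trivial, since $I\subseteq J$. For the second, let $u$ be the affine map of $I$ onto $[-1,1]$, and set $Q=P\circ u^{-1}$, so that $\|Q\|_{\infty,[-1,1]}=\|P\|_{\infty,I}$.

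\begin{enumerate}
\item Expand $Q=\sum_{k=0}^{m-1}a_kT_k$ in Chebyshev polynomials. Discrete orthogonality at the Chebyshev nodes expresses each $a_k$ as an average of values of $Q$ on $[-1,1]$ with bounded weights, so $|a_k|\le 2\|Q\|_{\infty,[-1,1]}$.
\item For $t\in J$, we have $|u(t)|\le 1+2|J|/|I|\le 3|J|/|I|$.
\item For real $s$ with $|s|\ge1$, the bound $|T_k(s)|\le (|s|+\sqrt{s^2-1})^k\le (2|s|)^k$ holds. This is the elementary substitute for the conformal-mapping (Bernstein–Walsh) argument.
\item Combining the three steps,
\[
\|P\|_{\infty,J}\le 2m\,(6|J|/|I|)^{m-1}\|P\|_{\infty,I}\lesssim (|J|/|I|)^m\|P\|_{\infty,I},
\]
where the last step uses $|J|/|I|\ge1$.
\end{enumerate}

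The only delicate point in the whole proof is the sharp constant in the classical Markov inequality in part (a), which we take from \cite{natanson}. Everything else is routine.
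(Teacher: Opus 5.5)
Your proposal is correct. For (a) and (b) you follow the route the paper indicates: Markov's inequality cited from Natanson and reduced to $[-1,1]$ by an affine map (you even get the sharper constant $2(m-1)^2/(b-a)$), and Parseval plus the Schwarz inequality for (b). You also correctly catch that (b) as printed is false for $m\ge 2$. Under the constraint $P(x)=1$ the quantity $\int|P|^2\,d\mu_I$ is unbounded; take $P(t)=1+C(t-x)$ and let $C\to\infty$. The correct statement is $\min_{P(x)=1}\int|P|^2\,d\mu_I=\bigl(\sum_k p_{k,I}^2(x)\bigr)^{-1}$, equivalently $\max_{P\neq 0}P(x)^2/\int|P|^2\,d\mu_I=\sum_k p_{k,I}^2(x)$. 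This is exactly the form the paper actually uses in the proof of Lemma~\ref{lemma:christbd}.

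Part (c) is where you genuinely differ. The paper appeals to a conformal-mapping (Bernstein--Walsh) argument. There, the maximum principle is applied to $P(z)/\bigl(z+\sqrt{z^2-1}\bigr)^{m-1}$ on the complement of $[-1,1]$, giving $|P(s)|\le \|P\|_{\infty,[-1,1]}\bigl(|s|+\sqrt{s^2-1}\bigr)^{m-1}$ in one step, even for complex $s$. You instead use the explicit closed form of $T_k$ outside $[-1,1]$, which is Bernstein--Walsh for the single polynomial $T_k$. You then pass to a general $P$ through the Chebyshev expansion, with coefficients bounded by $2\|Q\|_{\infty,[-1,1]}$ via discrete orthogonality. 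Your route is entirely elementary and needs no complex analysis. The triangle inequality over the basis costs an extra factor of order $2m(3/2)^{m-1}$ in the constant. This is harmless because the implied constant in $\lesssim$ may depend on $m$. Both routes actually give the exponent $m-1$, slightly better than the stated $m$.
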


\vskip 0.5cm
\noindent
\textsc{Proof of Lemma~\ref{lemma:christbd}.}\\
Let  $P\in\Pi_m$ satisfy $\|P\|_{\infty, I}=P(x_0)$ for some $x_0\in I$. 
Since $P^2\in\Pi_{2m}$, the Markov inequality \eqref{eq:markov} implies that
$$
\begin{aligned}
\left||P(t)|^2-|P(x_0)|^2\right|
&\le \frac{8m^2}{|I|}|t-x_0|\|P\|_{\infty,I}^2\\
&= \frac{8m^2}{|I|}|t-x_0||P(x_0)|^2.
\end{aligned}
$$
Consequently, if $I'=((16m^2)^{-1}\ast I)\cap I$, then $|P(t)|^2\ge \|P\|_{\infty,I}^2/2$ for all $t\in I'$.
Hence,
\be\label{eq:pf1eqn1}
\int_I |P(t)|^2d\mu_I(t)\ge \int_{I'} |P(t)|^2d\mu_I(t)\ge \frac{\|P\|_{\infty,I}^2}{2}\mu_I(I').
\ee
Since $\mu$ is a doubling measure, we observe that
$$
\mu_I(I')=\frac{\mu(I')}{\mu(I)}\gs 1.
$$
So, we have proved that for any $P\in\Pi_m$,
$$
\int_I |P(t)|^2d\mu_I(t)\gs \|P\|_{\infty,I}^2.
$$
The extremal principle \eqref{eq:freud_extremal} now leads to  the estimate \eqref{eq:christbd}.
\qed

\subsection{Stability theorem}\label{bhag:stability}
The purpose of this section is to prove the following generalization of the well-known de Boor stability theorem.
\begin{theorem}\label{theo:stabilitytheo}
Let $S=\sum_{j\in\ZZ}c_jB_j\in \mathbb{S}$, and $\mu$ be a doubling measure. then for $1\le p<\infty$,
\be\label{eq:stabilitytheo}
\sum_{j\in\ZZ}\mu(I_j)|c_j|^p\sim \int_{\xi_m}^{\xi_{n+1}} |f(t)|^pd\mu(t),
\ee
with an obvious modification in the case when $p=\infty$.
\end{theorem}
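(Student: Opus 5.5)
The plan is to read the statement with $f=S$ and $I_j=[\xi_j,\xi_{j+m}]$, the support of $B_j$ from Proposition~\ref{prop:bspline_basic}(b). I will prove the two inequalities separately, working knot interval by knot interval. The two tools are the local $L^p(\mu)$--$L^\infty$ equivalence for polynomials (a mild generalization of the argument in Lemma~\ref{lemma:christbd}) and the local dual functionals of Proposition~\ref{prop:spline_reprod}. Since $\mu$ is doubling and is a probability measure on $[\xi_m,\xi_{n+1}]$, every subinterval of positive length has positive measure. Hence each knot interval contains at least $m+1$ points of $\mathsf{supp}(\mu)$, and all local objects are well defined.

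\emph{Upper bound.} Fix a knot interval $K_i=[\xi_i,\xi_{i+1}]$. By Proposition~\ref{prop:bspline_basic}(b),(c), only $B_{i-m+1},\dots,B_i$ are nonzero there; they are nonnegative and sum to $1$. Therefore $|S(t)|\le\max_{i-m+1\le j\le i}|c_j|$, and so $|S(t)|^p\le\sum_{j=i-m+1}^{i}|c_j|^p$ on $K_i$. Integrating against $\mu$ gives $\int_{K_i}|S|^pd\mu\le\sum_{j=i-m+1}^i\mu(K_i)|c_j|^p$. Summing over $i$, and using that $\sum_{i:\,K_i\subseteq I_j}\mu(K_i)=\mu(I_j)$, yields $\int|S|^pd\mu\le\sum_j\mu(I_j)|c_j|^p$. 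The case $p=\infty$ is immediate.

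\emph{Lower bound.} First I redo the proof of Lemma~\ref{lemma:christbd} for general $p$. For an interval $K$ and $P\in\Pi_m$, Markov's inequality \eqref{eq:markov} shows $|P|\ge\|P\|_{\infty,K}/2$ on $I'=((4m^2)^{-1}\ast K)\cap K$ around the maximizing point. Doubling gives $\mu(I')\gtrsim\mu(K)$, and together these give
\[
\|P\|_{\infty,K}^p\ls \frac{1}{\mu(K)}\int_K|P|^pd\mu .
\]
Next, for each $j$ I choose $K^{(j)}\subseteq I_j$ to be a \emph{largest} knot interval inside $I_j$, so $|K^{(j)}|\ge|I_j|/m$. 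I then use the de Boor dual functional $\Lambda_j$ supported on $K^{(j)}$, with the bound $|c_j|=|\Lambda_j(S)|\ls\|S\|_{\infty,K^{(j)}}$. This is the refined form of \eqref{eq:specialdeboor}: de Boor's construction allows the functional to live on any knot interval of the support, with a uniform constant when that interval is the largest one. Since $S|_{K^{(j)}}\in\Pi_m$, the previous display gives $|c_j|^p\ls\mu(K^{(j)})^{-1}\int_{K^{(j)}}|S|^pd\mu$. The length comparison gives $I_j\subseteq(2m+1)\ast K^{(j)}$, and finitely many applications of \eqref{eq:doubling} then give $\mu(I_j)\ls\mu(K^{(j)})$. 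Hence $\mu(I_j)|c_j|^p\ls\int_{K^{(j)}}|S|^pd\mu$. Each knot interval serves as $K^{(j)}$ for at most $m$ indices $j$, so summing over $j$ gives $\sum_j\mu(I_j)|c_j|^p\ls\int|S|^pd\mu$. For $p=\infty$ the same steps give $\|\bs c\|_\infty\ls\|S\|_\infty$.

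\emph{Main obstacle.} The delicate step is the lower bound's localization. With strongly nonuniform knots, a small knot interval $K\subset I_j$ can have $\mu(K)\ll\mu(I_j)$. Using \eqref{eq:specialdeboor} as stated, with the sup over all of $I_j$, would force the bound to pass through such tiny intervals. This is exactly why the classical de Boor inequality fails here. The fix is to place the dual functional on the largest knot interval in the support and to verify that de Boor's constant stays uniform in that case. I will establish this from the explicit construction of $\Lambda_j$, using the Walsh-type estimate \eqref{eq:walsh} on that interval. Once that holds, doubling makes $\mu(I_j)\sim\mu(K^{(j)})$ and everything else is routine.
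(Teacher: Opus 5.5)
Your proof is correct, and it takes a somewhat different route from the paper's. The skeleton is shared: both localize $c_j$ to the largest knot interval in $[\xi_j,\xi_{j+m}]$ (your $K^{(j)}$ is the paper's $I_{j^*}$), then use doubling to compare measures and bounded overlap to sum.

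The lower bound is where you differ. The paper builds integral functionals $\lambda_j(f)=\int_{I_{j^*}}\Lambda_j(R_x)f(x)\,d\mu_{I_{j^*}}(x)$ from the $\mu$-orthonormal polynomials on $I_{j^*}$. It bounds the kernel using the Christoffel-function estimate of Lemma~\ref{lemma:christbd}, \eqref{eq:walsh} and \eqref{eq:specialdeboor}, and then applies H\"older. You instead bound $|c_j|\ls\|S\|_{\infty,K^{(j)}}$ and pass to $L^p(\mu)$ by a Nikolskii-type inequality for polynomials. That inequality is the $p$-version of the Markov/doubling step inside the proof of Lemma~\ref{lemma:christbd}.

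Your localization step does not require reopening de Boor's construction. Let $P\in\Pi_m$ be the polynomial that agrees with $S$ on $K^{(j)}$. Local linear independence gives $c_j=\Lambda_j(P)$. Then \eqref{eq:specialdeboor} and \eqref{eq:walsh}, with $|[\xi_j,\xi_{j+m}]|\le m|K^{(j)}|$, give $|c_j|\ls\|P\|_{\infty,[\xi_j,\xi_{j+m}]}\ls\|S\|_{\infty,K^{(j)}}$. This is exactly how the paper uses \eqref{eq:walsh} on $R_x$.

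The paper's heavier machinery makes $\lambda_j$ bounded on all of $L^p(\mu_{I_{j^*}})$. That yields the $L^p(\mu)$-stable quasi-interpolant of Lemma~\ref{lemma:specialquasi}(c) for arbitrary $f$, which is used later. Your argument only handles splines, but that is all the theorem needs. It avoids orthogonal polynomials and Christoffel functions. In the upper bound, your pointwise argument for every $p$ also replaces the paper's treatment of $p=1,\infty$ followed by Riesz--Thorin.

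Your reading of $I_j$ as $[\xi_j,\xi_{j+m}]$ also matters. The paper defines $I_j=[\xi_j,\xi_{j+1}]$. With that choice the lower bound is weaker than yours (yours implies it). However, the paper's upper-bound step $\int_{\xi_j}^{\xi_{j+m}}B_j\,d\mu\ls\mu(I_j)$ does not follow from \eqref{eq:pf2eqn1}, which only gives comparability with $\mu(I_{j^*})$, and it actually fails. Take $m=2$, Lebesgue measure, and $S=B_j$ a hat function with $\xi_{j+1}-\xi_j=\delta$ and $\xi_{j+2}-\xi_{j+1}=1/2$. Then $\int|S|\,d\mu\ge 1/4$, while $\sum_k\mu(I_k)|c_k|=\delta$. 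Your reading is equivalent, by doubling, to using $\mu(I_{j^*})$. It is the version under which both inequalities hold with knot-independent constants.
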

Following \cite{kunoth2018foundations}, we first obtain a quasi-interpolatory representation of a spline.
For $m\le j\le n$, we denote $I_j=[\xi_j,\xi_{j+1}]$, and 
Let $j\le j^*<j+m$ be chosen so that 
$$
|I_{j^*}|=\max_{j\le \ell<j+m}|I_\ell|.
$$
We note that
\be\label{eq:pf2eqn1}
I_j^*\subseteq [\xi_j, \xi_{j+m}]\subset (2m)\ast I_{j^*}.
\ee
With the notation as in Proposition~\ref{prop:spline_reprod}(b),  we define
\be\label{eq:qifunctionaldef}
\begin{aligned}
\lambda_j(f)=\int_{I_{j^*}} \left\{\sum_{k=0}^{m-1} \Lambda_j(p_{k,I_{j^*}})p_{k,I_{j^*}}(x)\right\}f(x)d\mu_{I_{j^*}}(x), \\
j=1,\cdots,n+1.
\end{aligned}
\ee
and
\be\label{eq:quasi_interp_op}
\mathcal{Q}(f)(x)=\sum_j \lambda_{j}(f)B_j(x), \qquad x\in [x_m,x_{n+1}].
\ee
\begin{lemma}\label{lemma:specialquasi}
{\rm (a)} $\mathcal{Q}$ defined in \eqref{eq:quasi_interp_op} is a quasi-interpolation operator in the sense of Definition~\ref{def:quasi_interp}. \\
{\rm (b)} We have $Q(S)=S$ for all $S\in\mathbb{S}$.\\
{\rm (c)} For  $1\le p<\infty$, we have
\be\label{eq:lambdajest}
\sum_{j=m}^{n+1} \mu(I_j)|\lambda_{j}(f)|^p \ls \int_{\xi_m}^{\xi_{n+1}}|f(x)|^pd\mu(x).
\ee
For $p=\infty$, we have
\be\label{eq:lambdajest_inf}
\max_{m\le j\le n+1}\lambda_j(f)|\ls \|f\|_{\infty;[\xi_m,\xi_{n+1}]}.
\ee
\end{lemma}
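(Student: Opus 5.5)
The plan is to prove the three parts in order, using only Proposition~\ref{prop:spline_reprod}(b), Lemma~\ref{lemma:christbd} and the doubling property.

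\textbf{Part (a).} Each $\lambda_j$ in \eqref{eq:qifunctionaldef} is given by integration against $\mu_{I_{j^*}}$. It is therefore supported on $I_{j^*}\subseteq[\xi_j,\xi_{j+m}]$, which is condition (a) of Definition~\ref{def:quasi_interp}.

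For condition (b), take $P\in\Pi_m$ and expand it in the orthonormal basis: $P=\sum_{k}\langle P,p_{k,I_{j^*}}\rangle_{\mu_{I_{j^*}}}p_{k,I_{j^*}}$. Then linearity of $\Lambda_j$ gives
\[
\lambda_j(P)=\sum_k \Lambda_j(p_{k,I_{j^*}})\langle P,p_{k,I_{j^*}}\rangle=\Lambda_j(P).
\]
Hence $\mathcal{Q}(P)=\mathcal{Q}^*(P)=P$.

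\textbf{Part (b).} Let $S\in\mathbb{S}$. Its restriction to the single knot interval $I_{j^*}$ is a polynomial $P_j\in\Pi_m$. Since $\lambda_j$ only sees $I_{j^*}$, we have $\lambda_j(S)=\lambda_j(P_j)=\Lambda_j(P_j)$ by the computation in part (a).

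Since $\Lambda_j$ is supported on a single knot interval, I would argue, as in the proof of Proposition~\ref{prop:spline_reprod}(a) (de Boor's construction), that $\Lambda_j$ may be taken supported on $I_{j^*}$. Then $\Lambda_j(P_j)=\Lambda_j(S)=c_j$, so $\mathcal{Q}(S)=S$. If $\Lambda_j$ is supported on a different interval, I would instead choose it on $I_{j^*}$; Proposition~\ref{prop:spline_reprod} permits that choice, with bound \eqref{eq:specialdeboor} still holding. This is where I would have to be careful about how $\Lambda_j$ is specified.

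\textbf{Part (c).} The key pointwise bound comes from the kernel
\[
K_j(x)=\sum_k\Lambda_j(p_{k,I_{j^*}})\,p_{k,I_{j^*}}(x).
\]
Cauchy--Schwarz in $k$, followed by \eqref{eq:specialdeboor}, gives
\[
\sum_k|\Lambda_j(p_k)|^2\ls \sum_k\|p_k\|_{[\xi_j,\xi_{j+m}]}^2 .
\]
By \eqref{eq:pf2eqn1} and Proposition~\ref{prop:polyfact}(c) with $J=(2m)\ast I_{j^*}$, this is $\ls\sum_k\|p_k\|_{I_{j^*}}^2\ls 1$, the last step by Lemma~\ref{lemma:christbd}. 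Combining with Lemma~\ref{lemma:christbd} again gives $|K_j(x)|\ls 1$ on $I_{j^*}$. Hence
\[
|\lambda_j(f)|\ls \int_{I_{j^*}}|f|\,d\mu_{I_{j^*}}\le \Bigl(\tfrac{1}{\mu(I_{j^*})}\int_{I_{j^*}}|f|^p\,d\mu\Bigr)^{1/p}
\]
by H\"older. The case $p=\infty$ is immediate from the first inequality.

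Multiplying by $\mu(I_j)$ then gives
\[
\mu(I_j)|\lambda_j(f)|^p\ls \frac{\mu(I_j)}{\mu(I_{j^*})}\int_{I_{j^*}}|f|^p\,d\mu .
\]
Since $I_j\subset(2m)\ast I_{j^*}$, repeated use of doubling gives $\mu(I_j)\ls\mu(I_{j^*})$.

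Finally, sum over $j$. The map $j\mapsto j^*$ is at most $m$-to-one, because $j^*\in[j,j+m)$. Each interval $I_{j^*}$ is therefore counted at most $m$ times, and the total is $\ls\int_{\xi_m}^{\xi_{n+1}}|f|^p\,d\mu$.

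The main obstacle I expect is the step in part (b): ensuring that the support of $\Lambda_j$ matches $I_{j^*}$ so that spline reproduction holds exactly. A secondary concern is that the doubling comparison $\mu(I_j)\ls\mu(I_{j^*})$ needs $I_{j^*}$ to be a nondegenerate interval, which holds because it is the longest of the $m$ adjacent knot intervals.
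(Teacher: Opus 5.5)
Your proposal is correct and is essentially the paper's proof: (a) goes through $\lambda_j=\Lambda_j$ on $\Pi_m$, and (c) combines Lemma~\ref{lemma:christbd}, the bound \eqref{eq:specialdeboor} moved to $I_{j^*}$ via \eqref{eq:pf2eqn1} and Proposition~\ref{prop:polyfact}(c), H\"older, and $\mu(I_j)\ls\mu(I_{j^*})$ from doubling; your Cauchy--Schwarz split over $k$ and your explicit at most $m$-to-one overlap count are only cosmetic refinements. The obstacle you flag in (b) is not a real one: each $\lambda_j$ is itself supported on the single knot interval $I_{j^*}$, so once (a) is known Proposition~\ref{prop:spline_reprod}(a) applies to $\mathcal{Q}$ directly, and in fact $\Lambda_j(P_j)=c_j$ wherever $\Lambda_j$ is supported, by polynomial reproduction of $\mathcal{Q}^*$ and the local linear independence in Proposition~\ref{prop:bspline_basic}(a), since $j^*-m+1\le j\le j^*$.
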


\noindent\textsc{Proof of Lemma~\ref{lemma:specialquasi}}\\
Obviously,
$$
\lambda_{j}(p_{\ell,I_{j^*}})=\Lambda_j(p_{\ell,I_{j^*}}), \qquad \ell=0,\cdots, m-1.
$$
It follows from Proposition~\ref{prop:spline_reprod}(b) that
$$
\mathcal{Q}(P)=P, \qquad P\in\Pi_m.
$$
Thus, $\mathcal{Q}$ is a quasi-interpolation operator. This proves part (a).

Since $\Lambda_{j}$ is supported on only one interval  $I_{j^*}$, Proposition~\ref{prop:spline_reprod} implies part (b).

In this part of the proof, let $R_x(y)=\sum_{k=0}^{m-1} p_{k,I_{j^*}}(y)p_{k,I_{j^*}}(x)$. 
The Schwarz inequality yields that
$$
|R_x(y)|\le \left\{\sum_{k=0}^{m-1}p_{k,I_{j^*}}(y)^2\right\}^{1/2}\left\{\sum_{k=0}^{m-1}p_{k,I_{j^*}}(x)^2\right\}^{1/2},
$$
so that, in view of Lemma~\ref{lemma:christbd},
$$
\max_{x\in I_{j^*}}\|R_x\|_{\infty, I_{j^*}}\ls 1.
$$
In view of Proposition~\ref{prop:polyfact}(c) and \eqref{eq:pf2eqn1}, we deduce that
$$
\max_{x\in I_{j^*}}\|R_x\|_{\infty, [\xi_j,\xi_{j+m}]}\ls 1.
$$
Consequently, \eqref{eq:specialdeboor} implies that for all $x\in I_{j^*}$,
$$
\left|\sum_{k=0}^{m-1} \Lambda_j(p_{k,I_{j^*}})p_{k,I_{j^*}}(x)\right|=\left|\Lambda_j\left(R_x\right)\right|\ls 1.
$$
Therefore, a straightforward application of H\"older inequality in the definition \eqref{eq:qifunctionaldef}, taking into account the fact that $\mu$ is a doubling measure, implies that for $1\le p<\infty$,
\be\label{eq:locstability}
\begin{aligned}
|\lambda_j(f)|^p&\ls \int_{I_j^*}|f(x)|^pd\mu_{I_j^*}(x)=\frac{1}{\mu(I_{j^*})}\int_{I_j^*}|f(x)|^pd\mu(x)\\
&\ls \frac{1}{\mu(I_j)}\int_{I_j^*}|f(x)|^pd\mu(x).
\end{aligned}
\ee
The estimate \eqref{eq:lambdajest} is obtained by adding these inequalities, $j=m,\cdots, n+1$.
The estimate \eqref{eq:lambdajest_inf} is simpler. \qed

\smallskip

\noindent\textsc{Proof of Theorem~\ref{theo:stabilitytheo}}\\
Since $\{B_j\}$ is a basis for $\mathbb{S}$, it follows from Lemma~\ref{lemma:specialquasi}(b) that for $S=\sum_j c_jB_j$, $c_j=\lambda_j(S)$. 
Hence, Lemma~\ref{lemma:specialquasi}(c) leads immediately to the bound in \eqref{eq:stabilitytheo} for estimating the discrete norm of $\mathbf{c}=\{c_j\}$ in terms of the norm of $S$. 
In the reverse direction, we note that since $B_j\ge 0$ and $\sum_jB_j\equiv 1$,
\be\label{eq:infstability}
\|S\|_\infty \le \left\|\sum_jc_jB_j\right\|\le \|\mathbf{c}\|_\infty.
\ee
Further, since each $B_j$ is supported on $[\xi_j, \xi_{j+m}]$, $0\le B_j\le 1$, and $\mu$ is a doubling measure (cf. \eqref{eq:pf2eqn1}),
$$
\int_{\xi_m}^{\xi_{n+1}}|s|d\mu\le \sum_{j}|c_j|\int_{\xi_j}^{\xi_{j+m}}B_jd\mu\lesssim \sum_j|c_j|\mu(I_j).
$$
Together with \eqref{eq:infstability}, this proves the other inequality in \eqref{eq:stabilitytheo} for $p=1,\infty$. The general case follows by applying the Riesz-Thorin interpolation theorem. \qed

\section{Digraph Signal Processing via Spline Quasi-Interpolants}
\label{sec:digraph_signal_processing}
The hierarchical partitions produced by the \myproj~algorithm induce a hierarchy of block partitions. In this section, we shall apply this hierarchy to construct the splines and their associated quasi-interpolation operators for digraph signal processing. 

\subsection{Digraph Signal Processing}
Specifically, given a digraph signal $f: V\rightarrow \mathbb R$ defined on a vertex set $V$ of a digraph $G=(V,E,A)$, we first apply our \myproj~algorithm to form a hierarchical tree (filtration) and then it induces the hierarchy $\mathcal{R}_{0}\succeq\mathcal{R}_1\succeq\cdots\succeq\mathcal{R}_L\succeq \mathcal{R}_{L+1}$ of blocks, where each block in $\mathcal R_l$ is given by a pair of directional intervals. At each level $l$, we convert $\mathcal R_l$ into a pair  $(\bm \xi_{l,\mathrm{in}},\bm \xi_{l,\mathrm{out}})$ of knot sequences, which are then used to construct the pair $(\{B^{(l)}_{j,\mathrm{in}}\}_{j=0}^{k_l},\{B^{(l)}_{j,\mathrm{out}}\}_{j=0}^{k_l})$ of spline bases and the corresponding pair $(\mathcal Q^{(l)}_{\mathrm{in}},\mathcal Q^{(l)}_{\mathrm{out}})$  of quasi-interpolants at level $l$, as described in Section~\ref{sec:splines}. Due to the hierarchy property, the knot sequences are nested for each direction. The differences $\mathcal Q^{(l+1)}_{\iota}f-\mathcal Q^{(l)}_{\iota}f$, $\iota\in\{\mathrm{in,out}\}$, between successive approximations provide detailed information about the graph signal $f$ and can be used for graph signal denoising.

\noindent\textbf{Knot Sequences from Multi-Level Clusters}. Fix $m$ to be the dimension of the polynomial space $\Pi_m$. 
Let $\mathcal{R}_l=\{R_j^{(l)}={I}^{(l)}_{j,\mathrm{in}}\times {I}^{(l)}_{j,\mathrm{out}}: j=1,\ldots,k_l\}$ be the rectangular blocks at level $l$. Note that $[0,1]=\cup_{j=1}^{k_l}I^{(l)}_{j,\iota}$ for any $\iota\in\{\mathrm{in, out}\}$ and any $l=0,\ldots,L+1$.
We collect and order the distinct endpoints of $\{I_{j,\iota}: j=1,\ldots, k_l\}$ in Section~\ref{sec:interval-partition} as
\[
\bm\xi_{l,\iota}=(\xi^{(l)}_{0,\iota}=\cdots=\xi^{(l)}_{0,\iota}<\xi^{(l)}_{1,\iota}<\ldots<\xi^{(l)}_{k_l,\iota}=\cdots=\xi^{(l)}_{k_l,\iota})
\]
for $\iota\in\{\mathrm{in,out}\}$,
where $\xi^{(l)}_{0,\iota}=0$ and $\xi^{(l)}_{k_l,\iota}=1$ are repeated $m+1$ times for the construction of B-splines of order $m$. Since the intervals at level $l$ are obtained by subdividing those at level $l-1$, the knot sequences satisfy
\begin{equation}
\bm\xi_{0,\iota}\subseteq\bm\xi_{1,\iota}\subseteq\cdots\subseteq \bm\xi_{L,\iota}\subseteq \bm\xi_{L+1,\iota}.
\end{equation}
Each knot sequence  $\bm\xi_{l,\iota}$ then induces a spline basis $\{B_{j,\iota}^{(l)}\}_{j=0}^{k_l}$ for the space $\mathbb S_{m,\bm \xi_{l,\iota}}$. 

For the finest level $L+1$, we associate vertex $v_i$ with the coordinate $x_{i,\iota}=\xi^{(L+1)}_{i,\iota}$, $i=1,\ldots,N$.
The graph signal $f:V\rightarrow \mathbb R$ can be represented as a spline function $f_{\iota}(t)=\sum_{j=0}^{k_l}c_{j,\iota}^{(L+1)}B^{(L+1)}_{j,\iota}(t)\in \mathbb S_{m,\bm \xi_{L+1,\iota}}$ for $t\in[0,1]$ via the spline interpolation algorithm \cite{de1978practical}. That is, $f_\iota(x_{i,\iota}) = f(v_i)$ for $\iota\in\{\mathrm{in,out}\}$. 


\noindent\textbf{Spline Quasi-Interpolation for Graph Signals.}
Given $f_\iota$ and the spline basis $\{B_{j,\iota}^{(l)}\}_{j=0}^{k_l}$, by Definition~\ref{def:quasi_interp}, the corresponding spline quasi-interpolant for $f_\iota$ at level $l$ is given by
\begin{equation}
\mathcal{Q}^{(l)}_\iota f_\iota(t)=\sum_{j=0}^{k_l}\lambda_{j,\iota}^{(l)}(f_\iota)B^{(l)}_{j,\iota}(t),\quad l=1,\ldots,L+1,
\end{equation}
where $\lambda^{(l)}_{j,\iota}$ is a local linear functional determined by the signal samples within the support of $B^{(l)}_{j,\iota}$. The compact support of the B-spline basis allows the coefficients to be computed locally without solving a global interpolation system (see \cite{kunoth2018foundations}). Evaluating the quasi-interpolant $\mathcal Q_\iota^{(l)} f_\iota$ at the (finest) vertex coordinates $x_{i,\iota}$ gives a quasi-interpolant graph signal $\bm{L}_{l,\iota}$ as 
\begin{equation}
\bm{L}_{l,\iota}=\left[\mathcal{Q}^{(l)}_\iota f(x_{1,\iota}),\ldots,\mathcal{Q}^{(l)}_\iota f(x_{N,\iota})\right]^\top.
\end{equation}
Note that a finer knot sequence captures more local variations, whereas a coarser sequence produces a smoother structural approximation. 

\begin{table*}[htpb!]
  \centering
  \caption{Denoising performance of graph signals under varying noise ratios, comparing raw corrupted signals with spline-smoothed reconstructions using multi-resolution hierarchical structures.}\label{Table:spline}
  \scalebox{0.90}[0.90]{
    \begin{tabular}{lllrrrrr}
      \toprule
       \textbf{Noise Level}  & \textbf{Signal Pairs}  & \textbf{Metrics}  & \textbf{Cora} & \textbf{Cornell} & \textbf{Texas} & \textbf{Wisconsin} & \textbf{Squirrel} \\ \midrule
        \multirow{6}{*}{5\%} & \multirow{2}{*}{$\bm y_T$ and $\bm y_O$} 
           & RMSE  & 0.0053  & 0.0078  & 0.0078  & 0.0068  & 0.0046  \\
           &   & SNR     & 26.1595  & 26.4977  & 26.4977 & 26.3257 & 26.0635  \\
         & \multirow{2}{*}{$\bm y_T$ and $\bm L_{2}+ \tilde{\bm d}_2$ } 
           & RMSE  & 0.0041   & 0.0056   & 0.0063  & 0.0044 & 0.0023  \\
           &   & SNR    & 28.2786  & 29.3331  & 28.3221 & 30.0617 & 31.9974  \\
         & \multirow{2}{*}{$\bm y_T$ and $ \bm L_{1}+\tilde{\bm d}_1 + \tilde{\bm d}_2$ } 
           & RMSE & 0.0035   & 0.0056   & 0.0063  & 0.0044 & 0.0023  \\
           &    & SNR     & 29.6455  & 29.3331  & 28.3221 & 30.0617 & 31.9974  \\
           \toprule
       \textbf{Noise Level}  & \textbf{Signal Pairs}  & \textbf{Metrics}  & \textbf{Cora} & \textbf{Cornell} & \textbf{Texas} & \textbf{Wisconsin} & \textbf{Squirrel} \\ \midrule
        \multirow{6}{*}{10\%} & \multirow{2}{*}{$\bm y_T$ and $\bm y_O$} 
           & RMSE & 0.0080   & 0.0156   & 0.0156  & 0.0136  & 0.0073  \\
           &   & SNR     & 20.1389  & 20.4771  & 20.4771 & 20.3051 & 20.0429  \\
         & \multirow{2}{*}{$\bm y_T$ and $\bm L_{2}+ \tilde{\bm d}_2$ }  
           & RMSE  & 0.0082   & 0.0113   & 0.0127  & 0.0089 & 0.0037  \\
           &   & SNR     & 22.2583  & 23.3126  & 22.3091 & 24.0513 & 26.0014  \\
         & \multirow{2}{*}{$\bm y_T$ and $ \bm L_{1}+\tilde{\bm d}_1 + \tilde{\bm d}_2$ } 
           & RMSE  & 0.0070   & 0.0113   & 0.0127  & 0.0089 & 0.0037  \\
           &    & SNR     & 23.6250  & 23.3126  & 22.3091 & 24.0513 & 26.0014  \\ 
         \toprule
       \textbf{Noise Level}  & \textbf{Signal Pairs}  & \textbf{Metrics}  & \textbf{Cora} & \textbf{Cornell} & \textbf{Texas} & \textbf{Wisconsin} & \textbf{Squirrel} \\ \midrule
         \multirow{6}{*}{15\%} & \multirow{2}{*}{$\bm y_T$ and $\bm y_O$} 
           & RMSE  & 0.0158  & 0.0235  & 0.0235  & 0.0204  & 0.0139  \\
           &   & SNR     & 16.6171  & 16.9552  & 16.9552 & 16.7833 & 16.5211  \\
          & \multirow{2}{*}{$\bm y_T$ and $\bm L_{2}+ \tilde{\bm d}_2$ } 
           & RMSE  & 0.0124  & 0.0169  & 0.0190  & 0.0133 & 0.0070 \\
           &   & SNR     & 18.7367  & 19.7906  & 18.7943 & 20.5393 & 22.4556  \\
         & \multirow{2}{*}{$\bm y_T$ and $ \bm L_{1}+\tilde{\bm d}_1 + \tilde{\bm d}_2$ } 
           & RMSE  & 0.0106   & 0.0169   & 0.0190  & 0.0133 & 0.0070  \\
           &    & SNR     & 20.1034  & 19.7906  & 18.7943 & 20.5393 & 22.4556  \\ 
           \toprule
       \textbf{Noise Level}  & \textbf{Signal Pairs}  & \textbf{Metrics}  & \textbf{Cora} & \textbf{Cornell} & \textbf{Texas} & \textbf{Wisconsin} & \textbf{Squirrel} \\ \midrule
        \multirow{6}{*}{20\%} & \multirow{2}{*}{$\bm y_T$ and $\bm y_O$} 
           & RMSE  & 0.0211  & 0.0313  & 0.0313  & 0.0273  & 0.0185 \\
           &   & SNR     & 14.1183  & 14.4565  & 14.4565 & 14.2845 & 14.0223  \\
        & \multirow{2}{*}{$\bm y_T$ and $\bm L_{2}+ \tilde{\bm d}_2$ } 
           & RMSE  & 0.0165   & 0.0226  & 0.0253  & 0.0177 & 0.0093 \\
           &   & SNR    & 16.2381  & 17.2915  & 16.3010 & 18.0501 & 19.9570  \\
     & \multirow{2}{*}{$\bm y_T$ and $ \bm L_{1}+\tilde{\bm d}_1 + \tilde{\bm d}_2$ }
           & RMSE  & 0.0141   &  0.0226  & 0.0253  & 0.0177 & 0.0093  \\
           &    & SNR    & 17.6025  & 17.2915  & 16.3010 & 18.0501 & 19.9570  \\ \bottomrule      
\end{tabular}}
\end{table*}

\noindent\textbf{Multi-level Digraph Signal Denoising}.
Let $\bm{y}_T=[f(v_1),\ldots,f(v_N)]^\top$ denote the underlying clean digraph signal (ground truth). Its noisy observation is modeled as
\begin{equation}
\bm{y}_O=\bm{y}_T+\sigma\bm{\varepsilon},
\end{equation}
where $\sigma=\max(|y_T|)r$, $r$ is the (relative) noise level, and $\bm{\varepsilon}$ is a standard Gaussian noise vector satisfying $\bm{\varepsilon}\sim \mathcal{N}(0,1)$. Only $\bm{y}_O$ is available during reconstruction, while $\bm{y}_T$ is used solely to evaluate the recovery quality. Applying the spline quasi-interpolation operator to $\bm{y}_O$ at each resolution gives the approximation vectors $\bm{L}_{1,\iota},\ldots,\bm{L}_{L+1,\iota}$ in $\mathbb R^{N}$ for $\iota\in\{\mathrm{in,out}\}$. The detail between two consecutive resolutions is defined as
$\bm{d}_{j,\iota}=\bm{L}_{j+1,\iota}-\bm{L}_{j,\iota}$, $j=1,\ldots,L$.
Each $\bm{d}_{j,\iota}$ contains the details at resolution $j$. Note that for any
$j_0\in\{2,\ldots,J\}$, we have the decomposition
\begin{equation}
\bm{L}_{L+1,\iota}=\bm{L}_{j_0,\iota}+\sum_{j=j_0}^{L}\bm{d}_{j,\iota}.
\end{equation}
For the noisy observation, these detail components $\bm d_{j,\iota}$ are typically sparse and contain both useful local features and noise. To suppress noise, we apply an adaptive thresholding operator $\mathcal{T}_j$ on $\bm d_{j,\iota}$ to obtain
$\widetilde{\bm{d}}_{j,\iota}=\mathcal{T}_j(\bm{d}_{j,\iota})$, $j=1,\ldots,L$.
The threshold is adjusted according to the estimated noise level and local residual activity, thereby suppressing weak noisy fluctuations while preserving significant signal variations. From $\bm{L}_{j_0,N}$ as the coarse structural component, the denoised signal is reconstructed as
\begin{equation}
\widetilde{\bm{y}}_{j_0:L+1,\iota}=\bm{L}_{j_0,\iota}+\sum_{j=j_0}^{L}\widetilde{\bm{d}}_{j,\iota},\quad j_0=1,\ldots,L.
\end{equation}
The final denoised signal is given by 
\[
\widetilde{\bm{y}}_{j_0:L+1}=\frac12(\widetilde{\bm{y}}_{j_0:L+1,\mathrm{in}}+\widetilde{\bm{y}}_{j_0:L+1,\mathrm{out}}).
\]
That is, the average of the denoised results with respect to the in-degree and out-degree spline quasi-interpolants. The objective is to obtain $\widetilde{\bm{y}}_{j_0:L+1}\approx\bm{y}_T$ from the noisy observation $\bm{y}_O$. In this process, the coarse quasi-interpolant captures the dominant smooth structure, while the thresholded details restore informative variations at finer resolutions.

\subsection{Denoising Performance Analysis}
We evaluate the hierarchy-induced knot sequences on graph signal denoising.
Let $\bm y_T$ denote the ground-truth signal and $\bm y_N$ its noisy observation,
with noise ratios $r$ ranging from $5\%$ to $20\%$. Following the
multi-level reconstruction defined above, we report the results from the one-level denoising
$\bm L_{2,\iota}+\widetilde{\bm d}_{2,\iota}$ and the two-level denoising $\bm L_{1,\iota}+\widetilde{\bm d}_{1,\iota}+\widetilde{\bm d}_{2,\iota}$. Performance is measured by RMSE (root mean square error, the smaller the better)
and SNR (signal to noise ratio, the larger the better) between $\bm Y_T$ and $\widetilde{\bm{y}}_{j_0:L+1}$ for $L=2$ and $j_0=1,2$. 

As shown in Table~\ref{Table:spline}, increasing the noise ratio consistently degrades the raw signal, whereas spline-based reconstruction generally reduces RMSE and improves SNR across all datasets. At a $20\%$ noise ratio, the best reconstruction reduces the RMSE from $0.0211$ to $0.0141$ on Cora and from $0.0185$ to $0.0093$ on Squirrel, with corresponding SNR improvements from $14.1183$ to $17.6025$ dB and from $14.0223$ to $19.9570$ dB. The three-level reconstruction further improves the results on Cora, indicating
that the additional detail component contains useful fine-scale information. On the other datasets, the two reconstruction depths yield similar or identical results, suggesting that the coarser approximation already captures most of the
recoverable signal structure. Overall, these results confirm that the cluster-derived knot sequences provide effective and stable supports for structure-adaptive graph signal denoising.

\section{Conclusions}
\label{sec:conclusions}
This work connected hierarchical digraph clustering with spline-based graph signal processing. We proposed \myproj, a semi-supervised spectral
hierarchical clustering framework based on a generalized Hermitian matrix. By recursively clustering and coarsening the digraph, \myproj~preserves directional interactions and produces consistent nested partitions under both supervised and unsupervised settings. The resulting hierarchy is mapped to in-degree and out-degree intervals, whose endpoints form nested nonuniform knot sequences. These knots define multilevel spline quasi-interpolants, allowing noisy graph signals to be reconstructed from a coarse approximation and adaptively thresholded inter-level details. Experiments on synthetic and real-world digraphs demonstrate the effectiveness
of \myproj~under different structural and supervision settings. The denoising results further show that the hierarchy-induced knots provide structure-adaptive supports for signal recovery, yielding lower RMSE and higher SNR. Future work will focus on scalable implementations and extensions to more complex directed graph structures.
\bibliographystyle{IEEEtran}
\bibliography{ref}

@article{sandryhaila2013discrete,
  title={Discrete signal processing on graphs},
  author={Sandryhaila, Aliaksei and Moura, Jos{\'e} MF},
  journal={IEEE Transactions on Signal Processing},
  volume={61},
  number={7},
  pages={1644--1656},
  year={2013},
  publisher={IEEE}
}

@article{gama2019convolutional,
  title={Convolutional neural network architectures for signals supported on graphs},
  author={Gama, Fernando and Marques, Antonio G and Leus, Geert and Ribeiro, Alejandro},
  journal={IEEE Transactions on Signal Processing},
  volume={67},
  number={4},
  pages={1034--1049},
  year={2019},
  publisher={IEEE}
}

@article{leus2023graph,
  title={Graph signal processing: {H}istory, development, impact, and outlook},
  author={Leus, Geert and Marques, Antonio G and Moura, Jos{\'e} MF and Ortega, Antonio and Shuman, David I},
  journal={IEEE Signal Processing Magazine},
  volume={40},
  number={4},
  pages={49--60},
  year={2023},
  publisher={IEEE}
}

@article{ortega2018graph,
  title={Graph signal processing: {O}verview, challenges, and applications},
  author={Ortega, Antonio and Frossard, Pascal and Kova{\v{c}}evi{\'c}, Jelena and Moura, Jos{\'e} MF and Vandergheynst, Pierre},
  journal={Proceedings of the IEEE},
  volume={106},
  number={5},
  pages={808--828},
  year={2018},
  publisher={IEEE}
}

@article{von2007tutorial,
  title={A tutorial on spectral clustering},
  author={Von Luxburg, Ulrike},
  journal={Statistics and Computing},
  volume={17},
  number={4},
  pages={395--416},
  year={2007},
  publisher={Springer}
}

@article{shuman2020localized,
  title={Localized spectral graph filter frames: {A} unifying framework, survey of design considerations, and numerical comparison},
  author={Shuman, David I},
  journal={IEEE Signal Processing Magazine},
  volume={37},
  number={6},
  pages={43--63},
  year={2020},
  publisher={IEEE}
}

@article{jung2019localized,
  title={Localized linear regression in networked data},
  author={Jung, Alexander and Tran, Nguyen},
  journal={IEEE Signal Processing Letters},
  volume={26},
  number={7},
  pages={1090--1094},
  year={2019},
  publisher={IEEE}
}

@article{jung2019semi,
  title={Semi-supervised learning in network-structured data via total variation minimization},
  author={Jung, Alexander and Hero III, Alfred O and Mara, Alexandru Cristian and Jahromi, Saeed and Heimowitz, Ayelet and Eldar, Yonina C},
  journal={IEEE Transactions on Signal Processing},
  volume={67},
  number={24},
  pages={6256--6269},
  year={2019},
  publisher={IEEE}
}

@article{tanaka2020sampling,
  title={Sampling signals on graphs: {F}rom theory to applications},
  author={Tanaka, Yuichi and Eldar, Yonina C and Ortega, Antonio and Cheung, Gene},
  journal={IEEE Signal Processing Magazine},
  volume={37},
  number={6},
  pages={14--30},
  year={2020},
  publisher={IEEE}
}

@article{tanaka2020generalized,
  title={Generalized sampling on graphs with subspace and smoothness priors},
  author={Tanaka, Yuichi and Eldar, Yonina C},
  journal={IEEE Transactions on Signal Processing},
  volume={68},
  pages={2272--2286},
  year={2020},
  publisher={IEEE}
}

@article{rosvall2008maps,
  title={Maps of random walks on complex networks reveal community structure},
  author={Rosvall, Martin and Bergstrom, Carl T},
  journal={Proceedings of the National Academy of Sciences},
  volume={105},
  number={4},
  pages={1118--1123},
  year={2008},
  publisher={National Academy of Sciences}
}

@inproceedings{cucuringu2020hermitian,
  title={Hermitian matrices for clustering directed graphs: {I}nsights and applications},
  author={Cucuringu, Mihai and Li, Huan and Sun, He and Zanetti, Luca},
  booktitle={International Conference on Artificial Intelligence and Statistics},
  pages={983--992},
  year={2020},
  organization={PMLR}
}

@inproceedings{satuluri2011symmetrizations,
  title={Symmetrizations for clustering directed graphs},
  author={Satuluri, Venu and Parthasarathy, Srinivasan},
  booktitle={Proceedings of the 14th International Conference on Extending Database Technology},
  pages={343--354},
  year={2011}
}

@article{zhang2022directed,
  title={Directed community detection with network embedding},
  author={Zhang, Jingnan and He, Xin and Wang, Junhui},
  journal={Journal of the American Statistical Association},
  volume={117},
  number={540},
  pages={1809--1819},
  year={2022},
  publisher={Taylor \& Francis}
}

@article{chung2005laplacians,
  title={Laplacians and the Cheeger inequality for directed graphs},
  author={Chung, Fan},
  journal={Annals of Combinatorics},
  volume={9},
  number={1},
  pages={1--19},
  year={2005},
  publisher={Springer}
}

@inproceedings{zhou2005learning,
  title={Learning from labeled and unlabeled data on a directed graph},
  author={Zhou, Dengyong and Huang, Jiayuan and Sch{\"o}lkopf, Bernhard},
  booktitle={Proceedings of the International Conference on Machine Learning (ICML)},
  pages={1036--1043},
  year={2005}
}

@inproceedings{he2022digrac,
  title={Digrac: {D}igraph clustering based on flow imbalance},
  author={He, Yixuan and Reinert, Gesine and Cucuringu, Mihai},
  booktitle={Proceedings of Learning on Graphs Conference},
  year={2022},
  organization={PMLR}
}

@inproceedings{laenen2020higher,
  title={Higher-order spectral clustering of directed graphs},
  author={Laenen, Steinar and Sun, He},
  booktitle={Advances in Neural Information Processing Systems},
  pages={941--951},
  year={2020}
}

@book{deboor2001practical,
  title={A Practical Guide to Splines},
  author={De Boor, Carl},
  year={2001},
  series={Applied Mathematical Sciences},
  volume={27},
  publisher={Springer-Verlag},
  address={New York}
}

@article{de1973spline,
  title={Spline approximation by quasiinterpolants},
  author={De Boor, Carl and Fix, George J},
  journal={Journal of Approximation Theory},
  volume={8},
  number={1},
  pages={19--45},
  year={1973},
  publisher={Academic Press}
}

@article{chui2015representation,
  title={Representation of functions on big data: graphs and trees},
  author={Chui, Charles K and Filbir, Frank and Mhaskar, Hrushikesh N},
  journal={Applied and Computational Harmonic Analysis},
  volume={38},
  number={3},
  pages={489--509},
  year={2015},
  publisher={Elsevier}
}

@article{chui2016multirate,
  title={Multirate systems with shortest spline-wavelet filters},
  author={Chui, Charles K and De Villiers, Johan and Zhuang, Xiaosheng},
  journal={Applied and Computational Harmonic Analysis},
  volume={41},
  number={1},
  pages={266--296},
  year={2016},
  publisher={Elsevier}
}

@article{chui2018representation,
  title={Representation of functions on big data associated with directed graphs},
  author={Chui, Charles K and Mhaskar, HN and Zhuang, Xiaosheng},
  journal={Applied and Computational Harmonic Analysis},
  volume={44},
  number={1},
  pages={165--188},
  year={2018},
  publisher={Elsevier}
}

@article{speleers2017hierarchical,
  title={Hierarchical spline spaces: {Q}uasi-interpolants and local approximation estimates},
  author={Speleers, Hendrik},
  journal={Advances in Computational Mathematics},
  volume={43},
  number={2},
  pages={235--255},
  year={2017},
  publisher={Springer}
}

@article{bracco2016bivariate,
  title={Bivariate hierarchical {H}ermite spline quasi-interpolation},
  author={Bracco, Cesare and Giannelli, Carlotta and Mazzia, Francesca and Sestini, Alessandra},
  journal={BIT Numerical Mathematics},
  volume={56},
  number={4},
  pages={1165--1188},
  year={2016},
  publisher={Springer}
}

@article{huang2024flow2gnn,
  title={{Flow2GNN}: {F}lexible two-way flow message passing for enhancing gnns beyond homophily},
  author={Huang, Changqin and Wang, Yi and Jiang, Yunliang and Li, Ming and Huang, Xiaodi and Wang, Shijin and Pan, Shirui and Zhou, Chuan},
  journal={IEEE Transactions on Cybernetics},
  volume={54},
  number={11},
  pages={6607--6618},
  year={2024},
  publisher={IEEE}
}

@article{gates2017impact,
  title={The impact of random models on clustering similarity},
  author={Gates, Alexander J and Ahn, Yong-Yeol},
  journal={Journal of Machine Learning Research},
  volume={18},
  number={87},
  pages={1--28},
  year={2017}
}

@article{malliaros2013clustering,
  title={Clustering and community detection in directed networks: {A} survey},
  author={Malliaros, Fragkiskos D and Vazirgiannis, Michalis},
  journal={Physics Reports},
  volume={533},
  number={4},
  pages={95--142},
  year={2013},
  publisher={Elsevier}
}

@article{rohe2016co,
  title={Co-clustering directed graphs to discover asymmetries and directional communities},
  author={Rohe, Karl and Qin, Tai and Yu, Bin},
  journal={Proceedings of the National Academy of Sciences},
  volume={113},
  number={45},
  pages={12679--12684},
  year={2016},
  publisher={National Academy of Sciences}
}

@article{shi2000normalized,
  title={Normalized cuts and image segmentation},
  author={Shi, Jianbo and Malik, Jitendra},
  journal={IEEE Transactions on Pattern Analysis and Machine Intelligence},
  volume={22},
  number={8},
  pages={888--905},
  year={2000},
  publisher={Ieee}
}

@article{flake2004graph,
  title={Graph clustering and minimum cut trees},
  author={Flake, Gary William and Tarjan, Robert E and Tsioutsiouliklis, Kostas},
  journal={Internet Mathematics},
  volume={1},
  number={4},
  pages={385--408},
  year={2004},
  publisher={Taylor \& Francis}
}

@article{ng2001spectral,
  title={On spectral clustering: {A}nalysis and an algorithm},
  author={Ng, Andrew and Jordan, Michael and Weiss, Yair},
  journal={Advances in Neural Information Processing Systems},
  volume={14},
  year={2001}
}

@article{wei2025vertex,
  title={Vertex-frequency analysis on directed graphs},
  author={Wei, Deyun and Yuan, Shuangxiao},
  journal={IEEE Transactions on Signal Processing},
  volume={73},
  pages={2255--2270},
  year={2025},
  publisher={IEEE}
}

@article{pesenson2009variational,
  title={Variational splines and Paley--Wiener spaces on combinatorial graphs},
  author={Pesenson, Isaac},
  journal={Constructive Approximation},
  volume={29},
  number={1},
  pages={1--21},
  year={2009},
  publisher={Springer}
}

@inproceedings{hayashi2022skew,
  title={Skew-symmetric adjacency matrices for clustering directed graphs},
  author={Hayashi, Koby and Aksoy, Sinan G and Park, Haesun},
  booktitle={Proceedings of the IEEE International Conference on Big Data},
  pages={555--564},
  year={2022},
  organization={IEEE}
}

@article{dahmen1980multidimensional,
  title={Multidimensional spline approximation},
  author={Dahmen, Wolfgang and De Vore, R and Scherer, Karl},
  journal={SIAM Journal on Numerical Analysis},
  volume={17},
  number={3},
  pages={380--402},
  year={1980},
  publisher={SIAM}
}

@incollection{cox2006practical,
  title={Practical spline approximation},
  author={Cox, Maurice G},
  booktitle={Topics in Numerical Analysis: Proceedings of the SERC Summer School},
  pages={79--112},
  year={2006},
  publisher={Springer}
}

@article{hasan2024b,
  title={B-spline curve theory: {A}n overview and applications in real life},
  author={Hasan, Md Shahid and Alam, Md Nur and Fayz-Al-Asad, Md and Muhammad, Noor and Tun{\c{c}}, Cemil},
  journal={Nonlinear Engineering},
  volume={13},
  number={1},
  pages={20240054},
  year={2024},
  publisher={De Gruyter}
}

@article{sen2008collective,
  title={Collective Classification in Network Data},
  author={Sen, Prithviraj and Namata, Galileo and Bilgic, Mustafa and Getoor, Lise and Galligher, Brian and Eliassi-Rad, Tina},
  journal={AI Magazine},
  volume={29},
  number={3},
  pages={93--93},
  year={2008}
}

@inproceedings{pei2020geom,
  title={Geom-gcn: Geometric Graph Convolutional Networks},
  author={Pei, Hongbin and Wei, Bingzhe and Chang, Kevin Chen-Chuan and Lei, Yu and Yang, Bo},
  booktitle={International Conference on Learning Representations},
  year={2020}
}

@article{rozemberczki2021multi,
  title={Multi-scale attributed node embedding},
  author={Rozemberczki, Benedek and Allen, Carl and Sarkar, Rik},
  journal={Journal of Complex Networks},
  volume={9},
  number={2},
  pages={cnab014},
  year={2021},
  publisher={Oxford University Press}
}

@inproceedings{adamic2005political,
  title={The political blogosphere and the 2004 US election: divided they blog},
  author={Adamic, Lada A and Glance, Natalie},
  booktitle={Proceedings of the International Workshop on Link Discovery},
  pages={36--43},
  year={2005}
}

@inproceedings{zhang2021magnet,
  title={Magnet: {A} neural network for directed graphs},
  author={Zhang, Xitong and He, Yixuan and Brugnone, Nathan and Perlmutter, Michael and Hirn, Matthew},
  booktitle={Advances in Neural Information Processing Systems},
  volume={34},
  pages={27003--27015},
  year={2021}
}

@book{de1978practical,
  title={A practical guide to splines},
  author={De Boor, Carl},
  volume={27},
  year={1978},
  publisher={springer New York}
}

@article{kunoth2018foundations,
  title={Foundations of spline theory: B-splines, spline approximation, and hierarchical refinement},
  author={Kunoth, Angela and Lyche, Tom and Sangalli, Giancarlo and Serra-Capizzano, Stefano and Lyche, Tom and Manni, Carla and Speleers, Hendrik},
  journal={Splines and PDEs: From Approximation Theory to Numerical Linear Algebra: Cetraro, Italy 2017},
  pages={1--76},
  year={2018},
  publisher={Springer}
}

@book{natanson,
  title={Constructive function theory},
  author={Natanson, Isidor Pavlovich},
  year={1964},
  publisher={Ungar}
}

@book{freudbk,
  title={Orthogonal polynomials},
  author={Freud, G{\'e}za},
  year={2014},
  publisher={Elsevier}
}

@book{devlorbk,
  title={Constructive approximation},
  author={DeVore, R. A. and Lorentz, G. G.},
  volume={303},
  year={1993},
  publisher={Springer Science \& Business Media}
}

@article{jain2010data,
  title={Data clustering: 50 years beyond K-means},
  author={Jain, Anil K},
  journal={Pattern recognition letters},
  volume={31},
  number={8},
  pages={651--666},
  year={2010},
  publisher={Elsevier}
}

@article{chuidiamond90,
  title={A general framework for local interpolation},
  author={Chui, C. K. and Diamond, H.},
  journal={Numerische Mathematik},
  volume={58},
  number={1},
  pages={569--581},
  year={1990},
  publisher={Springer}
}

@article{quasiint2000,
  title={Quasi-interpolation in shift invariant spaces},
  author={Mhaskar, H. N. and Narcowich, F. J. and Ward, J. D.},
  journal={Journal of mathematical analysis and applications},
  volume={251},
  number={1},
  pages={356--363},
  year={2000},
  publisher={Elsevier}
}

@article{rey2023robust,
  title={Robust graph filter identification and graph denoising from signal observations},
  author={Rey, Samuel and Tenorio, Victor M and Marqu{\'e}s, Antonio G},
  journal={IEEE Transactions on Signal Processing},
  volume={71},
  pages={3651--3666},
  year={2023},
  publisher={IEEE}
}



\newpage

\section*{Supplementary Materials}
\subsection{Training Details}
\subsubsection{Code Realisation}
The detailed experimental code is available at 

\href{https://github.com/kellysylvia77/Digraph}{https://github.com/kellysylvia77/Digraph}.

\subsubsection{Datasets}
We evaluate the proposed method on seven real-world directed graphs with different scales and structural characteristics. As summarized in Table~\ref{Table:Datasets}, the datasets contain between $183$ and $5201$ vertices and between $298$ and $217073$ directed edges. We report the numbers of vertices, edges, and classes, together with the 
the homophily ratio $\mathcal{H}$. The five labeled datasets exhibit different degrees of heterophily, while Telegram and Blog contain no node labels; therefore, their class counts and homophily ratios are reported as unavailable.
\begin{table*}[htbp!]
\caption{Statistics of real-world directed graphs.}
\label{Table:Datasets}
\vspace*{1mm}
\centering
\begin{tabular}{cccccc|cc}
\toprule
\textbf{Datasets}  & \textbf{Cora} & \textbf{Squirrel} & \textbf{Cornell} & \textbf{Wisconsin} & \textbf{Texas} & \textbf{Telegram} & \textbf{Blog}    \\\midrule
\#Nodes, $|\mathcal{V}|$  & 2708  & 5201    & 183  & 251  & 183  & 245  & 5201 \\
\#Edges, $|\mathcal{E}|$  & 5429  & 217073  & 298  & 515  & 325  & 8912  & 19024  \\
\#Classes, $c$            & 7     & 5       & 5  & 5  & 5  & -  & -  \\
Hom. Ratio, $\mathcal{H}$ & 0.3347& 0.0854  & 0.1153  & 0.1325  & 0.0695  & N/A  & N/A  \\
\bottomrule 
\end{tabular}
\end{table*}
\begin{itemize}
    \item \textbf{Citation Network.}
    Cora is a citation network in which vertices represent scientific
    publications and directed edges denote citation relations. Node features are derived from document contents, and labels indicate research categories~\cite{sen2008collective}.

    \item \textbf{Wikipedia and Webpage Networks.}
    Squirrel is a Wikipedia webpage network whose directed edges represent hyperlinks between pages~\cite{rozemberczki2021multi}. Cornell, Wisconsin, and Texas are WebKB networks collected from university websites, where vertices correspond to webpages and directed edges represent hyperlinks~\cite{pei2020geom}. Their labels describe webpage categories.

    \item \textbf{Social and Information Networks.}
    Telegram is a directed influence network between Telegram channels. It is used to evaluate community discovery without supervision \cite{zhang2021magnet,he2022digrac}. Blog is a political blog network in which directed edges represent hyperlinks between blogs \cite{adamic2005political}. Both datasets are treated as unlabeled
    networks in our experiments.
\end{itemize}
The labeled datasets have homophily ratios ranging from $0.0695$ to
$0.3347$, indicating predominantly heterophilic connectivity. Together with the two unlabeled networks, these datasets provide diverse settings for evaluating both semi-supervised hierarchical clustering and unsupervised community discovery on directed graphs.

\subsubsection{Competitors}
We compare \myproj~with five representative spectral clustering methods for
directed graphs. These baselines cover several major strategies for handling
directionality, including matrix symmetrization, singular-vector
decomposition, and Hermitian/skew-symmetric spectral formulations.
\begin{itemize}
    \item \textbf{Bi-Sym}~\cite{satuluri2011symmetrizations} transforms the
    directed adjacency matrix into a symmetric similarity matrix through
    products involving $A$ and $A^\top$, so that vertices are compared
    according to shared predecessor and successor relationships before
    applying conventional spectral clustering.
    \item \textbf{DD-Sym}~\cite{satuluri2011symmetrizations} is a
    degree-discounted symmetrization method that further reweights the
    predecessor--successor similarities to reduce the dominance of
    high-degree vertices in the resulting spectral representation.
    \item \textbf{DI-SIM}~\cite{rohe2016co} constructs a regularized directed
    Laplacian and uses its left and right singular vectors to characterize
    the distinct sending and receiving roles of vertices, respectively.
    \item \textbf{Herm}~\cite{cucuringu2020hermitian} encodes edge
    orientation in a complex Hermitian matrix. Its real eigenvalues and
    orthonormal eigenvectors enable standard spectral embedding while
    preserving asymmetric connectivity information.
    \item \textbf{Skew}~\cite{hayashi2022skew} represents directed
    connectivity through a real skew-symmetric formulation closely related
    to the Hermitian construction, preserving the relevant directed-cut
    information while reducing the need for complex-domain computation.
\end{itemize}
All baselines are evaluated using the hyperparameter settings recommended in
their original implementations or publications. The same experimental
protocol is applied to the synthetic DSBM graphs and all seven real-world
directed networks to ensure a consistent comparison.

\subsubsection{Metrics}
We evaluate clustering performance using Modularity ($\mathcal{M}$), Adjusted Rand Index ($\mathcal{A}$), and F-measure ($\mathcal{F}$). Higher values indicate better clustering quality. The mathematical formulations of these metrics are provided in Definitions \ref{Modularity}-\ref{F-measure}.

\begin{definition}\label{Modularity}
{\rm [Modularity]} Let $k_i^{out}$ and $k_j^{in}$ denote the outdegree of node $i$ and the indegree of node $j$ in $W$, respectively. We assume that in a random directed graph with the same connectivity, the expected probability of an edge from $i$ to $j$ is $k_i^{out}k_j^{in} / m$. Here, $m=k_i^{in}+k_i^{out}$ is the total weight of the incoming/outgoing edges in the directed graph, which equals the sum of all entries in $W$. Then, the modularity metric is defined as:
\begin{equation}
\mathcal{M} = \frac{1}{m}\sum_{i,j}\left(W_{ij}-\frac{k_i^{out}k_j^{in}}{m}\right)\delta(C_i,C_j),
\end{equation}
where $\delta(C_i,C_j)$ is 1 if the nodes $i$ and $j$ both belong to the same cluster $C=C_i=C_j$, and 0 otherwise. This metric quantifies the difference between the actual number of edges within clusters and the expected number in a random graph with the same degree distribution.
\end{definition}

\begin{definition}\label{ARI}
{\rm [Adjusted Rand Index (ARI)]} Suppose we have two partitions for $N$ individuals: $C_1$ (with $K$ classes) and $C_2$ (with $M$ classes). Each individual $i$ has two labels: $c_i^1$ from $C_1$ and $c_i^2$ from $C_2$. We define two indicator variables, $c_{ij}^1$ and $c_{ij}^2$, to show if the pair $(i, j)$ is grouped together in $C_1$ and $C_2$, respectively:
\begin{equation}
c_{ij}^1 = 
\begin{cases} 
1 & \text{if } c_i^1 = c_j^1 = k, \\
0 & \text{otherwise},
\end{cases}
\quad \text{and} \quad
c_{ij}^2 = 
\begin{cases} 
1 & \text{if } c_i^2 = c_j^2 = \ell, \\
0 & \text{otherwise}.
\end{cases}
\end{equation}
The values $c_{ij}^1$ and $c_{ij}^2$ are realizations of Bernoulli random variables. A pair is consistent in similarity if $c_{ij}^1 c_{ij}^2 = 1$, and consistent in difference if $(1-c_{ij}^1)(1-c_{ij}^2) = 1$. The Adjusted Rand Index (ARI) is then defined over all pairs as:
\begin{equation}
\mathcal{A}\!\!=\!\!\frac{\sum_{ij} \binom{c_{ij}}{2}\!\!-\!\!\left[ \sum_i \binom{c_i}{2} \sum_j \binom{c_j}{2} \right] / \binom{n}{2}}{\frac{1}{2} \left[ \sum_i \binom{c_i}{2}\!\!+\!\!\sum_j \binom{c_j}{2} \right]\!\!-\!\!\left[ \sum_i \binom{c_i}{2} \sum_j \binom{c_j}{2} \right] / \binom{n}{2}}.
\end{equation}
\end{definition}

\begin{definition}\label{F-measure}
{\rm [F-measure Score]} Let ${C_1, C_2, \cdots, C_K}$ be the clusters obtained from a clustering algorithm, and let ${L_1, L_2, \cdots, L_M}$ be the ground-truth partition of nodes by class labels (i.e., $L_j$ contains all nodes in $W$ with label $j$). The metric is defined as:
\begin{equation}
F(C_i)=2\max_{1\leq j\leq M}\frac{|C_i\cap L_j|}{|C_i|+|L_j|},
\end{equation}
the (micro-averaged) F-measure is then defined by
\begin{equation}
\mathcal{F}=\frac{\sum_i|C_i|F(C_i)}{\sum_i|C_i|}.
\end{equation}
\end{definition}


\subsubsection{Directed Stochastic Block Model} 
The directed stochastic block model (DSBM) extends the classical stochastic block model (SBM) by incorporating directional interactions between clusters.
Besides the within-cluster and between-cluster connection probabilities $p$ and $q$, the DSBM introduces an orientation matrix $F\in[0,1]^{k\times k}$ to characterize the preferred direction of edges between different clusters. Hence, $F$ can be interpreted as the weighted adjacency matrix of a directed meta-graph whose vertices represent clusters.

Let $\mathcal{C}=\{C_1,\ldots,C_k\}$ be a partition of $N=k\cdot n$ vertices into $k$ clusters, each containing $n$ vertices. For a pair of vertices $u\in C_i$ and $v\in C_j$, an edge is generated with probability
\begin{equation}
\Pr(\{u,v\}\in E) =
\begin{cases}
p, & i=j,\\
q, & i\neq j.
\end{cases}
\end{equation}
Once an edge is generated, its orientation is determined by $F$. Specifically,
\begin{equation}
\Pr(u\rightarrow v\mid \{u,v\}\in E)=F_{ij},
\qquad
\Pr(v\rightarrow u\mid \{u,v\}\in E)=F_{ji},
\end{equation}
where
\begin{equation}
F_{ij}+F_{ji}=1, 
\qquad
F_{ii}=\frac{1}{2}.
\end{equation}
Thus, $F_{ij}>1/2$ indicates a preferred direction from cluster $C_i$ to cluster $C_j$, whereas $F_{ij}=1/2$ corresponds to no directional preference. In our synthetic settings, the off-diagonal entries of $F$ are parameterized by $\eta\in[0,\frac12]$, with the preferred and reverse directions assigned probabilities $1-\eta$ and $\eta$, respectively. Hence, a smaller $\eta$ indicates stronger directional preference, while $\eta=\frac12$ removes the directional bias. The resulting random digraph is denoted by $G(k,n,p,q,F)$.


\textbf{Example.}
Let $k=4$, $p=q$, and
\begin{equation}
F=
\begin{pmatrix}
\frac12 & \frac23 & \frac23 & \frac13\\
\frac13 & \frac12 & \frac23 & \frac23\\
\frac13 & \frac13 & \frac12 & \frac23\\
\frac23 & \frac13 & \frac13 & \frac12
\end{pmatrix}.
\label{eq:F_example}
\end{equation}

\begin{figure}[htpb!]
\centering
\includegraphics[width=0.25\linewidth]{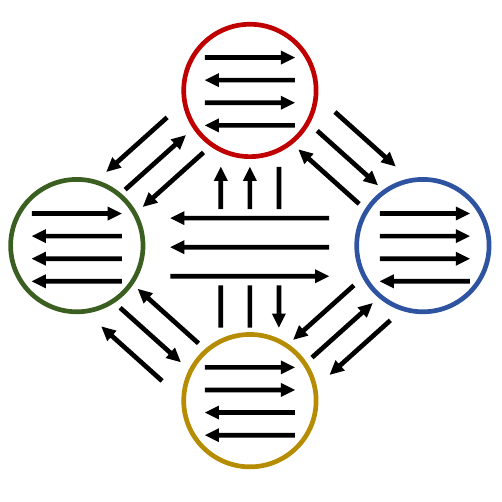}
\caption{Illustration of the asymmetric orientation pattern.}
\label{fig:F_example}
\end{figure}


In this case, $G$ consists of four equal-sized clusters $C_1,C_2,C_3$, and $C_4$, and any pair of vertices is connected with the same probability $p$. Hence, the cluster structure cannot be identified from edge density alone. Within each cluster, edge directions are chosen uniformly at random since $F_{ii}=1/2$. In contrast, the directions of edges between different clusters are determined non-uniformly by $F$. For every pair $(C_i,C_j)$ with $F_{ij}=2/3$, approximately two-thirds of the edges are expected to be directed from $C_i$ to $C_j$, while the remaining one-third are directed in the reverse direction. Specifically, the preferred inter-cluster directions are $C_1\rightarrow C_2$, $C_1\rightarrow C_3$, $C_2\rightarrow C_3$, $C_2\rightarrow C_4$, $C_3\rightarrow C_4$, and $C_4\rightarrow C_1$. Therefore, although $p=q$ eliminates density-based separation, the asymmetric directional pattern encoded by $F$ still provides informative cluster structure. This example illustrates how the DSBM can generate directed communities whose distinctions arise primarily from edge orientation rather than connectivity density.

\subsubsection{$k$-Means Clustering} 
Given $N$ observations $z_1, z_2,\ldots, z_N\in\mathbb R^n$ of $n$-dimensional vectors associated with vertices $v_1,\ldots, v_N$ of a vertex set $V$, $k$-means clustering \cite{jain2010data} aims to partition the $N$ observations into a $k$-way clustering $\mathcal C = \{C_1, C_2, \ldots, C_k\}$ of $V$, equivalently, to obtain a label signal $y:V\rightarrow \{1,\ldots,k\}$, so as to minimize the within-cluster sum of squares. In the semi-supervised setting, labels are known only for vertices indexed by $\mathcal{I}_\mathrm{knw}\subseteq\{1,\ldots,N\}$ so that  $Y_{\mathrm{knw}}(i)\in\{1,\ldots,k\}$ is given in advance for $i\in \mathcal{I}_\mathrm{knw}$. If $\mathcal{I}_{\mathrm{knw}}=\emptyset$, then it corresponds to the unsupervised setting. The assignments of labels for the remaining unknown vertices are inferred subject to the known labels.
Formally, the objective is to find:
\begin{equation}\label{k-means}
\min_{\mathcal C=\{C_j=y^{-1}(j)\}_{j=1}^k}\sum_{j=1}^{k}\sum_{v_i\in C_j}
\left\|z_i-c_j\right\|_2^2,\mbox{ s.t. } y|_{\mathcal I_{\mathrm{knw}}}=Y_{\mathrm{knw}},
\end{equation}
where each
\[
c_j=\frac{1}{|C_j|}\sum_{v_i\in C_j}z_i
\]
is the centroid of the vertices in the class $C_j$ for $j=1,\ldots, k$, the norm $\|\cdot\|_2$ is the Euclidean distance, but can be replaced by other general metric. 

The most common algorithm uses an iterative refinement technique. Given an initial set of $k$ centroids, the algorithm proceeds by alternating between two steps:
\begin{enumerate}
\item[1)] {\bf Assignment step}: Assign each observation to the cluster with the nearest mean (centroid). Mathematically, this means partitioning the observations according to the Voronoi diagram generated by the means.

\item[2)]{\bf Update step}: Recalculate means (centroids) for observations assigned to each cluster. This is also called refitting.
\end{enumerate}
After each iteration, the within-cluster sum of squares decreases monotonically, yielding a nonnegative, monotonically decreasing sequence. This guarantees that the $k$-means always converges, but not necessarily to the global optimum. Algorithm~\ref{k-means} presents the details of the implementations of $k$-means in our setting.

\begin{algorithm}[htpb!]
\caption{$k$-means Clustering Algorithm.}
\begin{algorithmic}[1]
\item[{\rm a)}] {\bf Input}: $V=\{v_1,\ldots,v_n\}$ with each vertex $v$ associated with a $d$-dimensional vector $z_v\in\mathbb R^n$, respectively; the known pair $(\mathcal{I}_{\mathrm{knw}},Y_{\mathrm{knw}})$ of labels; number $k$ of clusters; the maximum number of iterations
\item[{\rm b)}] {\bf Output}: A $k$-way clustering $\mathcal C=\{C_1,\ldots, C_k\}$ of $V$
\item[{\rm c)}] {\bf Main Steps}:
\STATE Initialization: Randomly choose $k$ vertices $u_1,\ldots,u_k$ from $V$ as centers subject to  $u_j\in Y_{\mathrm{knw}}^{-1}(j)$, respectively. The initial centroids $c_1,\ldots, c_k$ are then the vectors $z_{u_1},\ldots, z_{u_k}$, respectively
\WHILE{true}
\STATE Construct cluster $C_j$ for $j=1,\ldots,k$. That is, $v\in V$ belongs to $C_j$ if either $v\in Y_{\mathrm{knw}}^{-1}(j)$ or $j = \mathrm{argmin}_{1\le j'\le k}  \|z_v-c_{j'}\|_2$
\STATE Update the centers:  for each $C_j$, find a new center $u\in C_j$ such that $\sum_{v\in C_j}\|z_u-z_v\|_2$ is minimal and the centroid is updated to $z_u$, respectively
\STATE Break if all centers remain the same or the maximum number of iterations is reached.
\ENDWHILE
\end{algorithmic}
\label{k-means}
\end{algorithm}

\subsection{Supplementary Experiments}
\subsubsection{Results for the DSBM}
This section provides a further analysis of model performance using an expanded variable set in DSBM including $N$, $k$ and the pattern of $F$ matrix, as well as under more conditions than those covered in the main text.

\begin{figure*}[h]
\setlength{\abovecaptionskip}{0.cm}
\setlength{\belowcaptionskip}{-0.cm}
\centering
\includegraphics[width=0.99\textwidth]{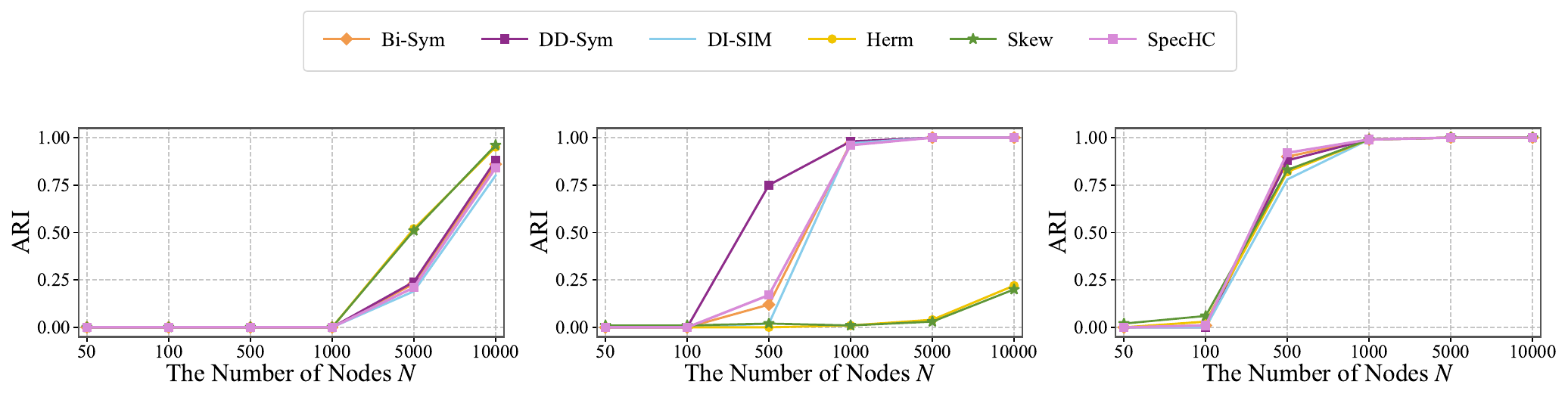}
\caption{The models' performance is compared across varying numbers of nodes under three scenarios: $p = q$ (left), $p \gg q$ (middle), and $q \gg p$ (right).}\label{fig:n_pq}
\end{figure*}

\begin{figure*}[h]
\setlength{\abovecaptionskip}{0.cm}
\setlength{\belowcaptionskip}{-0.cm}
\centering
\includegraphics[width=0.99\textwidth]{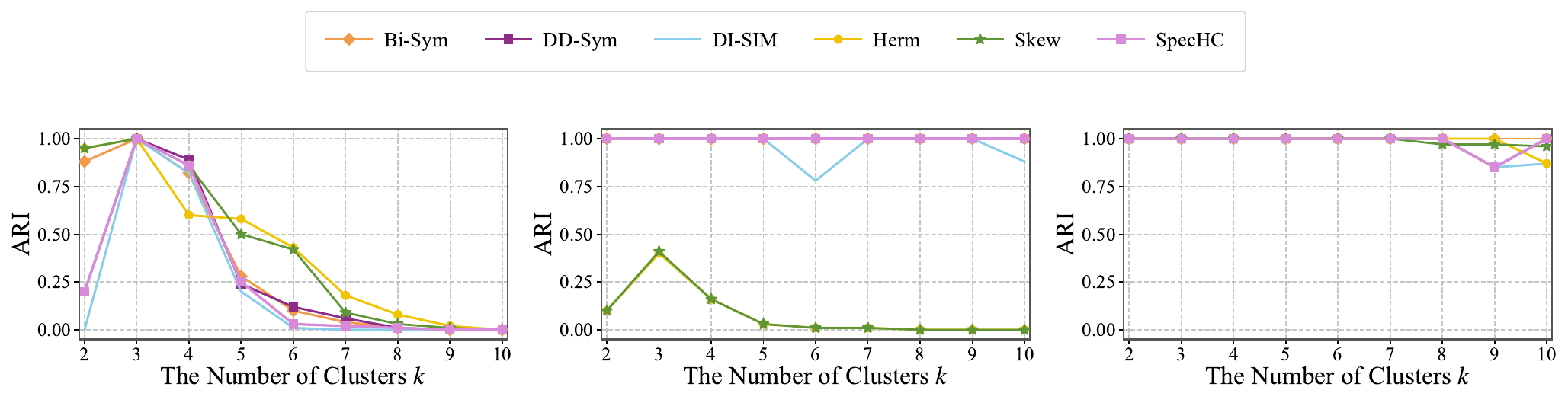}
\caption{The models' performance is compared across varying numbers of clusters under three scenarios: $p = q$ (left), $p \gg q$ (middle), and $q \gg p$ (right).}\label{fig:k_pq}
\end{figure*}

\noindent\textbf{Parameter $N$--Graph Scale. }
Figure~\ref{fig:n_pq} examines the effect of graph size on clustering
performance under three structural settings: $p=q$, $p\gg q$, and
$q\gg p$. Overall, increasing $N$ provides more structural evidence and
generally improves the ARI of all methods. When $p=q$, the cluster structure is relatively ambiguous, and most methods remain ineffective on small graphs. Their performance improves only when the graph becomes sufficiently large, with ARI values rising markedly at $N=5000$ and $N=10000$. Clearer structural separation leads to faster improvement. Under $q\gg p$, all methods achieve high ARI at $N=500$ and nearly perfect clustering once $N\geq1000$. Under $p\gg q$, however, the methods respond differently: DD-Sym performs well from $N=500$, while Bi-Sym, DI-SIM, and \myproj~reach near-perfect performance at approximately $N=1000$. In contrast, Herm and Skew remain less effective even on larger graphs in this setting. These results show that graph scale and structural pattern jointly determine clustering difficulty. Larger graphs generally improve recoverability, while strong homophilic or heterophilic separation reduces the amount of data
required. \myproj~exhibits stable scalability and achieves nearly perfect clustering on sufficiently large graphs with clearly distinguishable structures.

\begin{figure*}[t]
\setlength{\abovecaptionskip}{0.cm}
\setlength{\belowcaptionskip}{-0.cm}
\centering
\subfigure[Circular Pattern]{\includegraphics[width=0.99\textwidth]{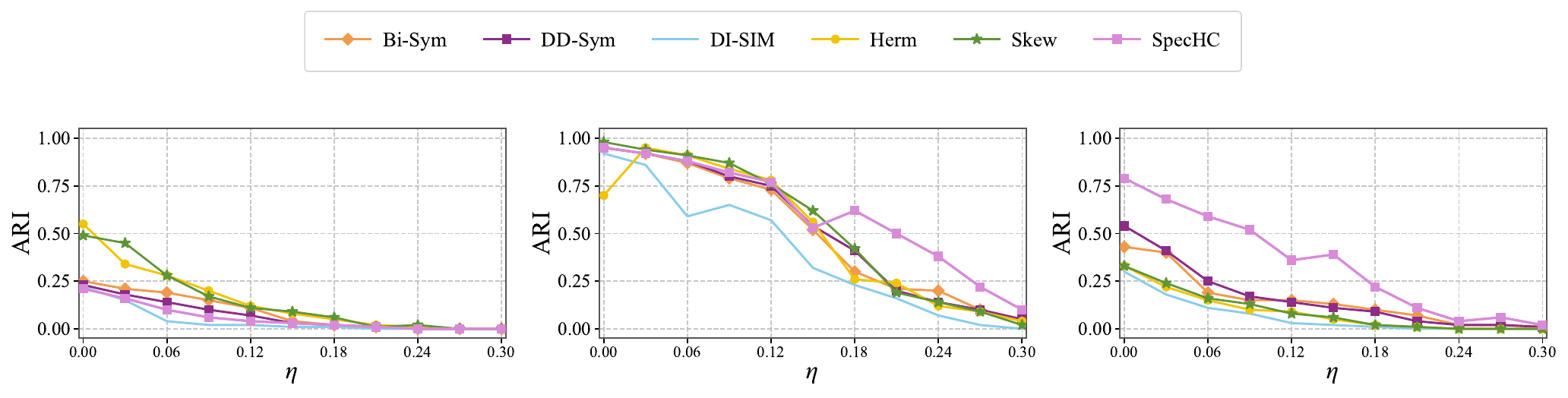}}
\subfigure[Complete Meta-Graph]{\includegraphics[width=0.99\textwidth]{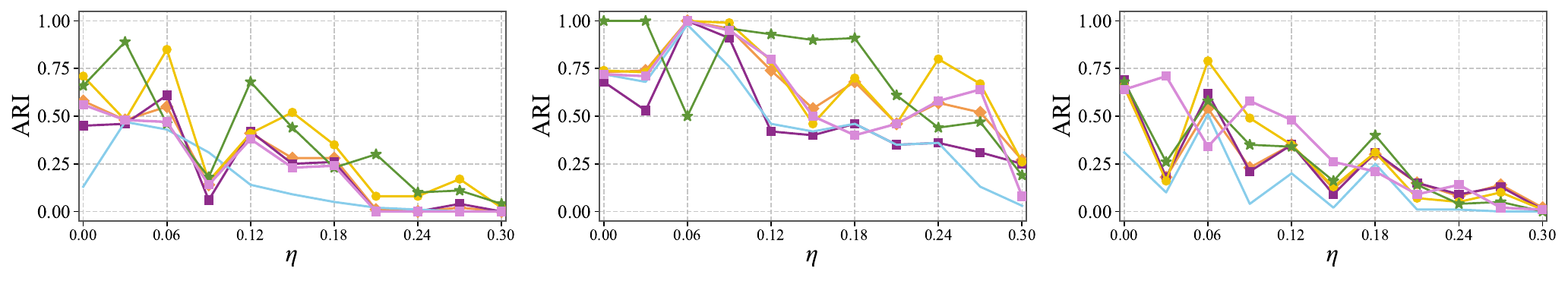}}
\caption{Performance comparison of the models across varying levels of $\eta$ under two distinct $F$ matrix patterns: (a) Circular Pattern and (b) Complete Meta-Graph.}\label{fig:F_pattern}
\end{figure*}

\begin{figure*}[t!]
\setlength{\abovecaptionskip}{0.cm}
\setlength{\belowcaptionskip}{-0.cm}
\centering
\subfigure[$\eta=0$, $p=0.0045$, $q=0.0045$]{\includegraphics[width=0.32\textwidth]{alpha_beta_eta00_p045_q045.pdf}}
\subfigure[$\eta=0.12$, $p=0.0045$, $q=0.0045$]{\includegraphics[width=0.32\textwidth]{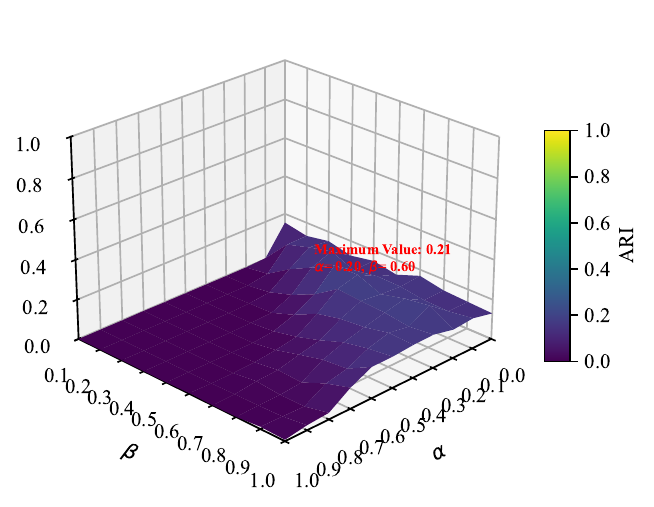}}
\subfigure[$\eta=0.24$, $p=0.0045$, $q=0.0045$]{\includegraphics[width=0.32\textwidth]{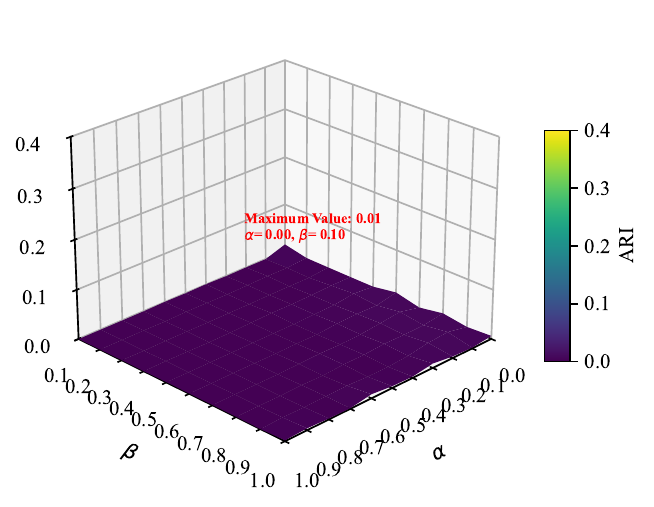}}
\subfigure[$\eta=0$, $p=0.0045$, $q=0.05$]{\includegraphics[width=0.32\textwidth]{alpha_beta_eta00_p045_q50.pdf}}
\subfigure[$\eta=0.12$, $p=0.0045$, $q=0.05$]{\includegraphics[width=0.32\textwidth]{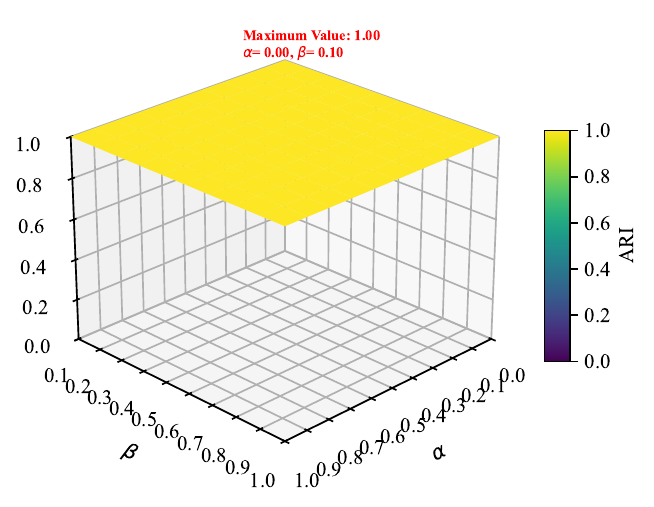}}
\subfigure[$\eta=0.24$, $p=0.0045$, $q=0.05$]{\includegraphics[width=0.32\textwidth]{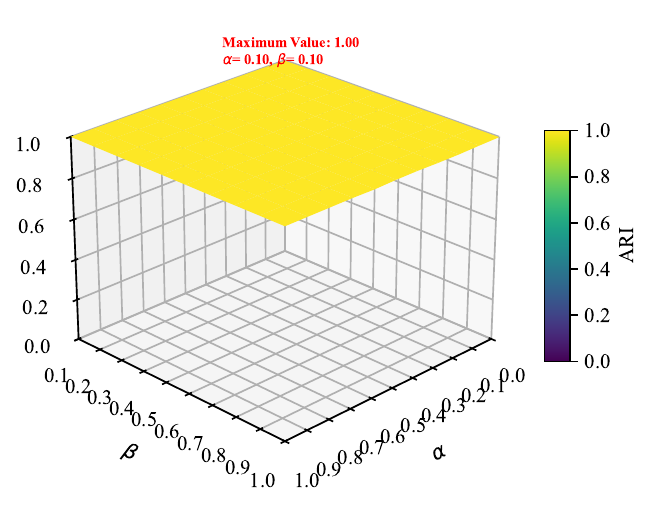}}
\subfigure[$\eta=0$, $p=0.05$, $q=0.0045$]{\includegraphics[width=0.32\textwidth]{alpha_beta_eta00_p50_q045.pdf}}
\subfigure[$\eta=0.12$, $p=0.05$, $q=0.0045$]{\includegraphics[width=0.32\textwidth]{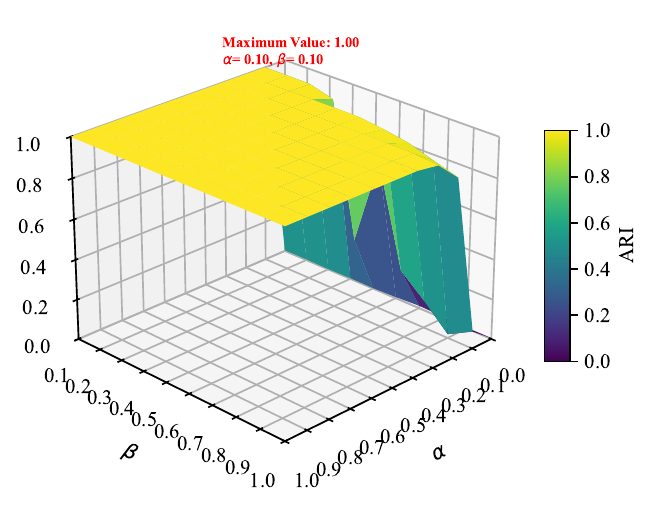}}
\subfigure[$\eta=0.24$, $p=0.05$, $q=0.0045$]{\includegraphics[width=0.32\textwidth]{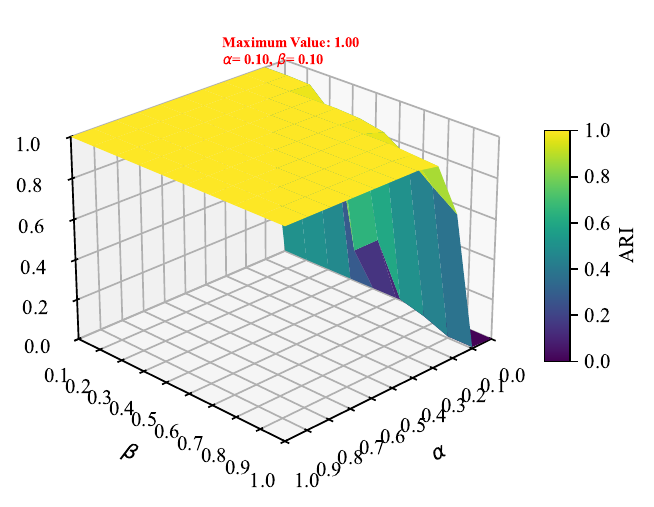}}
\caption{Investigating the effect of $\alpha$ and $\beta$: Model parameter sensitivity on different directed graph structures.}\label{fig:alpha_beta}
\end{figure*}

\noindent\textbf{Parameter $k$--Cluster Size. }
Figure~\ref{fig:k_pq} examines the effect of the number of clusters under three structural settings: $p=q$, $p\gg q$, and $q\gg p$. When $p=q$, the community structure is weak, and the performance of all methods decreases rapidly as $k$ increases. Most ARI values approach zero for large $k$, indicating that finer partitions are difficult to recover without clear structural separation. When $p\gg q$, \myproj, Bi-Sym, and DD-Sym remain nearly perfect across the tested values of $k$, while DI-SIM exhibits only a small fluctuation. In contrast, Herm and Skew perform poorly in this setting. Under $q\gg p$, almost all methods maintain high ARI, although slight degradation appears for some methods when $k$ becomes large. Overall, increasing the number of clusters amplifies clustering difficulty mainly when the underlying block structure is ambiguous. Clear homophilic or heterophilic patterns substantially reduce this sensitivity, and \myproj~shows strong robustness across different clustering granularities.

\noindent\textbf{Pattern Type for $F$ Matrix. }
The parameter $F \in [0,1]^{k \times k}$ dictates the edge direction probability matrix between clusters, satisfying $F_{i,j} + F_{j,i} = 1$. Figure \ref{fig:F_pattern} investigates the influence of this meta-graph structure on clustering robustness against the directional noise parameter $\eta$. We explicitly evaluate two distinct structural definitions:

(1) Cyclic Pattern: Directional biases are strictly confined to adjacent clusters forming a directed cycle (e.g., $F_{i, i+1} = 1-\eta$ and $F_{i+1, i} = \eta$), while all other non-adjacent cluster pairs maintain perfectly symmetric, unbiased connections ($F_{i,j} = 1/2$).

(2) Complete Pattern: Every distinct pair of clusters possesses a strict directional bias ($F_{i,j} \in \{\eta, 1-\eta\}$ for all $i \neq j$), forming a densely connected and highly complex asymmetric meta-graph.

To provide a comprehensive evaluation, we expand the analysis across three distinct structural scenarios: $p=q$ (left), $q > p$ (middle), and $p > q$ (right). The results reveal a stark contrast between the two topologies. Under the Cyclic Pattern (Figure \ref{fig:F_pattern}(a)), our proposed methods exhibit extraordinary tolerance to directional noise whenever explicit structural signals exist ($p \neq q$). Specifically, in the heterophilic setting ($q \gg p$), \myproj~degrades at a significantly slower rate than baselines. In the homophilic setting ($p \gg q$), our methods establish and maintain a commanding lead from the outset. Conversely, under the Complete Pattern (Figure \ref{fig:F_pattern}(b)), the densely connected and conflicting directional biases make the extraction of stable signals exceptionally difficult. This leads to severe performance oscillations across all algorithms, regardless of the relationship between $p$ and $q$. Ultimately, these findings highlight that our methods are exceptionally adept at capturing sparsely structured meta-graph signals (such as cyclic patterns) and offer outstanding resilience against structural noise.

\noindent\textbf{Parameter $\alpha$ and $\beta$–Spectral Balance. }
Figure~\ref{fig:alpha_beta} examines the sensitivity of \myproj~to
$\alpha$ and $\beta$, which control the symmetric connectivity term and the asymmetric directional term in the generalized Hermitian matrix,
respectively. The results show that their appropriate balance depends strongly on the underlying graph structure. When $p=q$, the graph provides little separation through edge density. At $\eta=0$, useful clustering is obtained only within a limited parameter region, and the overall ARI decreases rapidly as $\eta$ increases. This indicates that, without clear homophilic or heterophilic structure, the model relies heavily on informative edge directions and is therefore sensitive to directional perturbations. When $q\gg p$, a stronger contribution from the asymmetric component is generally beneficial at $\eta=0$. For $\eta=0.12$ and $0.24$, however, the ARI remains close to $1$ over almost the entire parameter space, suggesting that the pronounced inter-cluster connectivity makes the clustering result largely insensitive to the precise choice of $\alpha$ and $\beta$. In contrast, when $p\gg q$, high ARI is obtained over a broad region where the symmetric connectivity component is sufficiently emphasized. Performance drops sharply when the asymmetric term dominates excessively, showing that
within-cluster connectivity is the primary source of information in strongly homophilic graphs. Overall, no single parameter pair is optimal for all settings: $\beta$ is more important for direction-dominated structures, whereas $\alpha$ should receive greater weight when homophilic connectivity provides the main clustering signal.

\begin{table*}[t]
\caption{Clustering algorithms on the Cora dataset: a comparative analysis of modularity ($\mathcal{M}$), and F-measure ($\mathcal{F}$). The best-performing model is highlighted in \textbf{bold}, and the second-best is marked with an \underline{underline}.}\label{Table:Cora}
\vspace*{1mm}
\centering
\scalebox{0.85}[0.85]{\begin{tabular}{cccccccccccc}
\toprule
$\mathcal{M}$& \textbf{Trains (\%)}  & \textbf{0 (USL)}  & \textbf{10}  & \textbf{20}  & \textbf{30}  & \textbf{40}  & \textbf{50}  & \textbf{60}  & \textbf{70}  & \textbf{80}  & \textbf{90}  \\ \midrule
& \textbf{Level 2 (70)} & 0.4392 & 0.3759 & 0.3760 & 0.4318 & 0.4313 & 0.4651 & 0.5622 & 0.5964 & \underline{0.6482} & 0.7083  \\
\multirow{-2}{*}{\textbf{Bi-Sym}} & \textbf{Level 1 (7)} & 0.2624 & 0.2546 & 0.2233 & 0.2625 & 0.1476 & 0.2048 & 0.2752 & 0.3033 & 0.0907 & 0.0309  \\
& \textbf{Level 2 (70)} & \underline{0.4983} & 0.4089 & 0.3501 & 0.4219 & 0.4076 & 0.4667 & 0.5405 & 0.5944 & 0.6365 & 0.7078  \\
\multirow{-2}{*}{\textbf{DD-Sym}} & \textbf{Level 1 (7)} & 0.4789 & 0.3494 & 0.3652 & 0.3888 & 0.4120 & 0.3817 & 0.2040 & 0.4070 & 0.2990 & 0.1125  \\
& \textbf{Level 2 (70)} & \textbf{0.6091} & \textbf{0.4782} & \textbf{0.4076} & \textbf{0.4730} & \textbf{0.4436} & \underline{0.4679} & 0.5671 & 0.6081 & 0.6464 & 0.7089  \\
\multirow{-2}{*}{\textbf{DI-SIM}} & \textbf{Level 1 (7)} & 0.4456 & \underline{0.4293} & 0.3047 & 0.2351 & 0.2629 & 0.2622 & 0.2127 & 0.2855 & 0.2282 & 0.1297  \\
& \textbf{Level 2 (70)} & 0.4091 & 0.3492 & 0.3591 & 0.4286 & 0.4020 & 0.4671 & \underline{0.5674} & \underline{0.6101} & 0.6316 & \underline{0.7116}  \\
\multirow{-2}{*}{\textbf{Herm}} & \textbf{Level 1 (7)} & 0.1376 & 0.1400 & 0.1206 & 0.1575 & 0.1017 & 0.0698 & 0.1162 & 0.1109 & 0.0600 & 0.0680  \\
& \textbf{Level 2 (70)} & 0.4091 & 0.3492 & 0.3591 & 0.4286 & 0.4020 & 0.4671 & \underline{0.5674} & \underline{0.6101} & 0.6316 & \underline{0.7116}  \\
\multirow{-2}{*}{\textbf{Skew}} & \textbf{Level 1 (7)} & 0.1876 & 0.1485 & 0.1383 & 0.1810 & 0.0831 & 0.0685 & 0.0561 & 0.0814 & 0.0270 & 0.0646  \\
\midrule
& \textbf{Level 2 (70)} & 0.4565 & 0.3765 & 0.3756 & \underline{0.4429} & \underline{0.4370} & \textbf{0.4705} & \textbf{0.5716} & \textbf{0.6173} & \textbf{0.6492} & \textbf{0.7157}  \\
\multirow{-2}{*}{\textbf{\myproj}} & \textbf{Level 1 (7)} & 0.2509 & 0.2903 & \underline{0.3888} & 0.3576 & 0.2601 & 0.3819 & 0.3035 & 0.2071 & 0.0800 & 0.0544  \\
\toprule
$\mathcal{F}$& \textbf{Trains (\%)}  & \textbf{0 (USL)}  & \textbf{10}  & \textbf{20}  & \textbf{30}  & \textbf{40}  & \textbf{50}  & \textbf{60}  & \textbf{70}  & \textbf{80}  & \textbf{90}  \\ \midrule
& \textbf{Level 2 (70)} & 0.1211 & 0.1536 & 0.2474 & 0.3110 & 0.3929 & 0.4174 & 0.5801 & 0.5845 & 0.7847 & 0.8560  \\
\multirow{-2}{*}{\textbf{Bi-Sym}} & \textbf{Level 1 (7)} & 0.1556 & 0.2360 & 0.2885 & 0.3418 & 0.4360 & \textbf{0.5559} & 0.6190 & \underline{0.7354} & \underline{0.8266} & 0.9079  \\
& \textbf{Level 2 (70)} & 0.0915 & 0.0455 & 0.0834 & 0.1664 & 0.2456 & 0.4153 & 0.4589 & 0.5883 & 0.7739 & 0.8562  \\
\multirow{-2}{*}{\textbf{DD-Sym}} & \textbf{Level 1 (7)} & \textbf{0.4154} & \underline{0.2590} & 0.3098 & \underline{0.3851} & 0.4176 & 0.4993 & \textbf{0.6563} & \textbf{0.7614} & 0.8188 & \textbf{0.9152}  \\
& \textbf{Level 2 (70)} & \underline{0.2573} & 0.0317 & 0.1989 & 0.2470 & 0.3834 & 0.3438 & 0.5714 & 0.6687 & 0.7835 & 0.8790  \\
\multirow{-2}{*}{\textbf{DI-SIM}} & \textbf{Level 1 (7)} & 0.0959 & 0.1212 & 0.3144 & \textbf{0.3917} & \textbf{0.4736} & 0.4858 & \underline{0.6320} & 0.7100 & 0.8014 & 0.9071  \\
& \textbf{Level 2 (70)} & 0.1838 & 0.0254 & 0.2160 & 0.3293 & 0.2469 & 0.4139 & 0.5601 & 0.6830 & 0.7170 & 0.8822  \\
\multirow{-2}{*}{\textbf{Herm}} & \textbf{Level 1 (7)} & 0.1919 & 0.1904 & 0.2601 & 0.3753 & 0.4623 & 0.5239 & 0.6287 & 0.7239 & 0.8151 & 0.9077  \\
& \textbf{Level 2 (70)} & 0.1838 & 0.0254 & 0.2160 & 0.3293 & 0.2469 & 0.4139 & 0.5601 & 0.6830 & 0.7170 & 0.8822  \\
\multirow{-2}{*}{\textbf{Skew}} & \textbf{Level 1 (7)} & 0.1972 & 0.1723 & \underline{0.3473} & 0.3846 & \underline{0.4681} & \underline{0.5291} & 0.5944 & 0.7080 & \textbf{0.8590} & 0.9019  \\
\midrule
& \textbf{Level 2 (70)} & 0.1848 & 0.0245 & 0.0837 & 0.3194 & 0.2456 & 0.4176 & 0.5866 & 0.6853 & 0.7876 & 0.8874  \\
\multirow{-2}{*}{\textbf{\myproj}} & \textbf{Level 1 (7)} & 0.1592 & \textbf{0.3197} & \textbf{0.4129} & 0.3288 & \textbf{0.5420} & 0.5200 & \underline{0.6466} & 0.7096 & 0.8114 & \underline{0.9112} \\
\bottomrule
\end{tabular}}
\end{table*}

\begin{table*}[thpb!]
\caption{Clustering algorithms on the Squirrel dataset: a comparative analysis of modularity ($\mathcal{M}$), and F-measure ($\mathcal{F}$). The best-performing model is highlighted in \textbf{bold}, and the second-best is marked with an \underline{underline}.}\label{Table:Squirrel}
\vspace*{1mm}
\centering
\scalebox{0.85}[0.85]{\begin{tabular}{cccccccccccc}
\toprule
$\mathcal{M}$& \textbf{Trains (\%)}  & \textbf{0 (USL)}  & \textbf{10}  & \textbf{20}  & \textbf{30}  & \textbf{40}  & \textbf{50}  & \textbf{60}  & \textbf{70}  & \textbf{80}  & \textbf{90}  \\ \midrule
& \textbf{Level 2 (50)} & 0.1110 & 0.1043 & 0.0942 & 0.0934 & 0.0734 & 0.0730 & 0.0457 & 0.0426 & 0.0474 & 0.0518  \\
\multirow{-2}{*}{\textbf{Bi-Sym}} & \textbf{Level 1 (5)} & 0.1365 & 0.1379 & 0.1328 & \textbf{0.2097} & 0.1385 & \textbf{0.2942} & \underline{0.1899} & 0.1746 & \textbf{0.1866} & \textbf{0.2097}  \\
& \textbf{Level 2 (50)} & \underline{0.2289} & \underline{0.2137} & \underline{0.1929} & \underline{0.2062} & 0.1400 & 0.1280 & 0.1172 & 0.0756 & 0.0606 & 0.0792  \\
\multirow{-2}{*}{\textbf{DD-Sym}} & \textbf{Level 1 (5)} & \textbf{0.4736} & \textbf{0.2888} & \textbf{0.2644} & 0.1445 & \underline{0.2056} & 0.1856 & \textbf{0.1916} & \underline{0.1852} & 0.1686 & 0.0962  \\
& \textbf{Level 2 (50)} & 0.1743 & 0.1643 & 0.1409 & 0.1237 & \textbf{0.2747} & 0.0844 & 0.0753 & 0.0672 & 0.0511 & 0.0587  \\
\multirow{-2}{*}{\textbf{DI-SIM}} & \textbf{Level 1 (5)} & 0.2091 & 0.2107 & 0.1437 & 0.1579 & 0.1681 & \underline{0.1937} & 0.1546 & 0.1504 & 0.1160 & 0.0653  \\
& \textbf{Level 2 (50)} & 0.1103 & 0.0843 & 0.0955 & 0.0938 & 0.0792 & 0.0571 & 0.0539 & 0.0416 & 0.0400 & 0.0554  \\
\multirow{-2}{*}{\textbf{Herm}} & \textbf{Level 1 (5)} & 0.1756 & 0.1514 & 0.1778 & 0.1650 & 0.1404 & 0.1305 & 0.1247 & 0.1055 & 0.1607 & 0.1530  \\
& \textbf{Level 2 (50)} & 0.1103 & 0.0843 & 0.0955 & 0.0938 & 0.0792 & 0.0571 & 0.0539 & 0.0416 & 0.0400 & 0.0554  \\
\multirow{-2}{*}{\textbf{Skew}} & \textbf{Level 1 (5)} & 0.1604 & 0.1344 & 0.1907 & 0.1896 & 0.2049 & 0.0886 & 0.1031 & 0.0905 & \underline{0.1826} & 0.1342  \\ \midrule
& \textbf{Level 2 (50)} & 0.1201 & 0.1155 & 0.0975 & 0.0922 & 0.1260 & 0.0629 & 0.0487 & 0.0450 & 0.0507 & 0.0572  \\
\multirow{-2}{*}{\textbf{\myproj}} & \textbf{Level 1 (5)} & 0.1946 & 0.1856 & \underline{0.2210} & \textbf{0.2303} & \textbf{0.2758} & \underline{0.2610} & 0.1487 & \textbf{0.2074} & \textbf{0.2087} & \textbf{0.2453}  \\\toprule
$\mathcal{F}$& \textbf{Trains (\%)}  & \textbf{0 (USL)}  & \textbf{10}  & \textbf{20}  & \textbf{30}  & \textbf{40}  & \textbf{50}  & \textbf{60}  & \textbf{70}  & \textbf{80}  & \textbf{90}  \\ \midrule
& \textbf{Level 2 (50)} & 0.1412 & 0.1650 & 0.2223 & 0.3045 & 0.2721 & 0.5132 & 0.4807 & 0.6036 & 0.7883 & 0.8572  \\
\multirow{-2}{*}{\textbf{Bi-Sym}} & \textbf{Level 1 (5)} & 0.2284 & 0.2684 & \underline{0.3795} & 0.4330 & 0.5250 & \underline{0.5860} & \underline{0.6754} & 0.7455 & \underline{0.8337} & 0.9168  \\
& \textbf{Level 2 (50)} & 0.0028 & 0.0782 & 0.0963 & 0.1939 & 0.2572 & 0.3957 & 0.5504 & 0.6601 & 0.7330 & 0.8647  \\
\multirow{-2}{*}{\textbf{DD-Sym}} & \textbf{Level 1 (5)} & 0.2636 & 0.2829 & 0.3621 & 0.4204 & 0.4629 & 0.5794 & 0.6637 & \textbf{0.7494} & \textbf{0.8369} & \textbf{0.9208}  \\
& \textbf{Level 2 (50)} & 0.0023 & 0.2225 & 0.0857 & 0.1602 & 0.3963 & 0.3535 & 0.5935 & 0.5874 & 0.7202 & 0.8552  \\
\multirow{-2}{*}{\textbf{DI-SIM}} & \textbf{Level 1 (5)} & 0.2664 & \textbf{0.3221} & 0.3627 & \textbf{0.4492} & 0.5065 & 0.5697 & 0.6642 & 0.7447 & 0.8303 & \underline{0.9196}  \\
& \textbf{Level 2 (50)} & 0.1002 & 0.1841 & 0.0841 & 0.1488 & 0.4032 & 0.4830 & 0.5842 & 0.5970 & 0.7214 & 0.8966  \\
\multirow{-2}{*}{\textbf{Herm}} & \textbf{Level 1 (5)} & \textbf{0.3031} & \underline{0.3031} & 0.3696 & 0.3925 & \underline{0.5291} & \textbf{0.5999} & 0.6632 & \underline{0.7479} & 0.8314 & 0.9176  \\
& \textbf{Level 2 (50)} & 0.1002 & 0.1841 & 0.0841 & 0.1488 & 0.4032 & 0.4830 & 0.5842 & 0.5970 & 0.7214 & 0.8966  \\
\multirow{-2}{*}{\textbf{Skew}} & \textbf{Level 1 (5)} & 0.2526 & 0.3020 & \textbf{0.4047} & \underline{0.4347} & \textbf{0.5340} & 0.5842 & \textbf{0.6794} & 0.7437 & 0.8301 & 0.9190  \\
\midrule
& \textbf{Level 2 (50)} & 0.1250 & 0.2048 & 0.1107 & 0.3206 & 0.2559 & 0.4872 & 0.4605 & 0.5896 & 0.7933 & 0.8572  \\
\multirow{-2}{*}{\textbf{\myproj}} & \textbf{Level 1 (5)} & \underline{0.2946} & 0.2653 & 0.3667 & \underline{0.4471} & 0.4982 & \underline{0.5887} & 0.6591 & \underline{0.7479} & 0.8310 & \underline{0.9201}  \\
\bottomrule
\end{tabular}}
\end{table*}

\subsubsection{Results for Real-World Data}
Tables~\ref{Table:Cora}--\ref{Table:Texas} report the clustering results on Cora, Squirrel, Cornell, and Texas under different supervision ratios and hierarchical resolutions. Level~1 contains the same number of clusters as the ground-truth classes, whereas Level~2 provides a finer partition. We separately analyze Modularity ($\mathcal{M}$) and F-measure ($\mathcal{F}$), since they measure structural cohesion and class consistency, respectively.

\noindent\textbf{Structural Quality.}
The modularity results vary noticeably across datasets and hierarchical levels. On Cora, \myproj~is particularly effective at Level~2 under moderate-to-high supervision. It achieves the highest $\mathcal{M}$ from $50\%$ to $90\%$ labeled vertices, increasing from $0.4705$ to $0.7157$. This indicates that its finer partition captures structurally cohesive subcommunities when sufficient label constraints are available. Its Level~1 modularity is less competitive, suggesting that the coarse class partition does not always coincide with the densest graph structure. On Squirrel, the advantage of \myproj~is mainly observed at Level~1. It achieves the best modularity at the $30\%$, $40\%$, $70\%$, $80\%$, and $90\%$ settings, and remains competitive at several other ratios. In particular, its Level~1 modularity reaches $0.2453$ at $90\%$, whereas the Level~2 scores of all methods remain relatively low. This suggests that the coarse partition provides a more meaningful structural description of this strongly heterophilic network than the finer partition. The modularity results on Cornell and Texas are more irregular because these networks are small and their structural communities may not align closely with the class labels. On Cornell, \myproj~obtains the best Level~1 modularity at $10\%$, $30\%$, and $80\%$, while remaining competitive at several other settings. On Texas, it achieves the best result at Level~2 with $10\%$ supervision and at Level~1 with $80\%$ supervision. Although \myproj~does not dominate every setting on these two datasets, the results show that it can identify structurally cohesive partitions at selected resolutions and supervision levels.

\noindent\textbf{Class Consistency.}
Compared with modularity, the F-measure generally benefits more consistently from increasing supervision, especially at Level~1, where the number of clusters matches the number of classes. On Cora, \myproj~achieves the best Level~1 $\mathcal{F}$ at the $10\%$, $20\%$, and $40\%$ settings and the second-best result at $60\%$ and $90\%$. It also attains the best Level~2 score at $50\%$. These results demonstrate that \myproj~can effectively use limited labels to recover class-aligned partitions, particularly under low-to-moderate supervision. On Squirrel, the differences among the leading methods become relatively small as the supervision ratio increases. Nevertheless, \myproj~remains competitive across both levels and reaches a Level~1 F-measure of $0.9201$ at $90\%$, which is close to the best result of $0.9208$. Its strong results at several intermediate ratios further indicate that the hierarchical constraints remain effective despite the highly heterophilic structure of the network. On Cornell, \myproj~performs strongly in both unsupervised and supervised settings. It achieves the best F-measure at Level~2 under $0\%$ and $30\%$ supervision and at Level~1 under $70\%$ and $80\%$ supervision. At $90\%$, its Level~1 score of $0.9039$ is the second best. This shows that the method can capture both fine structural groups and coarse class-aligned clusters, depending on the available supervision. The clearest F-measure advantage is observed on Texas. At Level~1, \myproj~achieves the best result at $10\%$, $40\%$, $50\%$, $60\%$, $70\%$, and $90\%$ supervision. It also obtains the best Level~2 score at $70\%$. Its Level~1 F-measure reaches $0.9179$ at $90\%$, outperforming all baselines. These results confirm that the inherited label constraints are particularly effective for recovering the semantic class structure of Texas.

Overall, the two metrics reveal complementary properties of the proposed hierarchy. The modularity advantage is more dataset- and resolution-dependent, reflecting differences between structural communities and semantic classes. By contrast, \myproj~shows more consistent improvements in F-measure, especially at Level~1, demonstrating its effectiveness in incorporating limited supervision for class-consistent clustering.

\begin{table*}[t]
\caption{Clustering algorithms on the Cornell dataset: a comparative analysis of modularity ($\mathcal{M}$), and F-measure ($\mathcal{F}$). The best-performing model is highlighted in \textbf{bold}, and the second-best is marked with an \underline{underline}.}\label{Table:Cornell}
\vspace*{1mm}
\centering
\scalebox{0.85}[0.85]{\begin{tabular}{cccccccccccc}
\toprule
$\mathcal{M}$& \textbf{Trains (\%)}  & \textbf{0 (USL)}  & \textbf{10}  & \textbf{20}  & \textbf{30}  & \textbf{40}  & \textbf{50}  & \textbf{60}  & \textbf{70}  & \textbf{80}  & \textbf{90}  \\ \midrule
& \textbf{Level 2 (10)} & 0.1615 & 0.1126 & 0.1359 & 0.0933 & 0.0814 & 0.0835 & 0.1410 & 0.0724 & 0.1280 & 0.1811  \\
\multirow{-2}{*}{\textbf{Bi-Sym}} & \textbf{Level 1 (5)} & 0.0781 & 0.1328 & 0.0589 & 0.0954 & 0.1048 & 0.1523 & 0.0246 & \underline{0.1480} & 0.0187 & 0.1809  \\
& \textbf{Level 2 (10)} & \underline{0.2014} & 0.1592 & 0.1187 & 0.2003 & 0.1826 & 0.1282 & \underline{0.2048} & 0.0779 & 0.0896 & \textbf{0.2090}  \\
\multirow{-2}{*}{\textbf{DD-Sym}} & \textbf{Level 1 (5)} & 0.0597 & 0.2724 & 0.2941 & 0.0461 & \textbf{0.2301} & 0.1811 & \textbf{0.3907} & 0.0608 & 0.0027 & 0.1023  \\
& \textbf{Level 2 (10)} & \textbf{0.3187} & \textbf{0.2820} & \textbf{0.4102} & \underline{0.2320} & 0.1026 & 0.0637 & 0.1783 & 0.1409 & 0.1112 & 0.1924  \\
\multirow{-2}{*}{\textbf{DI-SIM}} & \textbf{Level 1 (5)} & 0.1017 & 0.1436 & \underline{0.3105} & 0.1153 & 0.1512 & \textbf{0.3072} & 0.0758 & 0.1284 & 0.0679 & 0.1857  \\
& \textbf{Level 2 (10)} & 0.1124 & 0.0833 & 0.0460 & 0.0482 & 0.0451 & 0.0636 & 0.1483 & 0.1128 & 0.1350 & 0.1931  \\
\multirow{-2}{*}{\textbf{Herm}} & \textbf{Level 1 (5)} & 0.1979 & 0.1442 & 0.1651 & 0.1059 & 0.0086 & \underline{0.1933} & 0.0780 & \textbf{0.1683} & 0.0395 & 0.1182  \\
& \textbf{Level 2 (10)} & 0.1124 & 0.0833 & 0.0460 & 0.0482 & 0.0451 & 0.0636 & 0.1483 & 0.1128 & \underline{0.1350} & 0.1931  \\
\multirow{-2}{*}{\textbf{Skew}} & \textbf{Level 1 (5)} & 0.1970 & 0.1866 & 0.0921 & 0.0614 & 0.0289 & 0.0944 & 0.0259 & 0.0815 & 0.0395 & 0.0356  \\
\midrule
& \textbf{Level 2 (10)} & 0.1139 & 0.0881 & 0.0573 & 0.0426 & 0.0613 & 0.0276 & 0.1369 & 0.0755 & 0.1305 & \underline{0.1951}  \\
\multirow{-2}{*}{\textbf{\myproj}} & \textbf{Level 1 (5)} & 0.0761 & \textbf{0.3215} & 0.0410 & \textbf{0.2642} & \underline{0.1939} & 0.1399 & 0.0603 & 0.0910 & \textbf{0.1759} & 0.1424  \\
\toprule
$\mathcal{F}$& \textbf{Trains (\%)}  & \textbf{0 (USL)}  & \textbf{10}  & \textbf{20}  & \textbf{30}  & \textbf{40}  & \textbf{50}  & \textbf{60}  & \textbf{70}  & \textbf{80}  & \textbf{90}  \\ \midrule
& \textbf{Level 2 (10)} & 0.0869 & 0.1268 & 0.1562 & 0.1810 & 0.4310 & 0.3964 & 0.5648 & 0.6135 & 0.7751 & 0.9032  \\
\multirow{-2}{*}{\textbf{Bi-Sym}} & \textbf{Level 1 (5)} & 0.1578 & \textbf{0.3965} & \textbf{0.6000} & 0.3334 & 0.4645 & 0.4982 & 0.5733 & \underline{0.7362} & 0.7797 & 0.8927  \\
& \textbf{Level 2 (10)} & 0.0952 & 0.1845 & \underline{0.4563} & 0.3146 & \textbf{0.5943} & 0.4519 & \underline{0.6455} & 0.7349 & 0.7531 & 0.8873  \\
\multirow{-2}{*}{\textbf{DD-Sym}} & \textbf{Level 1 (5)} & 0.2174 & 0.1822 & 0.3072 & 0.3154 & \underline{0.5200} & \textbf{0.6030} & \textbf{0.6756} & 0.6984 & 0.7929 & 0.8905  \\
& \textbf{Level 2 (10)} & 0.2255 & 0.2772 & 0.1020 & 0.3791 & 0.2689 & 0.4604 & 0.6230 & 0.6041 & 0.7315 & 0.8923  \\
\multirow{-2}{*}{\textbf{DI-SIM}} & \textbf{Level 1 (5)} & 0.2260 & 0.2911 & 0.1386 & 0.3768 & 0.3886 & \underline{0.5757} & 0.5816 & 0.7129 & 0.7619 & 0.8897  \\
& \textbf{Level 2 (10)} & 0.0970 & 0.2428 & 0.1040 & 0.3376 & 0.4176 & 0.3815 & 0.5650 & 0.6114 & 0.7730 & 0.8927  \\
\multirow{-2}{*}{\textbf{Herm}} & \textbf{Level 1 (5)} & 0.0737 & 0.2045 & 0.3102 & 0.2425 & 0.4879 & 0.4629 & 0.6180 & 0.7022 & 0.7936 & \textbf{0.9101}  \\
& \textbf{Level 2 (10)} & 0.0970 & 0.2428 & 0.1040 & 0.3376 & 0.4176 & 0.3815 & 0.5650 & 0.6114 & 0.7730 & 0.8927  \\
\multirow{-2}{*}{\textbf{Skew}} & \textbf{Level 1 (5)} & 0.0813 & \underline{0.3095} & 0.2753 & 0.2615 & 0.5009 & 0.4697 & 0.6349 & 0.6677 & \underline{0.8450} & 0.8991  \\
\midrule
& \textbf{Level 2 (10)} & \textbf{0.2440} & 0.1386 & 0.3448 & \textbf{0.4261} & 0.3839 & 0.4357 & 0.5635 & 0.5962 & 0.7933 & 0.8926  \\
\multirow{-2}{*}{\textbf{\myproj}} & \textbf{Level 1 (5)} & \underline{0.2293} & 0.2079 & 0.4524 & \underline{0.4130} & 0.4384 & 0.5050 & 0.6014 & \textbf{0.7638} & \textbf{0.8556} & \underline{0.9039}  \\
\bottomrule
\end{tabular}}
\end{table*}

\begin{table*}[htpb!]
\caption{Clustering algorithms on the Texas dataset: a comparative analysis of modularity ($\mathcal{M}$), and F-measure ($\mathcal{F}$). The best-performing model is highlighted in \textbf{bold}, and the second-best is marked with an \underline{underline}.}\label{Table:Texas}
\vspace*{1mm}
\centering
\scalebox{0.85}[0.85]{\begin{tabular}{cccccccccccc}
\toprule
$\mathcal{M}$& \textbf{Trains (\%)}  & \textbf{0 (USL)}  & \textbf{10}  & \textbf{20}  & \textbf{30}  & \textbf{40}  & \textbf{50}  & \textbf{60}  & \textbf{70}  & \textbf{80}  & \textbf{90}  \\ \midrule
& \textbf{Level 2 (10)} & 0.2244 & 0.1420 & 0.1308 & 0.1396 & 0.0791 & 0.1368 & 0.1125 & 0.0252 & 0.0166 & 0.0059  \\
\multirow{-2}{*}{\textbf{Bi-Sym}} & \textbf{Level 1 (5)} & 0.1530 & 0.1084 & 0.1091 & 0.1560 & 0.1151 & 0.0090 & 0.1340 & 0.0426 & 0.0828 & 0.1817  \\
& \textbf{Level 2 (10)} & \underline{0.3540} & 0.2166 & 0.1529 & 0.2291 & 0.0915 & \underline{0.1632} & 0.1231 & 0.0240 & 0.0086 & 0.0051  \\
\multirow{-2}{*}{\textbf{DD-Sym}} & \textbf{Level 1 (5)} & 0.2601 & \underline{0.2492} & \underline{0.2350} & \textbf{0.2553} & 0.0586 & \textbf{0.2800} & \textbf{0.3242} & \underline{0.1634} & 0.1351 & \underline{0.2755}  \\
& \textbf{Level 2 (10)} & \textbf{0.3876} & 0.2241 & \textbf{0.2727} & \underline{0.2515} & \textbf{0.2709} & 0.1580 & 0.1096 & 0.0043 & 0.0058 & 0.0017  \\
\multirow{-2}{*}{\textbf{DI-SIM}} & \textbf{Level 1 (5)} & 0.3251 & 0.2191 & 0.1324 & 0.0864 & \underline{0.1995} & 0.0180 & 0.1261 & 0.1324 & \underline{0.2356} & \textbf{0.3472}  \\
& \textbf{Level 2 (10)} & 0.1660 & 0.1336 & 0.0714 & 0.1655 & 0.0014 & 0.1378 & 0.1233 & 0.0199 & 0.0020 & 0.0071  \\
\multirow{-2}{*}{\textbf{Herm}} & \textbf{Level 1 (5)} & 0.1544 & 0.0852 & 0.2178 & 0.1955 & 0.1876 & 0.1024 & 0.0889 & 0.1577 & 0.1123 & 0.1791  \\
& \textbf{Level 2 (10)} & 0.1660 & 0.1336 & 0.0714 & 0.1655 & 0.0014 & 0.1378 & 0.1233 & 0.0199 & 0.0020 & 0.0071  \\
\multirow{-2}{*}{\textbf{Skew}} & \textbf{Level 1 (5)} & 0.1395 & 0.1052 & 0.1806 & 0.1129 & 0.1807 & 0.0175 & 0.0556 & \textbf{0.1763} & 0.0988 & 0.0089  \\
\midrule
& \textbf{Level 2 (10)} & 0.2578 & \textbf{0.2551} & 0.2076 & 0.1802 & 0.0284 & 0.1208 & 0.1055 & 0.0190 & 0.0155 & 0.0025  \\
\multirow{-2}{*}{\textbf{\myproj}} & \textbf{Level 1 (5)} & 0.1915 & 0.2154 & 0.0454 & 0.0556 & 0.0885 & 0.0719 & \underline{0.1646} & 0.1133 & \textbf{0.3336} & 0.1114  \\
\toprule
$\mathcal{F}$& \textbf{Trains (\%)}  & \textbf{0 (USL)}  & \textbf{10}  & \textbf{20}  & \textbf{30}  & \textbf{40}  & \textbf{50}  & \textbf{60}  & \textbf{70}  & \textbf{80}  & \textbf{90}  \\ \midrule
& \textbf{Level 2 (10)} & 0.1014 & 0.0298 & 0.1724 & 0.1617 & \underline{0.5147} & 0.4158 & 0.6130 & 0.6490 & 0.7542 & 0.8647  \\
\multirow{-2}{*}{\textbf{Bi-Sym}} & \textbf{Level 1 (5)} & \underline{0.2125} & 0.1344 & 0.3098 & 0.3900 & 0.3970 & 0.4951 & \underline{0.6795} & 0.6754 & 0.7830 & 0.8803  \\
& \textbf{Level 2 (10)} & 0.2015 & 0.3793 & 0.1310 & 0.1940 & 0.4068 & 0.4572 & 0.5531 & 0.6294 & 0.7733 & 0.8738  \\
\multirow{-2}{*}{\textbf{DD-Sym}} & \textbf{Level 1 (5)} & 0.1256 & 0.3345 & 0.1893 & 0.3889 & 0.4156 & \underline{0.5595} & 0.6625 & 0.7194 & 0.8194 & \underline{0.8991}  \\
& \textbf{Level 2 (10)} & \textbf{0.2398} & 0.0466 & \textbf{0.3560} & \textbf{0.5918} & 0.4634 & 0.5470 & 0.6277 & 0.6089 & 0.7315 & 0.8779  \\
\multirow{-2}{*}{\textbf{DI-SIM}} & \textbf{Level 1 (5)} & 0.1463 & 0.1822 & 0.2526 & 0.4041 & 0.3953 & 0.4580 & 0.5868 & 0.6320 & 0.7609 & 0.8897  \\
& \textbf{Level 2 (10)} & 0.0161 & 0.1508 & 0.1384 & 0.3140 & 0.4368 & 0.4704 & 0.5100 & 0.6063 & 0.7312 & 0.8673  \\
\multirow{-2}{*}{\textbf{Herm}} & \textbf{Level 1 (5)} & 0.1810 & \underline{0.4223} & 0.1118 & 0.4043 & 0.3571 & 0.4878 & 0.5548 & 0.6453 & \underline{0.8339} & 0.8905  \\
& \textbf{Level 2 (10)} & 0.0161 & 0.1508 & 0.1384 & 0.3140 & 0.4368 & 0.4704 & 0.5100 & 0.6063 & 0.7312 & 0.8673  \\
\multirow{-2}{*}{\textbf{Skew}} & \textbf{Level 1 (5)} & 0.1630 & 0.2985 & 0.1008 & \underline{0.4210} & 0.3554 & 0.4827 & 0.5575 & 0.6231 & \textbf{0.8499} & 0.8923  \\
\midrule
& \textbf{Level 2 (10)} & 0.0290 & 0.0308 & \underline{0.3290} & 0.1755 & 0.4303 & 0.4572 & 0.6234 & \textbf{0.7722} & 0.7500 & 0.8621  \\
\multirow{-2}{*}{\textbf{\myproj}} & \textbf{Level 1 (5)} & 0.0856 & \textbf{0.6265} & 0.1936 & 0.3348 & \textbf{0.7460} & \textbf{0.5920} & \textbf{0.6868} & \underline{0.7416} & 0.8055 & \textbf{0.9179}  \\
\bottomrule
\end{tabular}}
\end{table*}

\end{document}